\def\T{{ \mathrm{\scriptscriptstyle T} }}
\newtheorem{theorem}{Theorem}
\newtheorem{corollary}{Corollary}
\newtheorem{lemma}{Lemma}
\newtheorem{proposition}{Proposition}
\newtheorem{assumption}{Assumption}
\theoremstyle{definition}
\newtheorem{definition}{Definition}
\newcommand{\Step}[2]{
	
	\textit{Step}  S#1. \hspace{0.1em} #2 \newline}
\newcommand{\Substep}[3]{
	
	\textit{Substep}  S#1.#2. \hspace{0.1em} #3 \newline}
\title{Generalized infinite factorization models}
\author{
	Lorenzo Schiavon\\
	Department of Statistical Sciences, University of Padova,\\
	Via Cesare Battisti 241, 35121 Padova, Italy,\\
	\texttt{lorenzo.schiavon@phd.unipd.it}
	\and
	Antonio Canale\\
	Department of Statistical Sciences, University of Padova,\\
	Via Cesare Battisti 241, 35121 Padova, Italy,\\
	\texttt{canale@stat.unipd.it}
	\and
	and David B. Dunson\\
	Department of Statistical Science, Duke University,\\
	Durham, North Carolina 27708, U.S.A.\\
	\texttt{dunson@duke.edu}
}
\begin{document}

\maketitle

\begin{abstract}
Factorization models express a statistical object of interest in terms of a collection of simpler objects.  For example, a matrix or tensor can be expressed as a sum of rank-one components. However, in practice, it can be challenging to infer the relative impact of the different components as well as the number of components.  A popular idea is to include infinitely many components having impact decreasing with the component index.  This article is motivated by two limitations of existing methods: (1) lack of careful consideration of the within component sparsity structure; and (2) no accommodation for grouped variables and other non-exchangeable structures. We propose a general class of infinite factorization models that address these limitations.  
Theoretical support is provided, practical gains are shown in simulation studies, and an ecology application focusing on modelling bird species occurrence is discussed.
\end{abstract}

{\small Adaptive Gibbs sampling; Bird species; Ecology; Factor analysis; High-dimensional data; Increasing shrinkage; Structured shrinkage.}

\section{Introduction}
\label{sec:intro}
Factorization models are used routinely to express matrices, tensors or other statistical objects based on simple components.  The likelihood for data $y$ under a general class of factorization models can be expressed as $L(y;\Lambda,\Psi, \Sigma)$, with $\Lambda = \{ \Lambda_h, h=1,\ldots,k\}$ a $p \times k$ matrix, $\Lambda_h = (\lambda_{1h},\ldots,\lambda_{ph})^\T$ 
the $h$th column vector of $\Lambda$, $\Psi$ and $\Sigma$ additional parameters, and $k$ a positive integer.  This class includes Gaussian linear factor models \citep{roweis1999}, exponential family factor models \citep{jun2013}, Gaussian copula factor models \citep{murray2013}, latent factor linear mixed models \citep{an2013}, probabilistic matrix factorization \citep{mnih2008},  underlying Gaussian factor models for mixed scale data \citep{reich2010}, and functional data factor models \citep{montagna2012}.  A fundamental problem is how to choose weights for the components and the number of components $k$.  This article proposes a general class of Bayesian methods to address this problem.

Although there is a rich literature, selection of $k$ is far from a solved problem.  
In unsupervised settings, it is common to fit the model for different choices of $k$ and then choose the value with the best goodness-of-fit criteria.  For likelihood models, the Bayesian information criteria is particularly popular.  It is also common to use an informal elbow rule, selecting the smallest $k$ such that the criteria improves only a small amount for $k+1$.  In specific contexts, formal model selection methods have been developed.  For example, 
taking a Bayesian approach, one can choose a prior for $k$ and attempt to approximate the posterior distribution of $k$ using Markov chain Monte Carlo; see \cite{lopes2004} for linear factor models, 
\cite{miller2018} for mixture models and 
\cite{yang2018} for matrix factorization.  Although such methods are conceptually appealing, computation can be prohibitive outside of specialized settings.

Due to these challenges it has become popular to rely on over-fitted factorization models, which include more than enough components with shrinkage priors adaptively removing unnecessary ones by shrinking their coefficients close to zero. Such approaches were proposed by \citet{rousseau2011asymptotic} for mixture models and 
\citet{Bhattacharya2011} for Gaussian linear factor models.  The latter approach specifically assumes an increasing shrinkage prior on the columns of the factor loadings matrix $\Lambda$.  \citet{Legramanti2020} recently modified this approach using a spike and slab structure \citep{mitchell88} that increases the mass on the spike for later columns.  

Although over-fitted factorizations are widely used, there are two key gaps in the literature.  The first is a careful development of the shrinkage properties of increasing shrinkage priors \citep{Durante2017}.  Outside of the factorization context and mostly motivated by high-dimensional regression, there is a rich literature recommending specific desirable properties for shrinkage priors.  These include high concentration at zero to favor shrinkage of small coefficients and heavy tails to avoid over shrinking large coefficients.  Motivated by this thinking, popular shrinkage priors have been developed including the Dirichlet-Laplace \citep{Bhattacharya2015} and horseshoe \citep{Carvalho2010}. Current increasing shrinkage priors, such as \citet{Bhattacharya2011}, were not designed to have the desirable shrinkage properties of these priors.  For this reason, ad hoc truncation and use of horseshoe/Dirichlet-Laplace can outperform increasing shrinkage priors in some contexts; for example, this was the case in \citet{ferrari2020}.

A second gap in the literature on over-fitted factorization priors is the lack of structured shrinkage.  The focus has been on priors for $\Lambda$ 
that are exchangeable within columns, with the level of shrinkage increasing with the column index. However, it is common in practice to have \textit{meta covariates} encoding features of the rows of $\Lambda$.  For example, the rows may correspond to different genes in genomic applications or species in ecology.  There is a rich literature on incorporating gene ontology  in statistical analyses of genomic data; refer, for example to \citet{thomas2009}.  In ecology it is common to include species traits in species distribution models \citep{ovaskainen2020}.  Beyond the Bayesian literature, 
it is common to include structured penalties, with the grouped Lasso \citep{yuan2006} a notable example.

Motivated by these deficiencies of current factorizations priors, this article proposes a broad class of generalized infinite factorization priors, along with corresponding theory and algorithms for routine Bayesian implementation.

\section{Generalized infinite factor models}

\subsection{Model specification}
\label{subsec:gif-model}
Suppose that an $n \times p$ data matrix $y$ is available. In our motivating application, $y_{ij}$ is a binary indicator of occurrence of bird species $j$ ($j=1,\ldots,p$) in sample $i$ ($i=1,\ldots,n$).
Consider the following general class of models, \label{rev:structure}
\begin{align}
\label{eq:gif}
&y_{ij} = t_j(z_{ij}), \quad z_{i}= \Lambda \eta_{i} + \epsilon_{i}, \quad \epsilon_{i} \sim f_{\epsilon},
\end{align}
with $\Lambda$ a $p \times k$ loadings matrix, $\eta_i$ a $k$ dimensional factor with diagonal covariance matrix $\Psi=\text{diag}(\psi_{11},\ldots,\psi_{kk})$, $\epsilon_i$ a $p$-dimensional error term independent of $\eta_i$, and
\label{rev:t-def}the function $t_j: \Re \to \Re$, for $j=1\dots, p$.
We refer to this class as generalized factorization models. Class \eqref{eq:gif} includes most of the cases mentioned in Section~\ref{sec:intro}.
When $\epsilon_i$ and $\eta_i$ are Gaussian random vectors and $t_j$ is the identity function, model \eqref{eq:gif} is a Gaussian linear factor model. 
With similar assumptions for $\epsilon_i$ and $\eta_i$, and assuming $t_j =F_j^{-1}(\Phi(z_{ij}))$, with $\Phi(z_{ij})$ the Gaussian cumulative distribution function, model \eqref{eq:gif} is a Gaussian copula factor model \citep{murray2013}. 
Exponential family factor models \citep{jun2013}, probabilistic matrix factorization \citep{mnih2008} and underlying Gaussian models for mixed scale data \citep{reich2010} can be obtained by appropriately defining the elements in \eqref{eq:gif}, whereas multivariate response regression models belong to this framework when $\eta_{i}$ is known. 

The matrix $\Omega =\text{var}(z_{i})$ can be expressed as 
$\Omega = \Lambda \Psi \Lambda^T + \Sigma,$
where $\Sigma=\text{var}(\epsilon_i)$. 
Following common practice in Bayesian factor analysis \citep{Bhattacharya2011}, we avoid imposing identifiability constraints on $\Lambda$ and assume $\Psi$ is pre-specified.  Our focus is on a new class of generalized infinite factor models induced through a novel class of priors for $\Lambda$ that allows infinitely many factors, $k=\infty$.  In particular, we let 
\begin{equation}
\label{eq:model}
\lambda_{jh}\mid \theta_{jh} \sim N(0, \theta_{jh}), \quad
\theta_{jh}= \tau_{0} \,\gamma_h \phi_{jh}, \quad 
\tau_{0} \sim f_{\tau_0}, \quad
\gamma_h\sim f_{\gamma_h}, \quad 
\phi_{jh}\sim f_{\phi_{j}},
\end{equation}
where  $f_{\tau_0}$, $f_{\gamma_h}$, and $f_{\phi_{j}}$ are supported on $[0,\infty)$ with positive probability mass on $(0,\infty)$.  
The local $\phi_{jh}$, column-specific $\gamma_h$, and global $\tau_0$ scales are all independent \textit{a priori}.
We let $N(0,0)$ denote a degenerate distribution with all its mass at zero.
Expression \eqref{eq:model} induces a class of scale-mixture of Gaussian shrinkage priors \citep{Polson2010} for the loadings. 
Although we allow infinitely many columns in $\Lambda$, 
\eqref{eq:model} induces a prior for 
$\Omega$ supported on the set of $p \times p$ positive semi-definite matrices under mild conditions reported in Proposition S1 in the Supplementary Material.

Differently from most of the existing literature on shrinkage priors, we want to define a non-exchangeable structure that includes \textit{meta covariates} $x$ informing the sparsity structure of $\Lambda$.  In our context, meta covariates provide information to distinguish the $p$ different variables as opposed to traditional covariates that serve to distinguish the $n$ subjects. Letting 
$x$ denote a $p \times q$ matrix of such meta covariates, we choose $f_{\phi_j}$ not depending on the index $h$ and such that 
\begin{equation}
\label{eq:phi}
E(\phi_{jh} \mid \beta_h)=g(x_j^T \beta_h), \quad \beta_h=(\beta_{1h},\ldots, \beta_{qh})^\T, \quad	\beta_{mh} \sim f_{\beta} \quad (m=1,\ldots,q)
\end{equation}
where $g: \Re \to {  \cal A} \subset \Re_+$ is a \label{rev:g-def}known smooth one-to-one differentiable link function,
$x_j = (x_{j1},\ldots,x_{jq})^T$ denotes the $j$th row vector of $x$, and $\beta_h$ are coefficients controlling the impact of the meta covariates on shrinkage of the factor loadings in the $h$th column of $\Lambda$.

To illustrate the usefulness of \eqref{eq:phi}, consider the previously introduced ecological study and suppose \label{rev:group-ex} $x_j = \{1,\mathbbm{1}{(\kappa_j=2)},\ldots,\mathbbm{1}{(\kappa_j=q)}\}^T$, where $\kappa_j \in \{1,\ldots,q\}$ denotes the phylogenetic order of species $j$.  
Species of the same order may tend to have similarities that can be expressed in terms of a shared pattern of high or low loadings on the same latent factors.
\begin{figure}[t]
	\centering
	\includegraphics[width=.5\textwidth]{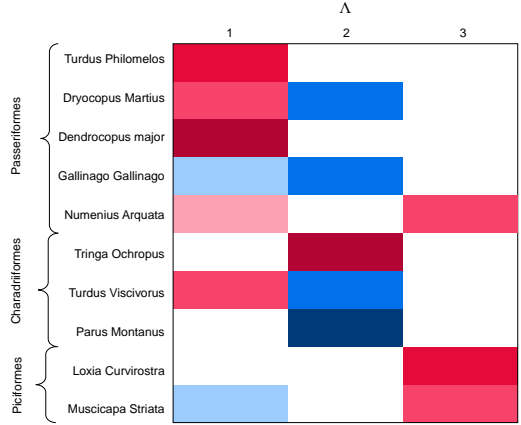} 
	\caption{Illustrative loadings matrix of an ecology application, where the rows refer to ten bird species belonging to three phylogenetic orders. White cells represent the elements of $\Lambda$ equal to zero, while blue and red cells represent negative and positive values, respectively.}
	\label{fig:toyex}
\end{figure}
To illustrate this situation, we simulate a loadings matrix, displayed in Fig.~\ref{fig:toyex}, \label{rev:fig1} sampling from the prior introduced in Section 3 where $\text{pr}(\lambda_{jh}=0)>\text{pr}(\phi_{jh}=0)>0$.
The loadings within each column are penalized basing on the group structure identified by the $q=3$ phylogenetic orders (Passeriformes, Charadriiformes, and Piciformes) of the $p=10$ birds species considered. 
Our proposed prior allows for the possibility of such structure while not imposing it.
In the bird ecology application, $x$ can be defined to include not just phylogenetic placement of each bird species but also species traits, such as size or diet \citep{tikhonov2020}. Related meta covariates are widely available, both in other ecology applications \citep{miller2019} and in other fields such as genomics \citep{thomas2009}. 

\subsection{Properties}
\label{subsec:gif-prop}

In this section we present some properties motivating the shrinkage process in \eqref{eq:model} and provide insight into prior elicitation.  It is important to relate the choice of hyperparameters to the signal-to-noise ratio, expressed as the proportion of variance explained by the factors. \label{rev:prior-prop} Section S2.4 of the Supplementary Material provides a study of the posterior distribution of the proportion of variance explained; the posterior tends to be robust to hyperparameter choice.  Below we study key properties of our prior, including 
an increasing shrinkage property, the ability of the induced marginal prior to accommodate both sparse and large signals, and control of the multiplicity problem in sparse settings.  Proofs are included in the Appendix and in Section S1 of Supplementary Material.
\label{rev:prior-elicitation}This theory illuminates the role of hyperparameters; specific recommendations of hyperparameter choice in practice are illustrated under the model settings of Section \ref{subsec:sis-model}.

To formalize the increasing shrinkage property, we introduce the following  definition.
\begin{definition}
	Letting $\Pi_\Lambda$ denote a shrinkage prior\label{rev:shrinkage} on $\Lambda$,
	$\Pi_\Lambda$ is a weakly increasing shrinkage prior if 
	$\text{\normalfont var}(\lambda_{j (h-1)})>
	\text{\normalfont var}(\lambda_{j h})$ for $j$ in $1,\ldots, p$ and $h=2,\ldots, \infty$. $\Pi_\Lambda$ is a strongly increasing shrinkage prior if 
	$\text{\normalfont var}(\lambda_{s (h-1)}) >
	\text{\normalfont var}(\lambda_{j h})$, for $j,s$ in $\{1,\ldots, p\}$ and $h=2,\ldots, \infty$. 
	\label{def:shrink}
\end{definition}
\vskip -12pt

Weakly increasing shrinkage corresponds to the prior variance increasing across columns within each row of $\Lambda$, while strongly increasing shrinkage implies that the prior variance of any loading element is larger than all elements with a higher column index.  
In the following Theorem, we show that the process in \eqref{eq:model} induces weakly increasing shrinkage under a simple sufficient condition.  

\begin{theorem}
	\label{th:var-dec}
	Expression \eqref{eq:model} is a weakly increasing shrinkage prior under Definition~\ref{def:shrink} if 
	$\mbox{E}(\gamma_h)>\mbox{E}(\gamma_{h+1})$ for any $h$.
\end{theorem}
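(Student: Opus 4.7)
The plan is to reduce the claim to a direct computation of $\text{var}(\lambda_{jh})$ via iterated expectations, and then exploit the independence of the three scale components together with the fact that $f_{\phi_j}$ is specified not to depend on $h$.

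First I would apply the tower property: since $\lambda_{jh}\mid \theta_{jh}\sim N(0,\theta_{jh})$ has mean zero, we have $E(\lambda_{jh})=0$ and
\begin{equation*}
\text{var}(\lambda_{jh}) \;=\; E\{\text{var}(\lambda_{jh}\mid \theta_{jh})\} \;=\; E(\theta_{jh}) \;=\; E(\tau_0\,\gamma_h\,\phi_{jh}).
\end{equation*}
Next, the mutual independence of $\tau_0$, $\gamma_h$, and $\phi_{jh}$ asserted immediately after \eqref{eq:model} lets me factor this into $E(\tau_0)\,E(\gamma_h)\,E(\phi_{jh})$. Because the prior $f_{\phi_j}$ for $\phi_{jh}$ is specified to be common across $h$ (as noted just before \eqref{eq:phi}), the factor $E(\phi_{jh})$ does not vary with $h$, and I denote it $m_{\phi_j}$. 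This yields the clean identity $\text{var}(\lambda_{jh}) = E(\tau_0)\,m_{\phi_j}\,E(\gamma_h)$.

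At this point the conclusion follows almost mechanically: the hypothesis $E(\gamma_{h-1})>E(\gamma_h)$ combined with the positivity of $E(\tau_0)$ and $m_{\phi_j}$ (guaranteed by the standing assumption that $f_{\tau_0}$ and $f_{\phi_j}$ are supported on $[0,\infty)$ with positive mass on $(0,\infty)$, so their means are strictly positive whenever they are finite) gives
\begin{equation*}
\text{var}(\lambda_{j(h-1)}) - \text{var}(\lambda_{jh}) \;=\; E(\tau_0)\,m_{\phi_j}\bigl\{E(\gamma_{h-1}) - E(\gamma_h)\bigr\} \;>\; 0,
\end{equation*}
which is precisely the weakly increasing shrinkage condition of Definition~\ref{def:shrink}.

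There is no real obstacle beyond bookkeeping. The only delicate issue is ensuring the relevant expectations are finite, so that the factorization is meaningful and the sign argument is valid; this can be handled either by invoking a tacit moment assumption on $f_{\tau_0}$ and $f_{\phi_j}$ (made explicit in the theorem's hypothesis that variances are compared), or by noting that if any of these expectations is infinite, the variance is infinite and the statement is vacuous or trivial depending on the convention adopted. I would state this caveat briefly and then conclude.
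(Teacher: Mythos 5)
Your proposal is correct and follows essentially the same route as the paper's proof: compute $\text{var}(\lambda_{jh})=E(\theta_{jh})=E(\tau_0)E(\gamma_h)E(\phi_{jh})$ via iterated expectations and independence, note that $E(\phi_{jh})$ is constant in $h$, and conclude from the assumed monotonicity of $E(\gamma_h)$. Your added remark on finiteness of the moments is a sensible caveat the paper leaves implicit, but the argument is otherwise identical.
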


Increasing shrinkage priors favor a decreasing contribution of higher indexed columns of $\Lambda$ to the covariance $\Omega$. In addition to inducing a flexible shrinkage structure that allows different factors to have a different sparsity structure in their loadings, this allows one to accurately approximate the
likelihood $L(y;\Lambda,\Psi, \Sigma)$
by $L(y;\Lambda_{H},\Psi_{H},\Sigma)$,
with $\Lambda_H$ containing the first $H$ columns of the infinite matrix $\Lambda$ and $\Psi_H$ the first $H$ rows and columns of $\Psi$.
To measure the induced truncation error of $\Omega_H=\Lambda_H \Psi_H \Lambda_H^\T  +\Sigma$, we use the trace of $\Omega$.
The trace is justified by the fact that the maximum error occurring in an element of $\Omega$ due to truncation always lies along the diagonal and \label{rev:nuclear-norm} by the relation between difference of traces and the nuclear norm, routinely used to approximate low rank minimization problems \citep{liu2010}. 
The following Proposition provides conditions on prior \eqref{eq:model} so that the under-estimation of $\Omega$ that occurs by truncating decreases exponentially fast as $H$ increases.
\begin{proposition}
	\label{pr:upp-bound}
	Let $E(\tau_0)$ and $E(\phi_{jh})$ be finite for $j=1, \ldots, p$ and $h=1,\ldots, \infty$ and $E(\gamma_h) = a b^{h-1}$ with $a>0$ and $b \in (0,1)$ for all  $h=1,\ldots, \infty$. Let $c>0$ be a sufficiently large number such that $c \geq \max_{h=1, \ldots, \infty} \psi_{hh}$. If 
	$$
	m_{\Omega} = \min_{j=1,\ldots,p} \left[E(\sigma_j^{-2}), \, E\left\{\left(\sum_{h=1}^{\infty} \psi_{hh} \lambda_{jh}^2\right)^{-1}\right\}\right]< \infty,
	$$
	then for any $T \in (0,1)$, 
	\label{rev:OmegaH-1}
	\begin{equation*}
	\text{\normalfont pr}\left\{\frac{ \text{\normalfont tr}(\Omega_H)}{\text{\normalfont tr}({\Omega})}\leq T\right\}\leq \bigg(\frac{1}{1-T}\bigg) \,  ac\frac{b^H}{1-b}\,  m_{\Omega}\, E(\tau_0)\, \sum_{j=1}^{p}  E(\phi_{j 1} ) .
	\end{equation*}
	\label{theo:1}
\end{proposition}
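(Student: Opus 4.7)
The plan is to reformulate the event in terms of the truncation residual $B = \text{tr}(\Omega) - \text{tr}(\Omega_H) = \sum_{j=1}^p \sum_{h=H+1}^\infty \psi_{hh}\lambda_{jh}^2$, apply Markov's inequality, decouple numerator and denominator via a correlation inequality, and finally compute $E(B)$ from the prior hyperparameters.

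Since $\text{tr}(\Omega) > 0$ almost surely, the event $\{\text{tr}(\Omega_H)/\text{tr}(\Omega) \leq T\}$ coincides with $\{B/\text{tr}(\Omega) \geq 1 - T\}$, and Markov's inequality gives
\[
\text{pr}\left\{\frac{B}{\text{tr}(\Omega)} \geq 1 - T\right\} \leq \frac{1}{1-T}\, E\left\{\frac{B}{\text{tr}(\Omega)}\right\}.
\]
The most delicate step is to establish $E\{B/\text{tr}(\Omega)\} \leq E(B)\, E\{1/\text{tr}(\Omega)\}$. Conditional on $(\tau_0, \gamma_1, \gamma_2, \ldots)$ the prior makes $\{\lambda_{jh}^2\}$ and $\{\sigma_j^2\}$ mutually independent; $B$ is coordinate-wise nondecreasing in the tail $\lambda_{jh}^2$'s with $h > H$, whereas $1/\text{tr}(\Omega)$ is coordinate-wise nonincreasing in every $\lambda_{jh}^2$ and every $\sigma_j^2$. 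The FKG/Chebyshev correlation inequality therefore yields $E\{B/\text{tr}(\Omega)\mid \tau_0, \gamma\} \leq E(B \mid \tau_0, \gamma)\, E\{1/\text{tr}(\Omega) \mid \tau_0, \gamma\}$; a second application to the independent marginals of $\tau_0$ and the $\gamma_h$'s — using that $E(B \mid \tau_0, \gamma)$ is nondecreasing and $E\{1/\text{tr}(\Omega) \mid \tau_0, \gamma\}$ nonincreasing in these scales — removes the conditioning. The bound $E\{1/\text{tr}(\Omega)\} \leq m_\Omega$ then follows from $\text{tr}(\Omega) \geq \sigma_j^2$ and $\text{tr}(\Omega) \geq \sum_{h=1}^\infty \psi_{hh}\lambda_{jh}^2$ for every $j$, which give $E\{1/\text{tr}(\Omega)\} \leq \min_j E(\sigma_j^{-2})$ and $E\{1/\text{tr}(\Omega)\} \leq \min_j E\{(\sum_{h=1}^\infty \psi_{hh}\lambda_{jh}^2)^{-1}\}$, whose minimum is $m_\Omega$.

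It remains to compute $E(B)$. By prior independence, $E(\lambda_{jh}^2) = E(\tau_0)\,E(\gamma_h)\,E(\phi_{jh})$; combined with $E(\gamma_h) = ab^{h-1}$, $\psi_{hh} \leq c$, and the fact that $f_{\phi_j}$ does not depend on $h$ so that $E(\phi_{jh}) = E(\phi_{j1})$,
\[
E(B) \leq ac\, E(\tau_0) \sum_{j=1}^p E(\phi_{j1}) \sum_{h=H+1}^\infty b^{h-1} = ac\, \frac{b^H}{1-b}\, E(\tau_0) \sum_{j=1}^p E(\phi_{j1}),
\]
and chaining the Markov step, the correlation bound and this computation gives the stated inequality. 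The main obstacle is the correlation-inequality step: because $\tau_0$ and the $\gamma_h$'s are shared across the entries of $\Lambda$, the prior is not a product measure in the $\lambda_{jh}^2$'s and $\sigma_j^2$'s at the top level, so FKG must be applied in two stages, and the monotonicity of the conditional expectations in $(\tau_0,\gamma)$ must be verified carefully before the second application.
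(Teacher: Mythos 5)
Your proof follows essentially the same route as the paper's: reduce the event to the truncation residual, apply Markov's inequality, decouple numerator and reciprocal denominator, bound $E\{1/\text{tr}(\Omega)\}$ by $m_\Omega$, and compute the expected residual from $E(\lambda_{jh}^2)=E(\tau_0)E(\gamma_h)E(\phi_{j1})$ and the geometric decay of $E(\gamma_h)$. The only difference is that the paper simply asserts $\text{cov}\{u,1/\text{tr}(\Omega)\}\leq 0$ from the non-negative correlation of $u$ and $\text{tr}(\Omega)$, whereas you justify the decoupling via a staged FKG argument, which is the more careful treatment of the one genuinely delicate step.
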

\vskip -12pt 

The above increasing shrinkage properties can be satisfied by naive priors that over-shrink the elements of $\Lambda$.  It is important to avoid such over-shrinkage and allow not only many elements that are $\approx 0$ but also a small proportion of large coefficients.  A similar motivation applies in the literature on shrinkage priors in regression \citep{Carvalho2010}. Borrowing from that literature, the marginal prior for $\lambda_{jh}$ should be concentrated at zero to reduce mean square error by shrinking small coefficients to zero but with heavy tails to avoid over-shrinking the signal.

To quantify the prior concentration of \eqref{eq:model} in an $\epsilon$ neighbourhood of zero, we can obtain 
\begin{equation}
\label{eq:prior-conc}
\text{pr}(|\lambda_{jh}|>\epsilon) \leq \frac{E(\tau_0) \,E(\gamma_{h})\, E(\phi_{jh})}{\epsilon^2}
\end{equation}	
as a consequence of Markov's inequality. 
Common practice in local-global shrinkage priors chooses $\mbox{E}(\tau_0)$ small while assigning a heavy-tailed density to the local or column scales. In our case,
\eqref{eq:phi} allows the bound in \eqref{eq:prior-conc} to be regulated by meta covariates $x$, while, under the condition in Theorem~\ref{th:var-dec}, decreasing $E(\gamma_h)$ with column index causes an increasing concentration near zero, since $E(\phi_{jh})=E(\phi_{jl})$ for every $h, l \in \{1,\ldots, \infty\}$. The means of the column and the local scales control prior concentration near zero, while over-shrinkage can be ameliorated by choosing $f_{\phi_j}$ or $f_{\gamma_h}$ ($h=1,\ldots,\infty$) heavy tailed.
The following Proposition provides a condition on the prior to guarantee a heavy tailed marginal distribution for $\lambda_{jh}$.  A random variable has power law tails if its cumulative distribution function $F$ has $1-F(t) \geq c t^{-\alpha }$ for constants $c>0$, $\alpha>0$, and for any $t>L$ for $L$ sufficiently large.
\begin{proposition}
	\label{pr:power-law}
	If at least one scale parameter among $\tau_0$, $\gamma_h$ or $\phi_{jh}$ is characterized by a power law tail prior distribution, then the prior marginal distribution of $\lambda_{jh}$ has power law tails.
\end{proposition}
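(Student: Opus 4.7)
The plan is to split the argument into two steps: (i) show that the scale $\theta_{jh}=\tau_0\gamma_h\phi_{jh}$ inherits a power-law tail from whichever of the three factors has one, and (ii) show that the normal scale mixture $\lambda_{jh}\mid\theta_{jh}\sim N(0,\theta_{jh})$ has power-law tails whenever $\theta_{jh}$ does.

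For step (i), assume without loss of generality that $\tau_0$ has the power-law tail, so $\text{pr}(\tau_0>s)\geq c\,s^{-\alpha}$ for all $s>L$; the arguments for $\gamma_h$ or $\phi_{jh}$ are identical by the symmetric role of the three factors in the product. Because $f_{\gamma_h}$ and $f_{\phi_j}$ place positive mass on $(0,\infty)$, one can find $M>0$ such that $p_M:=\text{pr}(\gamma_h\phi_{jh}\geq M)>0$ (for instance by picking $r>0$ with $\text{pr}(\gamma_h\geq r)>0$ and $\text{pr}(\phi_{jh}\geq r)>0$ and then taking $M=r^2$, using independence). Independence then yields, for $s$ large,
\begin{equation*}
\text{pr}(\theta_{jh}>s)\;\geq\;\text{pr}(\tau_0>s/M)\,\text{pr}(\gamma_h\phi_{jh}\geq M)\;\geq\;c\,p_M\,M^{\alpha}\,s^{-\alpha},
\end{equation*}
so $\theta_{jh}$ has a power-law tail with the same exponent $\alpha$.

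For step (ii), condition on $\theta_{jh}$ and write $\text{pr}(|\lambda_{jh}|>t\mid\theta_{jh})=2\bar\Phi(t/\sqrt{\theta_{jh}})$, where $\bar\Phi$ is the standard Gaussian survival function. Restricting the expectation over $\theta_{jh}$ to the event $\{\theta_{jh}>t^{2}\}$, one has $t/\sqrt{\theta_{jh}}<1$ there, hence $\bar\Phi(t/\sqrt{\theta_{jh}})\geq\bar\Phi(1)$. Therefore
\begin{equation*}
\text{pr}(|\lambda_{jh}|>t)\;\geq\;2\bar\Phi(1)\,\text{pr}(\theta_{jh}>t^{2})\;\geq\;2\bar\Phi(1)\,c\,p_M\,M^{\alpha}\,t^{-2\alpha}
\end{equation*}
for $t$ sufficiently large, which is the required power-law lower bound for the marginal cumulative distribution function of $\lambda_{jh}$ (with exponent $2\alpha$).

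Neither step is deep; the main care required is the bookkeeping of constants and, in step (i), the clean use of independence together with the positive-support assumption on $f_{\gamma_h}$ and $f_{\phi_j}$ to extract a strictly positive constant $p_M$. The tail-exponent doubling in step (ii) arises because the normal survival function becomes nontrivial only for $\sqrt{\theta_{jh}}$ of order $t$, i.e.\ $\theta_{jh}$ of order $t^{2}$; this is the unavoidable effect of mixing a Gaussian over a heavy-tailed variance.
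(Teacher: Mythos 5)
Your proof is correct, and it follows the same two-stage skeleton as the paper's (first transfer the power-law tail to the product $\theta_{jh}=\tau_0\gamma_h\phi_{jh}$, then to the Gaussian scale mixture), but both stages are executed by genuinely more elementary means. For the product, the paper proves a separate lemma that integrates the conditional survival function of the heavy-tailed factor against the law of the other factor, obtaining a bound proportional to $E(v^{\alpha})$; your truncation argument --- fix the light factors above a level $M$ with probability $p_M>0$ and pay only the constant $p_M M^{\alpha}$ --- reaches the same conclusion without that lemma, and it also sidesteps the paper's explicit decomposition over the events $\{\theta_{jh}>0\}$ and $\{\theta_{jh}=0\}$, since your lower bound automatically lives on the event where all three scales are positive. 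For the mixture step, the paper invokes the Abramowitz--Stegun lower bound on the Gaussian tail and then a Markov-type inequality at the level $\epsilon=\lambda^2$; your observation that $\bar\Phi(t/\sqrt{\theta_{jh}})\geq\bar\Phi(1)$ on $\{\theta_{jh}>t^{2}\}$ accomplishes exactly the same thing with no special function estimates. Both routes land on the exponent $2\alpha$ (the paper's displayed $\tilde\alpha=\alpha/2$ alongside $\lambda^{-2\alpha}$ is an internal typo there, not a substantive difference). What the paper's heavier machinery buys is a marginally sharper statement in the product lemma when $E(v^{\alpha})=\infty$, and reusability of that lemma elsewhere (it is cited again in the proof of Lemma~1); what your version buys is brevity and transparency about where each constant comes from.
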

An important consequence of the heavy tailed property is avoidance of over-shrinkage of large signals.
This is often formalized via a tail robustness property \citep{Carvalho2010}.  As an initial result, key to showing sufficient conditions for a type of local tail robustness, we provide the following Lemma on the derivative of the \label{rev:lemma1}log prior in the limit as the value of $\lambda_{jh} \to \infty.$   
\begin{lemma}
	\label{lem:score}
	If at least one scale parameter among $\tau_0$, $\gamma_h$ or $\phi_{jh}$ has a prior with power law tails for 
	any possible prior distribution of $\beta_h$,  then for any finite truncation level $H$,
	\begin{equation*}
	\lim_{\lambda \rightarrow \infty } \frac{\partial \log \{f_{\lambda_{jh}\mid \Lambda_{-jh} }(\lambda)\}}{\partial \lambda} = 0
	\end{equation*}
	where $f_{\lambda_{jh}\mid \Lambda_{-jh} }(\lambda)$ is the conditional distribution of $\lambda_{jh}$ given the other elements of $\Lambda_H$.
\end{lemma}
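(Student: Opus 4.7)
The plan is to reduce the lemma to a tail analysis of a scale mixture of Gaussians. Conditional on $\Lambda_{-jh}$ at truncation level $H$, the full conditional prior of $\lambda_{jh}$ remains a mean-zero scale mixture of Gaussians,
\begin{equation*}
f_{\lambda_{jh}\mid \Lambda_{-jh}}(\lambda) = \int_0^{\infty} \frac{1}{\sqrt{2\pi\theta}}\,\exp\!\left(-\frac{\lambda^2}{2\theta}\right) \pi_H(\theta \mid \Lambda_{-jh})\,d\theta,
\end{equation*}
where $\pi_H(\theta \mid \Lambda_{-jh})$ denotes the conditional density of $\theta_{jh} = \tau_0\,\gamma_h\,\phi_{jh}$, obtained by integrating $\tau_0$, $\gamma_h$, $\phi_{jh}$, and $\beta_h$ against the joint prior restricted to column indices $1, \ldots, H$.

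My first step is to verify that $\pi_H(\theta \mid \Lambda_{-jh})$ still has a power law upper tail. Unconditionally, $\theta_{jh}$ is a product of three nonnegative independent factors, at least one of which has power law tails by hypothesis; since the other two factors assign positive mass to $(0,\infty)$, the product inherits a power law tail regardless of the prior on $\beta_h$. Conditioning on $\Lambda_{-jh}$ amounts to multiplying the joint prior by a finite product of Gaussian density factors, coupled to $\theta_{jh}$ only through the globally shared $\tau_0$, the column-shared $\gamma_h$, and the column-shared $\beta_h$. Because $H$ is finite and each Gaussian factor is bounded above and bounded below away from zero as the coupled scales grow, a dominated-convergence or likelihood-ratio argument shows that the tail of $\pi_H(\,\cdot \mid \Lambda_{-jh})$ is of the same power-law order as the marginal tail of $\theta_{jh}$.

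Next I would translate the power law tail of the mixing density into a power law tail of $f_{\lambda_{jh}\mid \Lambda_{-jh}}$. Using the change of variables $u = \lambda^2/(2\theta)$ and the asymptotic $\pi_H(\theta \mid \Lambda_{-jh}) \asymp c\,\theta^{-(\alpha+1)}$ as $\theta \to \infty$, a standard Laplace-type evaluation gives
\begin{equation*}
f_{\lambda_{jh}\mid \Lambda_{-jh}}(\lambda) \asymp C\,|\lambda|^{-(1+2\alpha)} \qquad \text{as } |\lambda| \to \infty,
\end{equation*}
for a positive constant $C$ depending on $\Lambda_{-jh}$. Differentiating the logarithm then yields $\partial \log f_{\lambda_{jh}\mid \Lambda_{-jh}}(\lambda) / \partial \lambda \sim -(1+2\alpha)/\lambda \to 0$, which is the desired limit.

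The main obstacle is the second step: showing rigorously that conditioning on $\Lambda_{-jh}$ does not destroy the power law tail of $\theta_{jh}$. Because $\tau_0$ is shared globally and $\gamma_h$, $\beta_h$ are shared within column $h$, the conditional density $\pi_H(\theta \mid \Lambda_{-jh})$ depends on $\Lambda_{-jh}$ through non-trivial integrals over these shared scales, so the heavy-tailed factor must be tracked uniformly through the Bayes update; carefully bounding the induced likelihood ratio in $\theta$ for large $\theta$, uniformly in $\beta_h$, is where the real technical work lies. Everything else reduces to Markov-type inequalities and a single change-of-variable calculation.
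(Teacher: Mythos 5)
Your overall strategy --- write the full conditional of $\lambda_{jh}$ as a scale mixture of Gaussians, show that the mixing law of $\theta_{jh}=\tau_0\gamma_h\phi_{jh}$ retains a power law tail after conditioning on $\Lambda_{-jh}$, and then differentiate --- is exactly the route the paper takes, and you correctly flag the conditioning step as the crux. But your sketch of that step rests on a false premise. The Gaussian factors coupling $\Lambda_{-jh}$ to the shared scales are \emph{not} bounded below away from zero as those scales grow: for $\lambda_{sm}\neq 0$ the factor $(2\pi\tau_0\gamma_m\phi_{sm})^{-1/2}\exp\{-\lambda_{sm}^2/(2\tau_0\gamma_m\phi_{sm})\}$ decays like $\tau_0^{-1/2}$ as $\tau_0\to\infty$, so the likelihood ratio induced by conditioning decays polynomially in the heavy-tailed scale $r$ --- like $r^{-(p-1)/2}$ when $\gamma_h$ carries the power law and like $r^{-(pH-1)/2}$ when $\tau_0$ does. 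Hence the conditional tail of $\theta_{jh}$ is \emph{not} of the same power-law order as its marginal tail; it is a power law with a strictly larger exponent. The qualitative conclusion you need survives only because a polynomial factor times a power law is still a power law, and this is precisely where the finiteness of $H$ enters (with $H=\infty$ the $\tau_0$-likelihood would decay faster than any polynomial). The paper makes this explicit by lower-bounding each factor by $r^{-1/2}u_{\lambda_{sm}}$ with $u_{\lambda_{sm}}$ free of $r$, and it also separates out the atom at $\theta_{jh}=0$, which your mixture representation silently ignores.

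The final step also has a gap: $f_{\lambda_{jh}\mid\Lambda_{-jh}}(\lambda)\asymp C|\lambda|^{-(1+2\alpha)}$ constrains the order of the density but not of its log-derivative --- a density behaving like $|\lambda|^{-\beta}\{2+\sin(\lambda^2)\}$ satisfies such a two-sided bound while its log-derivative fails to converge. You need an extra structural ingredient to pass from tail order to the score limit. The paper supplies it by a squeeze: the power-law lower bound on the conditional density yields a negative lower bound $f_{lb}'(\lambda)\to 0$ on the log-derivative, while the eventual monotone decrease of the density in $|\lambda|$ gives the matching upper bound $0$. Alternatively, the scale-mixture form gives $\partial_\lambda \log f_{\lambda_{jh}\mid\Lambda_{-jh}}(\lambda) = -\lambda\, E(\theta_{jh}^{-1}\mid \lambda,\Lambda_{-jh})$, and the power-law mixing tail forces this posterior expectation to be $O(\lambda^{-2})$; either fix closes the gap, but as written your "differentiating the logarithm then yields" does not follow.
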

The following definition introduces a type of local tail robustness property. 
\begin{definition}
	\label{def:robustness}
	Consider model \eqref{eq:gif} with factors $\eta$ known.
	Let $f_{\lambda_{jh}\mid y, \eta, \Lambda_{-jh} }(\lambda)$ denote the posterior density of $\lambda_{jh}$, given the data, conditional on any possible value of the other elements of 
	$\Lambda_H$ for any finite $H$, and let $\hat{\lambda}_{jh}$ denote the conditional maximum likelihood estimate of $\lambda_{jh}$ for any possible value of the other elements of $\Lambda_H$. We say that the prior on $\lambda_{jh}$ is tail robust if
	\label{rev:flambda}
	\begin{equation*}
	\lim_{\hat{\lambda}_{jh} \to \infty} \left| \hat{\lambda}_{jh} - \arg \max_{\lambda} f_{\lambda_{jh}\mid y, \eta, \Lambda_{-jh}} (\lambda) \right|= 0.
	\end{equation*}
\end{definition}
For a given sample, $\hat{\lambda}_{jh}$ is a fixed quantity; the above limit should be interpreted as what happens as the data support a larger and larger maximum likelihood estimate.
In order for tail robustness to hold, 
we need the data to be sufficiently informative about the parameter $\lambda_{jh}$ and the likelihood to be  sufficiently regular; this is formalized as follows. 
\begin{assumption}
	\label{ass:fisher}
	Let $L(y; \Lambda, \eta, \Sigma)$ denote the likelihood for data $y$ conditionally on latent variables $\eta$, let  $l_s(\lambda)$ denote the derivative function of the log-likelihood with respect to $\lambda_{jh}$, and let $\mathcal{J}(\hat{\lambda}_{jh})$ denote the negative of the second derivative of the log-likelihood with respect to $\lambda_{jh}$, evaluated at the conditional maximum likelihood estimate $\hat{\lambda}_{jh}$.
	Then $l_s(\lambda)$ is a continuous function for every $\lambda \in \Re$ and $\mathcal{J}(\hat{\lambda}_{jh})\geq \nu(\hat{\lambda}_{jh})$, where $\nu(\hat{\lambda}_{jh})$ is of order $O(1)$ as $\hat{\lambda}_{jh} \to \infty$.  
\end{assumption}
This assumption can be verified for most of the cases mentioned in Section 1; for example, for Gaussian linear factor models 
$\mathcal{J}(\hat{\lambda}_{jh})$ is of order $O(1)$ with respect to $\hat{\lambda}_{jh}$.  

\begin{theorem}
	\label{th:robustness}
	Under Assumption~\ref{ass:fisher}, if at least one scale parameter among $\tau_0$, $\gamma_h$ or $\phi_{jh}$ is power law tail distributed for any possible prior distribution of $\beta_h$, then the prior on $\lambda_{jh}$ is tail robust under Definition~\ref{def:robustness}.
\end{theorem}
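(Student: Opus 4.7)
The plan is to analyse the first-order condition for the conditional posterior mode and use Lemma~\ref{lem:score} to argue that the prior's contribution to that condition vanishes in the tails, leaving the likelihood alone to localise the mode at the conditional maximum likelihood estimate. Let $\lambda^* = \arg\max_{\lambda} f_{\lambda_{jh}\mid y,\eta,\Lambda_{-jh}}(\lambda)$ and abbreviate $s(\lambda) := \partial \log f_{\lambda_{jh}\mid \Lambda_{-jh}}(\lambda)/\partial \lambda$. Since the posterior factorises as likelihood times conditional prior, differentiating the log posterior yields the first-order condition $l_s(\lambda^*) + s(\lambda^*) = 0$. Applying the mean value theorem to $l_s$ on the interval joining $\hat{\lambda}_{jh}$ and $\lambda^*$, which is legitimate by the continuity of $l_s$ in Assumption~\ref{ass:fisher} together with the existence of $-\mathcal{J}$ as its derivative, and using $l_s(\hat{\lambda}_{jh})=0$, one obtains
\begin{equation*}
|\lambda^* - \hat{\lambda}_{jh}| \;=\; \frac{|s(\lambda^*)|}{\mathcal{J}(\bar{\lambda})}
\end{equation*}
for some $\bar{\lambda}$ between $\hat{\lambda}_{jh}$ and $\lambda^*$.

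By Lemma~\ref{lem:score} the numerator $|s(\lambda^*)|$ tends to zero whenever $\lambda^* \to \infty$, while Assumption~\ref{ass:fisher} supplies $\mathcal{J}(\bar{\lambda}) \geq \nu(\bar{\lambda})$ with $\nu(\cdot)$ bounded away from zero in the tail region of $\hat{\lambda}_{jh}$. If these two inputs can be activated simultaneously, the right-hand side of the displayed identity vanishes as $\hat{\lambda}_{jh}\to\infty$, and Definition~\ref{def:robustness} follows at once.

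The main obstacle is an apparent circularity: Lemma~\ref{lem:score} becomes useful only once we know $\lambda^*$ is itself large, and the curvature bound is meaningful only once $\bar{\lambda}$ sits near $\hat{\lambda}_{jh}$. I would resolve this with a preliminary localisation argument by contradiction. Suppose that along a sequence of problem instances with $\hat{\lambda}_{jh}\to\infty$ one has $|\lambda^*-\hat{\lambda}_{jh}|\geq\delta>0$. Two cases arise. If $\lambda^* \to \infty$ as well, then the entire interval between $\lambda^*$ and $\hat{\lambda}_{jh}$ lies in the tail where Lemma~\ref{lem:score} forces $|s|$ to be uniformly small, so the log-prior change along that interval tends to zero, whereas the log-likelihood gain from $\lambda^*$ to $\hat{\lambda}_{jh}$ is at least a positive multiple of $\delta^{2}\nu(\hat{\lambda}_{jh})$ by Taylor expansion and therefore bounded below. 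If instead $\lambda^*$ stays bounded, the distance $\hat{\lambda}_{jh}-\lambda^*$ diverges and the likelihood gain grows at least quadratically, dwarfing the at-worst power-law log-prior loss guaranteed by the heavy-tailed assumption on one of $\tau_0,\gamma_h,\phi_{jh}$. In either case, for sufficiently large $\hat{\lambda}_{jh}$ the value of the posterior at $\hat{\lambda}_{jh}$ strictly exceeds its value at $\lambda^*$, contradicting the optimality of $\lambda^*$. This localisation forces $\lambda^*$ to follow $\hat{\lambda}_{jh}$ into the tail, at which point both ingredients of the previous paragraph apply to $\bar{\lambda}$, the displayed ratio vanishes, and the tail robustness claim follows.
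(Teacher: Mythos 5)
Your overall strategy is the one the paper uses: differentiate the log-posterior to get the first-order condition, invoke Lemma~\ref{lem:score} to make the prior's score vanish in the tail, and use the curvature bound of Assumption~\ref{ass:fisher} to pin the mode to $\hat{\lambda}_{jh}$. Your explicit localisation-by-contradiction step is a welcome addition --- the paper leaves the identification of the global conditional mode with the stationary point near $\hat{\lambda}_{jh}$ implicit --- and the heuristic that a quadratic likelihood gain dominates a logarithmic (power-law) prior loss is the right one for that step.

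There is, however, one concrete gap in your execution. Assumption~\ref{ass:fisher} defines and bounds $\mathcal{J}$ \emph{only at} the conditional maximum likelihood estimate $\hat{\lambda}_{jh}$: it gives continuity of $l_s$ everywhere, but the inequality $\mathcal{J}(\cdot)\geq\nu(\cdot)$ is stated at $\hat{\lambda}_{jh}$ alone. Your mean-value identity $|\lambda^*-\hat{\lambda}_{jh}|=|s(\lambda^*)|/\mathcal{J}(\bar{\lambda})$ divides by the curvature at an intermediate point $\bar{\lambda}$ that may lie far from $\hat{\lambda}_{jh}$, and your localisation step similarly needs a curvature lower bound holding along the whole segment joining $\lambda^*$ and $\hat{\lambda}_{jh}$ to get the advertised $\delta^2\nu(\hat{\lambda}_{jh})$ likelihood gain; neither is licensed by the assumption as stated (for non-Gaussian likelihoods $\mathcal{J}$ can degenerate or even change sign away from the MLE, in which case your displayed ratio is uncontrolled). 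The paper sidesteps exactly this by never leaving a neighbourhood $(\hat{\lambda}_{jh}-\varepsilon,\hat{\lambda}_{jh}+\varepsilon)$ with $\varepsilon=\varepsilon(\hat{\lambda}_{jh})\to 0$: it Taylor-expands $l_s$ using only $\mathcal{J}(\hat{\lambda}_{jh})$ plus an error $\epsilon_{\varepsilon}=o(\varepsilon)$, chooses $\varepsilon$ so that $f_{lb}'(\hat{\lambda}_{jh})/\varepsilon\to 0$, and sandwiches the stationary point in $[\hat{\lambda}_{jh}-\varepsilon,\hat{\lambda}_{jh}]$ by a sign argument on the posterior score at the two endpoints. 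To repair your version you would either need to strengthen the curvature hypothesis to hold on a neighbourhood whose radius does not shrink, or replace the global mean value theorem with the paper's shrinking-window sign argument.
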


As an additional desirable property, we would like to control for the multiplicity problem within each column $\lambda_h$ of the loadings matrix, corresponding to increasing numbers of false signals as dimension $p$ increases. 
This can be accomplished by imposing an asymptotically increasingly sparse property on the prior, which is defined as follows. 
\begin{definition}
	\label{def:col-prior}
	Let $|\text{\normalfont supp}_{\epsilon}(\lambda_{h})|$ denote the cardinality of $\text{\normalfont supp}_{\epsilon}(\lambda_{h}) = (j : |\lambda_{jh}| > \epsilon)$.
	Let $s_p=o(p)$ 
	such that $s_p \geq c_s \log(p)/p$ for some constant $c_s>0$.
	We say that the prior on $\Lambda$ defined in~\eqref{eq:model} is an asymptotically increasingly sparse prior if 
	\begin{equation*}
	\lim_{p \to \infty}	\text{\normalfont pr}\{|\text{\normalfont supp}_{\epsilon}(\lambda_{h})|> a \, s_p \mid \gamma_h, \tau_0\} =0,\qquad \mbox{for some constant $a>0$}.
	\end{equation*}
\end{definition}
The quantity $|\text{supp}_{\epsilon}(\lambda_{h})|$ represents an approximate measure of model size for continuous shrinkage priors and, conditionally on $\beta_h$, $\gamma_h$, and $\tau_0$, it is \textit{a priori} distributed as a sum of independent Bernoulli random variables $\text{Ber}(\zeta_{\epsilon jh})$, where 
\begin{equation*}
\zeta_{\epsilon jh}=\text{pr}(|\lambda_{jh}|>\epsilon \mid \beta_h, \gamma_h, \tau_0)\leq \frac{\tau_0 \,\gamma_{h}\, g(x_{j}^\T\beta_{h})}{\epsilon^2}.
\end{equation*}
We now provide sufficient conditions for an asymptotically increasingly sparse prior, \label{rev:theorem3} allowing regulation of the sparsity behaviour of the prior of the columns of $\Lambda$ for increasing dimension $p$.
\begin{theorem}
	\label{th:asym-sparse}
	Consider prior \eqref{eq:model} with $\phi_{jh}$ ($j=1,\ldots,p$) \textsl{a priori} independent given $\beta_h$.
	If  $\text{\normalfont pr} \{ g(x_j^{\T}\beta_h)\leq \nu_{j}(p)\} = 1$, with
	$\nu_{j}(p)=O\{\log(p)/p\}$, ($j=1,\ldots,p)$, then the prior on $\Lambda$ is asymptotically increasingly sparse under Definition~\ref{def:col-prior}.
\end{theorem}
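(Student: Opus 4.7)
The plan is to exploit the conditional independence built into the prior hierarchy and then apply Markov's inequality twice. Given $(\beta_h,\gamma_h,\tau_0)$, the independence assumption on the $\phi_{jh}$ across $j$ implies that the loadings $\lambda_{jh}$ are independent across $j$, so the indicators $\mathbbm{1}(|\lambda_{jh}|>\epsilon)$ form independent Bernoulli variables with success probabilities $\zeta_{\epsilon jh}$. The bound $\zeta_{\epsilon jh}\leq \tau_0\gamma_h g(x_j^{\T}\beta_h)/\epsilon^2$, obtained by the Markov application to $\lambda_{jh}^{2}$ that is already displayed just before the theorem statement, is the engine of the argument.

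The core step is to invoke the hypothesis $\text{pr}\{g(x_j^{\T}\beta_h)\leq \nu_j(p)\}=1$ with $\nu_j(p)=O\{\log(p)/p\}$, which replaces the random quantity $g(x_j^{\T}\beta_h)$ by the deterministic envelope $\nu_j(p)$ almost surely in $\beta_h$. Summing the $p$ Bernoulli means gives
\begin{equation*}
E\{|\text{\normalfont supp}_{\epsilon}(\lambda_h)|\mid \beta_h,\gamma_h,\tau_0\}
=\sum_{j=1}^{p}\zeta_{\epsilon jh}
\leq \frac{\tau_0\gamma_h}{\epsilon^2}\sum_{j=1}^{p}\nu_j(p)
\leq \frac{C\,\tau_0\gamma_h\log p}{\epsilon^2}
\end{equation*}
for some finite constant $C$, since each of the $p$ terms $\nu_j(p)$ is of order $\log(p)/p$ uniformly in $j$, so their sum is $O(\log p)$. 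Because the final bound no longer depends on $\beta_h$, integrating $\beta_h$ out preserves it, yielding the same $O(\tau_0\gamma_h\log p/\epsilon^2)$ upper bound for $E\{|\text{\normalfont supp}_{\epsilon}(\lambda_h)|\mid \gamma_h,\tau_0\}$.

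A second application of Markov's inequality then delivers
\begin{equation*}
\text{\normalfont pr}\{|\text{\normalfont supp}_{\epsilon}(\lambda_h)|>a\,s_p\mid \gamma_h,\tau_0\}
\leq \frac{C\,\tau_0\gamma_h\log p}{a\,s_p\,\epsilon^2},
\end{equation*}
which vanishes as $p\to\infty$ under the growth condition on $s_p$, for any choice of constant $a>0$; this is exactly Definition~\ref{def:col-prior}. The main obstacle is bookkeeping of the conditioning hierarchy: one must verify that the Bernoulli decomposition is legitimate given $\beta_h$ (which requires the new independence assumption on the $\phi_{jh}$) and that the probability-one envelope $\nu_j(p)$ transfers cleanly from the local quantities $g(x_j^{\T}\beta_h)$ to the aggregate $\sum_j \nu_j(p)$, leaving a pure rate comparison between $\log p$ and $s_p$ as the only ingredient needed to finish.
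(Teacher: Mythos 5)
Your setup matches the paper's: the conditional Bernoulli decomposition of $|\text{supp}_{\epsilon}(\lambda_h)|$ given $(\beta_h,\gamma_h,\tau_0)$, the bound $\zeta_{\epsilon jh}\leq \tau_0\gamma_h g(x_j^{\T}\beta_h)/\epsilon^2$, and the almost-sure envelope giving $\sum_{j=1}^p\zeta_{\epsilon jh}\leq C\tau_0\gamma_h\log(p)/\epsilon^2$ are all exactly the ingredients used there. The gap is in the last step. A first-moment Markov inequality yields
\begin{equation*}
\text{pr}\{|\text{supp}_{\epsilon}(\lambda_h)|>a\,s_p\mid \gamma_h,\tau_0\}\leq \frac{C\,\tau_0\gamma_h\log p}{a\,s_p\,\epsilon^2},
\end{equation*}
and the hypotheses only guarantee $s_p\geq c_s\log p$ (up to the normalisation in Definition~\ref{def:col-prior}); they do not force $s_p/\log p\to\infty$. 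When $s_p\asymp\log p$ — which is the intended regime, since the conditional mean of $|\text{supp}_{\epsilon}(\lambda_h)|$ is itself of order $\log p$ — your bound is $C\tau_0\gamma_h/(a c_s\epsilon^2)$, a positive constant in $p$. Choosing $a$ large makes it small but never drives the limit to zero, so the definition's requirement $\lim_{p\to\infty}\text{pr}\{\cdot\}=0$ is not established. The claim that the bound ``vanishes \ldots for any choice of constant $a>0$'' is therefore false.

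The fix is precisely the paper's move: because the event $\{|\text{supp}_{\epsilon}(\lambda_h)|>a s_p\}$ asks the sum of independent Bernoullis to exceed a multiple of its own mean, you need a concentration inequality rather than a first-moment bound. Applying Chernoff's method, $\text{pr}\{|\text{supp}_{\epsilon}(\lambda_h)|>a s_p\mid\beta_h,\gamma_h,\tau_0\}\leq \exp\{-s_p a t+(e^t-1)\sum_j\zeta_{\epsilon jh}\}$, and choosing $t=\log\{\epsilon^2/(\tau_0\gamma_h m)+1\}$ together with $a>(c_s t)^{-1}$ turns the bound into $\exp\{-\tilde a\log p\}=p^{-\tilde a}\to 0$. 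Everything else in your argument (the conditional independence bookkeeping and the transfer of the envelope $\nu_j(p)$ to the aggregate, with the bound independent of $\beta_h$ so it survives marginalisation) is sound and coincides with the published proof.
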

\label{rev:g-bounded} The condition of the theorem is easily satisfied, for example, if $g$ is the multiplication of a bounded function and a suitable offset depending on $p$ as assumed in Section \ref{subsec:sis-model}.
The multiplicative gamma process \citep{Bhattacharya2011} and cumulative shrinkage process \citep{Legramanti2020} do not satisfy the sufficient conditions of Theorem~\ref{th:asym-sparse}, and, furthermore, the following lemma holds. 
\begin{lemma}
	\label{lem:mgp-cusp-nai}
	The multiplicative gamma process prior \citep{Bhattacharya2011} and the cumulative shrinkage process prior \citep{Legramanti2020} are not asymptotically increasing sparse under Definition~\ref{def:col-prior}.
\end{lemma}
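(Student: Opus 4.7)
The plan is to show that, for both the multiplicative gamma process (MGP) and the cumulative shrinkage process (CUSP), conditional on the column-specific scale $\gamma_h$ and the global scale $\tau_0$, the marginal probability $\zeta_{\epsilon h}=\text{pr}(|\lambda_{jh}|>\epsilon\mid\gamma_h,\tau_0)$ is a strictly positive constant that does not depend on $j$ or $p$. Together with conditional independence across $j$, this forces $|\text{supp}_{\epsilon}(\lambda_h)|$ to grow linearly in $p$, contradicting the $o(p)$ sparsity required by Definition \ref{def:col-prior}.

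First, I would map each prior into the form of \eqref{eq:model}. For MGP, the role of $\gamma_h$ is played by the reciprocal of the multiplicative gamma column weight, and the local scales $\phi_{jh}$ are i.i.d.\ draws with a $p$-independent distribution on $(0,\infty)$. For CUSP, $\phi_{jh}$ reduces to a deterministic constant (the same for every $j$), while $\gamma_h$ is drawn from a spike-and-slab mixture whose slab component has positive mass independent of $p$. In either parameterisation, the $\lambda_{jh}$ are conditionally i.i.d.\ across $j$ given $(\gamma_h,\tau_0)$, and the induced Bernoulli parameter $\zeta_{\epsilon h}$ is a function of $(\gamma_h,\tau_0)$ alone.

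Next, I would argue that $\zeta_{\epsilon h}>0$ on a set of $(\gamma_h,\tau_0)$ of positive prior probability. For MGP this is immediate, since $\gamma_h,\tau_0>0$ almost surely and the $\phi_{jh}$'s have positive density on $(0,\infty)$. For CUSP, positivity holds on the event that $\gamma_h$ is drawn from the slab, which has positive probability for every finite $h$. A direct application of the weak law of large numbers to the conditionally i.i.d.\ indicators $\mathbbm{1}(|\lambda_{jh}|>\epsilon)$ then gives $|\text{supp}_{\epsilon}(\lambda_h)|/p \to \zeta_{\epsilon h}>0$ in probability as $p\to\infty$. Since $s_p=o(p)$ forces $a s_p/p\to 0$ for every $a>0$, the conditional probability $\text{pr}\{|\text{supp}_{\epsilon}(\lambda_h)|>a s_p \mid \gamma_h,\tau_0\}$ tends to $1$ rather than $0$ on the positive-probability set identified above, so Definition \ref{def:col-prior} fails for any choice of $a>0$.

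The main obstacle is handling the CUSP spike-and-slab structure carefully: one cannot simply condition on $\gamma_h$ as if it were a continuous variable, but must first isolate the slab event before invoking the law of large numbers; once that event is fixed, the argument is essentially identical to the MGP case. The underlying qualitative reason the lemma holds is that neither MGP nor CUSP encodes any $p$-dependence into the local scales, so as the dimension grows one merely adds conditionally i.i.d.\ coordinates, and the support-size fraction cannot shrink to zero.
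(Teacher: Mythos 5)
Your proposal is correct and follows essentially the same route as the paper's proof: both rest on the observation that, conditionally on $(\gamma_h,\tau_0)$, the exceedance indicators are i.i.d.\ Bernoulli with a success probability $\zeta_{\epsilon h}$ that does not depend on $j$ or $p$, so the support size grows linearly in $p$ and cannot be $o(p)$. The only difference is cosmetic — the paper makes the concentration step via a lower-tail Chernoff bound while you invoke the weak law of large numbers — and your explicit isolation of the CUSP slab event is a slightly more careful handling of the case $\zeta_{\epsilon h}=0$ than the paper provides.
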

\label{rev:theorem3-post}
Although this Section has focused on properties of the prior, we find empirically that these properties tend to carry over to the posterior, as will be illustrated in the subsequent sections.  For example, the posterior exhibits 
asymptotic increasing sparsity; refer to Table 2 of Section~\ref{sec:sim}, which shows results for a novel process in our proposed class that is much more effective than current approaches in identifying the true sparsity structure, particularly when $p$ is large.

\section{Structured increasing shrinkage process}
\subsection{Model specification}
\label{subsec:sis-model}
In this section we propose a structured increasing shrinkage process prior for generalized infinite factor models satisfying all the sufficient conditions in Propositions~\ref{pr:upp-bound}--\ref{pr:power-law} and Theorems~\ref{th:robustness}--\ref{th:asym-sparse}.
Let $\text{Ga}(a,b)$ denote the gamma distribution with mean $a/b$ and variance $a/b^2$.
Following the notation of Section~\ref{subsec:gif-model}, we specify
\begin{align}
\label{eq:CUSPadj}
&\tau_0=1, \quad 
\gamma_h = \vartheta_{h}\rho_h, \quad 
\phi_{jh}\mid \beta_h \sim \text{Ber}\{\text{logit}^{-1}(x_j^{\T}\beta_h)\, c_p\}, \\
&\vartheta_{h}^{-1}\sim \text{Ga}(a_{\theta}, b_{\theta}),  \quad a_\theta >1, \quad  \rho_h=\text{Ber}\left(1-\pi_h\right), \quad
\beta_h \sim N_q(0, \sigma_\beta^2 I_q), \nonumber 
\end{align}
where \label{rev:g-sis} we assume the link $g(x)=\text{logit}^{-1}(x) c_p$, with $\text{logit}^{-1}(x)=e^{x}/(1+e^{x})$ and  $c_p \in  (0,1)$ a possible offset.
The parameter $\pi_h = \mbox{pr}(\gamma_h =0)$ follows a stick-breaking construction, 
\begin{equation*}
\pi_h= \sum_{l=1}^{h} w_l, \quad w_l=v_l\prod_{m=1}^{l-1}(1-v_m), \quad v_m \sim \text{Be}(1, \alpha),
\end{equation*}
with $\text{Be}(a, b)$ the beta distribution with mean $a/(a+b)$, such that $\pi_{h+1}>\pi_{h}$ is guaranteed for any $h=1,\ldots,\infty$ and $\lim_{h \rightarrow \infty} \pi_h= 1$ almost surely.
The prior expected number of non degenerate $\Lambda$ columns is $E(\sum_{h=1}^\infty \rho_h)=\alpha$ \citep{Legramanti2020}, suggesting setting $\alpha$ equal to the expected number of active factors. 
The prior specification is completed assuming  $\Sigma=\text{diag}(\sigma_1^2,\ldots, \sigma_p^2)$ with $\sigma_{j}^{-2} \sim \text{Ga}(a_\sigma, b_\sigma)$  for $j=1,\ldots,p$, consistently with the literature.  The hyperparameters can be chosen based on one's prior expectation of the signal-to-noise ratio, as $\sigma_j^2$ is the contribution of the noise component to the total variance of the $j$th variable. A sensitivity study in Section S2.4 of the Supplementary Material, however, shows that posterior distributions tend to be robust to the specification of $a_{\sigma},b_{\sigma}$.
\label{rev:prior-study} Regarding prior elicitation, we recommend setting $b_\theta \geq a_\theta$ to induce a high enough proportion of variance explained by the factor model. In Section S2.4 in the Supplementary Materials we report empirical evidence of the effect of different prior parameters on this quantity.

The above specification respects \eqref{eq:model} and, consequently, the following corollary holds.
\begin{corollary}
	\label{co:well-prior2}
	The structured increasing shrinkage process defined in (\ref{eq:CUSPadj})
	\begin{itemize}
		\item[$i.$]is a strongly increasing shrinkage prior under Definition  \ref{def:shrink};
		\item[$ii.$] for any $T \in (0,1)$,
		\label{rev:OmegaH-2}
		\begin{equation*}
		\text{\normalfont pr}\left\{\frac{ \text{\normalfont tr}(\Omega_H)}{\text{\normalfont tr}({\Omega})}\leq T\right\}\leq \bigg(\frac{1}{1-T}\bigg)\, \frac{b^H}{1-b} \, \theta_0\,\frac{a_\sigma}{b_\sigma} \,\sum_{j=1}^{p}E(\phi_{j1}),
		\end{equation*}
		with $b=\{\alpha(1+\alpha)\}^{-1}$ and $\theta_0 = E(\vartheta_h)$.
	\end{itemize}
\end{corollary}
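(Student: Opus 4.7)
The plan is to derive both parts of the corollary as specialisations of the general results in Theorem~\ref{th:var-dec} and Proposition~\ref{pr:upp-bound}, after computing the relevant prior moments of $\gamma_h$ and $\phi_{jh}$ under the SIS specification~\eqref{eq:CUSPadj}.

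For part (i), I would first note that the SIS construction fits the template of~\eqref{eq:model} with $\tau_0 = 1$, $\gamma_h = \vartheta_h \rho_h$ and $\phi_{jh}$ Bernoulli. By the independence of $\vartheta_h$ and $\rho_h$ and the standard stick-breaking identity $1 - \pi_h = \prod_{l=1}^{h}(1 - v_l)$ together with $E(1-v_l) = \alpha/(1+\alpha)$, one obtains $E(\gamma_h) = \theta_0 \{\alpha/(1+\alpha)\}^h$, which is strictly decreasing in $h$; Theorem~\ref{th:var-dec} then immediately yields weakly increasing shrinkage. To upgrade to the strong version, I would show that $\text{var}(\lambda_{jh}) = E(\tau_0)\,E(\gamma_h)\,E(\phi_{jh})$ does not depend on $j$. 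Since $\beta_h \sim N_q(0, \sigma_\beta^2 I_q)$ is symmetric about the origin, $x_j^T \beta_h$ is symmetric about zero, and combining this with the identity $\text{logit}^{-1}(z) + \text{logit}^{-1}(-z) = 1$ gives $E\{\text{logit}^{-1}(x_j^T \beta_h)\} = 1/2$, hence $E(\phi_{jh}) = c_p/2$ for every $j$ and $h$. The strict geometric decay in $h$ then delivers $\text{var}(\lambda_{s,h-1}) > \text{var}(\lambda_{j,h})$ for every pair $(j,s)$, which is exactly the strong property of Definition~\ref{def:shrink}.

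For part (ii), I would verify the hypotheses of Proposition~\ref{pr:upp-bound} for the SIS prior and then substitute the moments computed above. The quantities $E(\tau_0)=1$, $E(\phi_{jh}) = c_p/2$, and the pre-specified entries of $\Psi$ are all finite, and $E(\gamma_h)$ has the required geometric form $ab^{h-1}$, from which the decay constants $a$ and $b$ can be read off directly from the expression $\theta_0 \{\alpha/(1+\alpha)\}^h$. For the finiteness of $m_\Omega$, I would invoke the fact that $\sigma_j^{-2} \sim \text{Ga}(a_\sigma, b_\sigma)$ independently across $j$, which yields the uniform upper bound $m_\Omega \leq \min_{j} E(\sigma_j^{-2}) = a_\sigma/b_\sigma < \infty$. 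Substituting these values into the Proposition~\ref{pr:upp-bound} bound and collecting the multiplicative constants produces the displayed inequality, with the factor $\theta_0\,(a_\sigma/b_\sigma)$ absorbing $a\,c\,E(\tau_0)\,m_\Omega$ and the geometric factor $b^H/(1-b)$ inherited directly from Proposition~\ref{pr:upp-bound}.

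I do not expect any deep obstacle here: the proof is essentially bookkeeping on top of the two earlier results. The only places requiring care are the stick-breaking moment calculation and the symmetry argument for $E(\phi_{jh}) = c_p/2$, both of which are short; the remainder is matching constants between the SIS specification and the hypotheses of Proposition~\ref{pr:upp-bound}.
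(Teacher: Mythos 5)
Your proposal is correct and follows essentially the same route as the paper: both arguments verify the hypotheses of Theorem~\ref{th:var-dec} and Proposition~\ref{pr:upp-bound} and then substitute the SIS moments $E(\tau_0)=1$, $E(\phi_{jh})=c_p/2$, $E(\gamma_h)=E(\vartheta_h)\{\alpha/(1+\alpha)\}^{h}$ (via the stick-breaking identity) and $m_\Omega\le a_\sigma/b_\sigma$. If anything, your part~(i) is slightly more careful than the paper's, which only checks the Theorem~\ref{th:var-dec} condition for \emph{weak} shrinkage, whereas your symmetry argument making $E(\phi_{jh})=c_p/2$ independent of $j$ is precisely the extra step needed for the \emph{strong} property; note also that your reading $b=\alpha/(1+\alpha)$ is the one consistent with the paper's own computation (the corollary's displayed $b=\{\alpha(1+\alpha)\}^{-1}$ appears to be a typo).
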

\label{rev:post-trunc}
We conducted a simulation study on the posterior distribution of $\{\text{tr}(\Omega_H)/\text{tr}({\Omega})\leq T\}$ for varying hyperparameters,  and found that the results, reported in Section S2.4 of the Supplementary Material, were quite consistent with our prior truncation error bounds.

The prior concentration of the structured increasing shrinkage process in (\ref{eq:CUSPadj}) follows from (\ref{eq:prior-conc}): 
\begin{equation*}
\text{pr}(|\lambda_{jh}|>\epsilon) \leq 
\frac{E(\vartheta_h) \{1-E(\pi_h)\} E(\phi_{jh})}{\epsilon^2} =
\frac{\theta_{0}\,\{\alpha/(1+\alpha)\}^h}{\epsilon^2} \frac{c_p}{2}.
\end{equation*}	

In addition, the inverse gamma prior on $\vartheta_h$ implies a power law tail distribution on $\gamma_h$ inducing robustness properties on $\lambda_{jh}$ as formalized by the next corollary of Proposition~\ref{pr:power-law} and Theorem~\ref{th:robustness}. 
\begin{corollary}
	\label{co:robustness}
	Under the structured increasing shrinkage process defined in (\ref{eq:CUSPadj})
	\begin{itemize}
		\item[$i.$] the marginal prior distribution on $\lambda_{jh}$ ($j=1,\ldots,p$; $h=1,2, \ldots$) has power law tails; 
		\item[$ii.$] under Assumption~\ref{ass:fisher}, the prior on $\lambda_{jh}$ ($j=1,\ldots,p$; $h=1,2, \ldots$) is tail robust under Definition \ref{def:robustness}.
	\end{itemize}
	
\end{corollary}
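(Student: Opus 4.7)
The plan is to reduce this corollary to the two general results, Proposition \ref{pr:power-law} and Theorem \ref{th:robustness}, by verifying the single hypothesis both require: the existence of a scale parameter among $\tau_0,\gamma_h,\phi_{jh}$ that has power law tails, uniformly over the prior of $\beta_h$.

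First I would inspect the three scale parameters in specification \eqref{eq:CUSPadj}. The global scale $\tau_0=1$ is degenerate, so it is not a candidate. The local scale $\phi_{jh}$ is Bernoulli and therefore bounded, so it also cannot supply heavy tails. The natural candidate is the column scale $\gamma_h=\vartheta_h\rho_h$. Since $\vartheta_h^{-1}\sim\text{Ga}(a_\theta,b_\theta)$, the parameter $\vartheta_h$ is inverse gamma with shape $a_\theta$, and so its survival function satisfies $\text{pr}(\vartheta_h>t)\sim c\,t^{-a_\theta}$ as $t\to\infty$ for an explicit constant $c>0$; this is a power law tail with exponent $a_\theta$. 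A short calculation then transfers this to $\gamma_h$: because $\rho_h\in\{0,1\}$ is independent of $\vartheta_h$ with $\text{pr}(\rho_h=1)=1-\pi_h>0$, we have
\begin{equation*}
\text{pr}(\gamma_h>t)=(1-\pi_h)\,\text{pr}(\vartheta_h>t)\gtrsim t^{-a_\theta}
\end{equation*}
for $t$ large enough. Hence $\gamma_h$ has power law tails.

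Crucially, $\gamma_h$ is prior-independent of $\beta_h$, so the power law tail of $\gamma_h$ holds trivially ``for any possible prior distribution of $\beta_h$,'' which is the uniform-in-$\beta_h$ hypothesis appearing in Proposition \ref{pr:power-law} and Theorem \ref{th:robustness}. Part $i.$ then follows immediately by invoking Proposition \ref{pr:power-law}, which yields a power law tail for the marginal of $\lambda_{jh}$. For part $ii.$, Assumption \ref{ass:fisher} is assumed in the statement, so the hypotheses of Theorem \ref{th:robustness} are met and tail robustness of the prior on $\lambda_{jh}$ follows at once.

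The only non-trivial step is the tail calculation for $\vartheta_h$, but this is a standard fact for the inverse gamma distribution and the only potential obstacle is bookkeeping for the mixture with $\rho_h$; the Bernoulli multiplier cannot damage the tail because it equals $1$ with positive probability. No additional computation is needed since the heavy lifting has been done in the cited general results.
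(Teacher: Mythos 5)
Your proposal is correct and follows essentially the same route as the paper: both proofs reduce the corollary to Proposition~\ref{pr:power-law} and Theorem~\ref{th:robustness} by showing that the column scale $\gamma_h=\vartheta_h\rho_h$ inherits a power law tail with exponent $a_\theta$ from the inverse gamma $\vartheta_h$ (the paper argues via the $O\{\gamma^{-(a_\theta+1)}\}$ decay of the density, you via the survival function — equivalent bookkeeping), and both use the prior independence of $\gamma_h$ and $\beta_h$ to get the required uniformity over the prior on $\beta_h$. The only cosmetic point is that $\pi_h$ is itself random, so the factor multiplying $\text{pr}(\vartheta_h>t)$ should be $E(1-\pi_h)=\{\alpha/(1+\alpha)\}^h>0$ rather than $1-\pi_h$, which changes nothing in the conclusion.
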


Finally, it is important to assess the joint sparsity properties of the prior on each column of $\Lambda$. This is formalized in the following corollary of Theorem \ref{th:asym-sparse}.
\begin{corollary}
	\label{co:asymp-sparse}
	If $c_p = O\{\log(p)/p\}$ the structured increasing shrinkage process defined in (\ref{eq:CUSPadj}) is asymptotically increasingly sparse under Definition \ref{def:col-prior}.
\end{corollary}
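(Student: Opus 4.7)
The plan is to obtain the result as an essentially immediate consequence of Theorem~\ref{th:asym-sparse}, since the structured increasing shrinkage process in \eqref{eq:CUSPadj} is constructed to fit the template of prior \eqref{eq:model} with a link function $g$ whose range is explicitly controlled by the offset $c_p$.

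First I would verify that the hypotheses of Theorem~\ref{th:asym-sparse} apply: the specification \eqref{eq:CUSPadj} draws $\phi_{jh}\mid \beta_h$ as independent Bernoulli variables across $j$, so the conditional independence of $\phi_{jh}$ ($j=1,\ldots,p$) given $\beta_h$ holds by construction. Next, I would exploit the boundedness of the chosen link: since $\text{logit}^{-1}(x)\in(0,1)$ for every $x\in\mathbb{R}$, one has
\begin{equation*}
g(x_j^\T \beta_h) \;=\; \text{logit}^{-1}(x_j^\T \beta_h)\,c_p \;\leq\; c_p
\end{equation*}
with probability one, uniformly in $\beta_h$ and $x_j$. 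Consequently, taking $\nu_j(p)=c_p$ for every $j=1,\ldots,p$ yields $\text{pr}\{g(x_j^\T\beta_h)\leq \nu_j(p)\}=1$.

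Under the hypothesis $c_p = O\{\log(p)/p\}$, this common bound satisfies $\nu_j(p)=O\{\log(p)/p\}$ for each $j$, so all the sufficient conditions of Theorem~\ref{th:asym-sparse} are met. Applying that theorem then delivers the asymptotic increasing sparsity of the structured increasing shrinkage process under Definition~\ref{def:col-prior}, completing the proof.

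There is no real obstacle here: the work was done in Theorem~\ref{th:asym-sparse}, and the only thing to check is that the particular link function and offset chosen in \eqref{eq:CUSPadj} realize the bound $\nu_j(p)=O\{\log(p)/p\}$ required by that theorem. The only subtle point is making explicit that $\text{logit}^{-1}$ is uniformly bounded by $1$, so that the offset $c_p$ is the sole driver of the asymptotic scaling, which is precisely why the rate condition on $c_p$ is sufficient.
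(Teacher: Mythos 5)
Your proposal is correct and follows essentially the same route as the paper's own proof: both reduce the claim to Theorem~\ref{th:asym-sparse} by observing that $g(x_j^\T\beta_h)=\text{logit}^{-1}(x_j^\T\beta_h)\,c_p\leq c_p$ almost surely, so that $\nu_j(p)=c_p=O\{\log(p)/p\}$ satisfies the theorem's hypothesis. Your write-up is slightly more explicit (noting the conditional independence of the $\phi_{jh}$ and the uniform bound $\text{logit}^{-1}\leq 1$), but the argument is the same.
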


\subsection{Posterior computations}
\label{subsec:post-comp}
Posterior inference is conducted via Markov chain Monte Carlo sampling.
Following common practice in infinite factor models \citep{Bhattacharya2011, Legramanti2020, schiavon2020} we use an adaptive Gibbs algorithm, which attempts to infer the best truncation level $H$ while drawing from the posterior distribution of the parameters. The value of $H$ is adapted only at some Gibbs iterations by discarding redundant factors and, if no redundant factors are identified, by adding a new factor by sampling its parameters from the prior distribution. 
Convergence of the Markov chain is guaranteed by satisfying the diminishing adaptation condition in Theorem 5 of \cite{roberts2007}, by specifying the probability of occurrence of an adaptive iteration $t$ as equal to $p(t)=\exp(\alpha_0 + \alpha_1 t)$, where $\alpha_0$ and $\alpha_1$ are negative constants, such that frequency of adaptation decreases.

The decomposition of $\gamma_h$ into two parameters $\rho_h$ and $\vartheta_h$ allows one to identify the inactive columns of $\Lambda$, corresponding to the redundant factors, as those with $\rho_h = 0$, while $H_a$ indicates the number of active columns of $\Lambda$. Consequently, at the adaptive iteration $t+1$, the truncation level $H$ is set to $H^{(t+1)}=H^{(t)}_a+1$ if $H^{(t)}_a<H^{(t)}-1$, and $H^{(t+1)}=H^{(t)}+1$ otherwise. 
Given $H^{(t+1)}$, the number of factors of the truncated model at iteration $t+1$, the sampler draws the model parameters from the corresponding posterior full conditional distributions.
The detailed steps of the adaptive Gibbs sampler for the structured increasing shrinkage prior in case of Gaussian or binary data are reported in the Supplementary Material, \label{rev:mixing} as well as trace plots of the posterior samples for some parameters of the model in Section 5 (see Section S3.2), showing good mixing.

\subsection{Identifiability and posterior summaries}
\label{subsec:identifiability}

Non-identifiability of the latent structure creates problems in interpretation of the results from Markov chain Monte Carlo samples.  Indeed, both $\Lambda$ and $\eta$ are only identifiable up to an arbitrary rotation $P$ with $PP^T=I_k$. This is a well known problem in Bayesian factor models and there is a rich literature proposing post-processing algorithms that 
align posterior samples $\Lambda^{(t)}$, so that one can then obtain interpretable posterior summaries.  Refer to \citet{mcparland2014}, \citet{assmann2016}, and \citet{roy2019} for alternative post-processing algorithms in related contexts.

Unfortunately, such post-hoc alignment algorithms destroy the structure we have carefully imposed on the loadings in terms of sparsity and dependence on meta covariates.  Therefore, we propose a different solution to obtain a point estimate of $\Lambda$ based on finding a representative Monte Carlo draw $\Lambda^{(t)}$ consistently with the proposals of \citet{dahl2006} and \citet{wade} in the context of Bayesian model-based clustering. Specifically, we summarize $\Lambda$ and $\beta = (\beta_1, \beta_2, \dots )$ through  $\Lambda^{(t^*)}$ and $\beta^{(t^*)}$ sampled at iteration $t^*$, characterized by the highest marginal posterior density function $f(\Lambda, \beta, \Sigma \mid y)$ obtained by integrating out the scale parameters $\tau_0, \gamma_h, \phi_{jh}$ ($j=1,\ldots,p$, $h=1,\ldots$) and the latent factors $\eta_i$ ($i=1,\ldots,n$) from the posterior density function.
Formally, we select the iteration $t^* \in \{1,\ldots,T\}$ such that
\begin{equation*}
f(\Lambda^{(t^*)}, \beta^{(t^*)}, \Sigma^{(t^*)} \mid y)>f(\Lambda^{(t)}, \beta^{(t)}, \Sigma^{(t)} \mid y) \quad (t=1,\ldots,T),
\end{equation*}
where $t=1,\ldots,T$ indexes the posterior samples. Under the structured increasing shrinkage prior described in Section~\ref{subsec:sis-model}, these computations are straightforward. The matrices $\Lambda^{(t^*)}, \beta^{(t^*)}, \Sigma^{(t^*)}$ are Monte Carlo approximations of the maximum \textit{a posteriori} estimator, which corresponds to the Bayes estimator under $L_{\infty}$ loss. 
Although one can argue that $L_{\infty}$ is not an ideal choice of loss philosophically in continuous parameter problems, it nonetheless is an appealing pragmatic choice in our context and is broadly used in other sparse estimation contexts, as in the algorithm proposed by \citet{rockova2016} that similarly aims to recover a strongly sparse posterior mode of an over-parameterized factor model.

\section{Simulation experiments}
\label{sec:sim}
We assess the performance of our structured increasing shrinkage prior compared with current approaches 
\citep{Bhattacharya2011, rockova2016, Legramanti2020}
through a simulation study. 
We have a particular interest in inferring sparse and interpretable loadings matrices $\Lambda$, but also assess performance in estimating the induced covariance matrix $\Omega$ and number of factors.
We generate synthetic data from four scenarios based on different loadings structures. For each scenario we simulate $R=25$ data sets with $n=250$ observations from 
$
y_i \sim N_p(0, \Lambda_0 \Lambda_0^T + I_p) \ (i = 1,\ldots, n)$.
In Scenario a, we assume non sparse $\Lambda_0$, sampling the loadings $\lambda_{jh}$ from a Gaussian distribution with mean zero, variance equal to $\sigma^2_{\lambda}=1$ and ordering them to obtain decreasing variance over the columns. To ensure that each element $\lambda_{jh}$ represents a signal, we shifted them away from zero by $\sigma^2_{\lambda}/3$. In Scenario b we remove the decreasing behaviour and introduce a random sparsity pattern characterized by an increasing number of zero entries over the column index. The loadings matrix for Scenario c is characterized by both the decreasing behaviour over the columns of Scenario a  and the random sparsity structure of Scenario b. Finally, in Scenario d, while the decreasing behaviour is kept, we induce a sparsity pattern dependent on a categorical and two continuous meta covariates $x_0$. Details are reported in Section S2.2 of the Supplementary Material.

For each scenario we consider four combinations of dimension and sparsity level of $\Lambda_0$. We let $(p,k,s) \in \{$(16,4,0.6), (32,8,0.4), (64,12, 0.3), (128,16,0.2)$\}$, where $s$ is the proportion of non-zero entries of $\Lambda$, with the exception of Scenario a where $s=1$. 
\label{rev:comp-time}In these settings the algorithm takes from $0.07$ to $0.73$ seconds of computational time per iteration depending on the dimension $p$ and considering an {\sf R} implementation on an Intel Core i5-6200U CPU laptop with 15.8 GB of RAM.
To estimate the structured increasing shrinkage model, we set $x$ equal to the $p$-variate column vector of $1$s, $\sigma_\beta=1$ and,  consistently with Corollary~\ref{co:asymp-sparse}, $c_p = 2 e \log(p)/p$.
In Scenario d we also estimate and compare a correctly specified structured increasing model with $x=x_0$. For the method proposed by \citet{rockova2016}, we set the hyperparameters as suggested by the authors, while for the remaining approaches, we follow the hyperparameter specification and factor selection guidelines in Section 4 of \citet{schiavon2020}. 

\begin{table}[h]
	\def~{\hphantom{0}}
	\setlength{\tabcolsep}{12pt}
	\caption{Median and interquartile range of LPML and $E(H_a\mid y)$ in 25 replications of Scenario a for different combinations of $(p,k)$; Scenario a is a worst case for the proposed SIS method.}{
		\begin{tabular}{lccccccc}
			& $(p,k)$ & \multicolumn{2}{c}{MGP} & \multicolumn{2}{c}{CUSP} & \multicolumn{2}{c}{SIS} \\
			& & Q$_{0.5}$ & IQR & Q$_{0.5}$ & IQR & Q$_{0.5}$ & IQR \\
			LPML  & ~~(16,4) & ~-28.68 & 0.42 & ~-28.68 & 0.43 & ~-28.65 & 0.41 \\
			& ~~(32,8) & ~-60.08 & 0.45 & ~-60.09 & 0.45 & ~-60.07 & 0.49 \\ 
			& ~(64,12) & -117.68 & 0.56 & -117.75 & 0.53 & -117.88 & 0.56 \\
			& (128,16) & -225.04 & 1.04 & -225.13 & 1.04 & -228.76 & 1.47 \\  
			$E(H_a\mid y)$ & ~~(16,4) & ~~8.17 & 1.44 & ~~4.00 & 0.00 & ~~4.00 & 0.00\\
			& ~~(32,8) & ~10.68 & 0.33 & ~~8.00 & 0.00 & ~~8.00 & 0.00\\    
			& ~(64,12) & ~14.16 & 1.09 & ~12.00 & 0.00 & ~12.00 & 0.00 \\
			& (128,16) & ~17.03 & 0.47 & ~16.00 & 0.00 & ~18.00 & 0.02 
	\end{tabular}}
	\label{tab:scenarioA}       
	{\footnotesize
		LPML, logarithm of the pseudo-marginal likelihood;
		CUSP, cumulative shrinkage process; MGP, multiplicative gamma process; SIS, structured increasing shrinkage process; Q$_{0.5}$, median; IQR, interquartile range.
	}
\end{table}

Scenario a is a worst case for the proposed method since there is no sparsity, no structure, and the elements of the loadings matrix are similar in magnitude.  However, even in this case,  structured increasing shrinkage performs essentially identically to the best competitor, as illustrated by the results in Table~\ref{tab:scenarioA}.  
The results of \citet{rockova2016} are not reported as they are not competitive, as can be seen in table S2 in the Supplementary Material.
We report the median and interquartile range over the $R$ replicates of the logarithm of the pseudo-marginal likelihood  \citep{gelfand1994} and of the estimated posterior mean of the number of factors $E(H_a\mid y)$. 

\begin{figure}[ht]
	\centering
	\includegraphics[width=.8\textwidth]{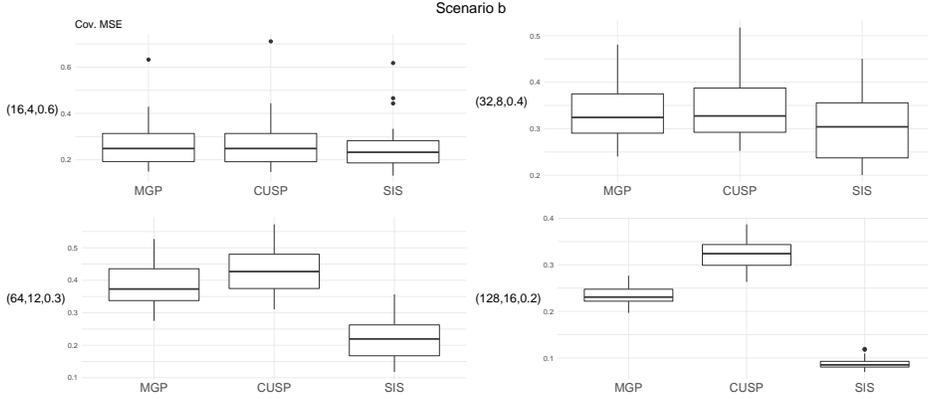} 
	\caption{Boxplots of mean squared error of the covariance matrix of each model for different combinations of $(p,k,s)$ in Scenario b. Cov. MSE, covariance mean squared error; CUSP, cumulative shrinkage process; MGP, multiplicative gamma process; SIS, structured increasing shrinkage process.}
	\label{fig:sim-plot-b}
\end{figure}

Scenario b judges performance in detecting sparsity. The proposed approach shows better performance in the logarithm of the pseudo-marginal likelihood and mean squared error of the covariance matrix, particularly as sparsity increases, as displayed in Fig.~\ref{fig:sim-plot-b}. Consistently with \citep{Legramanti2020}, the covariance mean squared error is estimated in each simulation by $\sum_{j,l}^{p} \sum_{t=1}^S (\omega_{jl}^{(t)}-\omega_{jl0})^2/\{p(p+1)/2\}$, where $\omega_{jl0}$ and $\omega_{jl}^{(t)}$ are the elements $jl$ of $\Omega_0 = \Lambda_0 \Lambda_0^\T +I_p$ and $\Omega^{(t)}= \Lambda^{(t)} \Lambda^{(t)\T} +I_p$, respectively.
The proposed approach allows exact zeros in the loadings, while the competitors require thresholding to infer sparsity.  Following the thresholding approach described in Section S2.2 of the Supplementary Material, we evaluate performance in inferring the sparsity pattern via the mean classification error:
\begin{equation*}
MCE = \frac{1}{S} \sum_{t=1}^{S} \frac{\sum_{j=1}^{p}\sum_{h=1}^{k^{*(t)}} |\mathbbm{1}{(\lambda_{jh0}= 0)} - \mathbbm{1}{(\lambda_{jh}^{(t)}= 0)}|}{p k},
\end{equation*}
where $k^{*(t)}$ is the maximum between the true number of factors $k$ and $H_a^{(t)}$, and $\lambda_{jh0}$ and $\lambda_{jh}^{(t)}$ are the elements $jh$ of $\Lambda_0$ and $\Lambda^{(t)}$, respectively. If $H_a^{(t)}$ or $k$ are smaller than $k^*$, we fix the higher indexed columns at zero, possibly leading to a mean classification error bigger than one.
The results reported in Table~\ref{tab:scenarioB} show that the proposed structured increasing shrinkage prior is much more effective in identifying sparsity in $\Lambda$, maintaining good performance even with large $p$ and in strongly sparse contexts. Also, more accurate estimation of the number of factors is obtained, as reported in Table S1 in the Supplementary Material.

\begin{table}[h]
	\def~{\hphantom{0}}
	\setlength{\tabcolsep}{12pt}
	\caption{Median and interquartile range of the mean classification error computed in 25 replications assuming Scenario b and several combinations of $(p,k,s)$}{
		\begin{tabular}{lccccccc}
			MCE & $(p,k,s)$ & \multicolumn{2}{c}{MGP} &  \multicolumn{2}{c}{CUSP} &  \multicolumn{2}{c}{SIS} \\
			& & Q$_{0.5}$ & IQR & Q$_{0.5}$ & IQR & Q$_{0.5}$ & IQR  \\
			& ~~(16,4,0.6) & 1.06 & 0.16 & 0.38 & 0.01 & 0.24 & 0.09  \\
			& ~~(32,8,0.4) & 0.70 & 0.07 & 0.48 & 0.08 &  0.16 & 0.09 \\ 
			& ~(64,12,0.3) & 0.61 & 0.07 & 0.58 & 0.01 &  0.09 & 0.06 \\
			& (128,16,0.2) & 0.49 & 0.03  & 0.52 & 0.08 & 0.04 & 0.01 
	\end{tabular}}
	\label{tab:scenarioB}       
	{\footnotesize
		MCE, mean classification error; MGP, multiplicative gamma process; CUSP, cumulative shrinkage process; SIS, structured increasing shrinkage process;  Q$_{0.5}$, median; IQR, interquartile range.
	}
\end{table}

Similar comments apply in Scenarios c and d reported in Fig. S2 in the Supplementary Material. The superior performance of the structured increasing shrinkage model is only partially mitigated in Scenario c for large $p$ for the logarithm of the pseudo-marginal likelihood. In Scenario d, the use of meta covariates has a mild benefit in identifying the sparsity pattern. In lower signal-to-noise settings, meta covariates have a bigger impact, and they also aid interpretation, as illustrated in the next section. Additional details, tables, and plots for all scenarios are reported in Section S2.3 of the Supplementary Material.

\section{Finnish bird co-occurrence application}
\label{sec:application}
We illustrate our approach by modelling co-occurrence of the fifty most common bird species in Finland \citep{lindstrom2015},
focusing on data in 2014. 
Response $y$ is an $n\times p$ binary matrix denoting occurrence of $p=50$ species in $n=137$ sampling areas. 
An $n \times c$ environmental covariate matrix $w$ is available, including a 5-level habitat type, `spring temperature' (mean temperature in April and May), and $(\mbox{spring temperature})^2$, leading to $c=7$. We consider a meta covariate $p \times q$ matrix $x$ of species traits: logarithm of typical body mass, migratory strategy (short-distance migrant, resident species, long-distance migrant), and a 7-level superfamily index. We model species presence or absence via a multivariate probit regression model: 
\begin{equation}
\label{eq:app}
y_{ij} = \mathbbm{1}{(z_{ij}>0)}, \quad  z_{ij} = w_i^T \mu_j + \epsilon_{ij},\quad
\epsilon_i = (\epsilon_{i1},\ldots,\epsilon_{ip})^T \sim N_p(0,\Lambda \Lambda^T + I_p), 
\end{equation}
where $\mu_j$ characterizes impact of environmental covariates on species occurrence probabilities, and 
covariance in the latent $z_i$ vector is characterized through a factor model. To borrow information across species while incorporating species traits, we let 
\begin{equation}
\label{eq:app-mean}
\mu_j \sim N_c( b \, x_j , \sigma^2_\mu I_c), \quad b= (b_{1}, \ldots, b_{q}), \quad  b_{m} \sim N_c(0, \sigma^2_b I_c),
\end{equation}
where $b$ is a $c \times q$ coefficient matrix with column vectors $b_m$ given Gaussian priors.

Model \eqref{eq:app}--\eqref{eq:app-mean} is consistent with popular joint species distribution models
\citep{ovaskainen2016, tikhonov2017,ovaskainen2020}, with current standard practice using a multiplicative gamma process for $\Lambda$.  We compare this approach to an analysis that instead uses our proposed structured increasing shrinkage prior to allow the species traits $x$ to impact $\Lambda$ and hence the covariance structure across species.
After standardizing $w$ and $x$, we set $\alpha=4$, $a_\theta=b_\theta=2$ and $\sigma_\mu=\sigma_b=1$. 
Posterior sampling is straightforward via a Gibbs sampler reported in Section S3.1 of the Supplementary Material.

Figure S8 in the Supplementary Material displays the posterior means of $\mu$ and $b$. A first investigation shows large heterogeneity of the habitat type effects across different species. Matrix $b$ shows that covariate effects tend to not depend on
migratory strategy or body mass, with the exception of urban habitats tending to have more
migratory birds.

The estimated $\Lambda$ and meta covariate coefficients $\beta$, following the guidelines of  Section~\ref{subsec:identifiability}, are displayed in Fig.~\ref{fig:application}.
\begin{figure}[h]
	\centering
	\includegraphics[width=.95\textwidth]{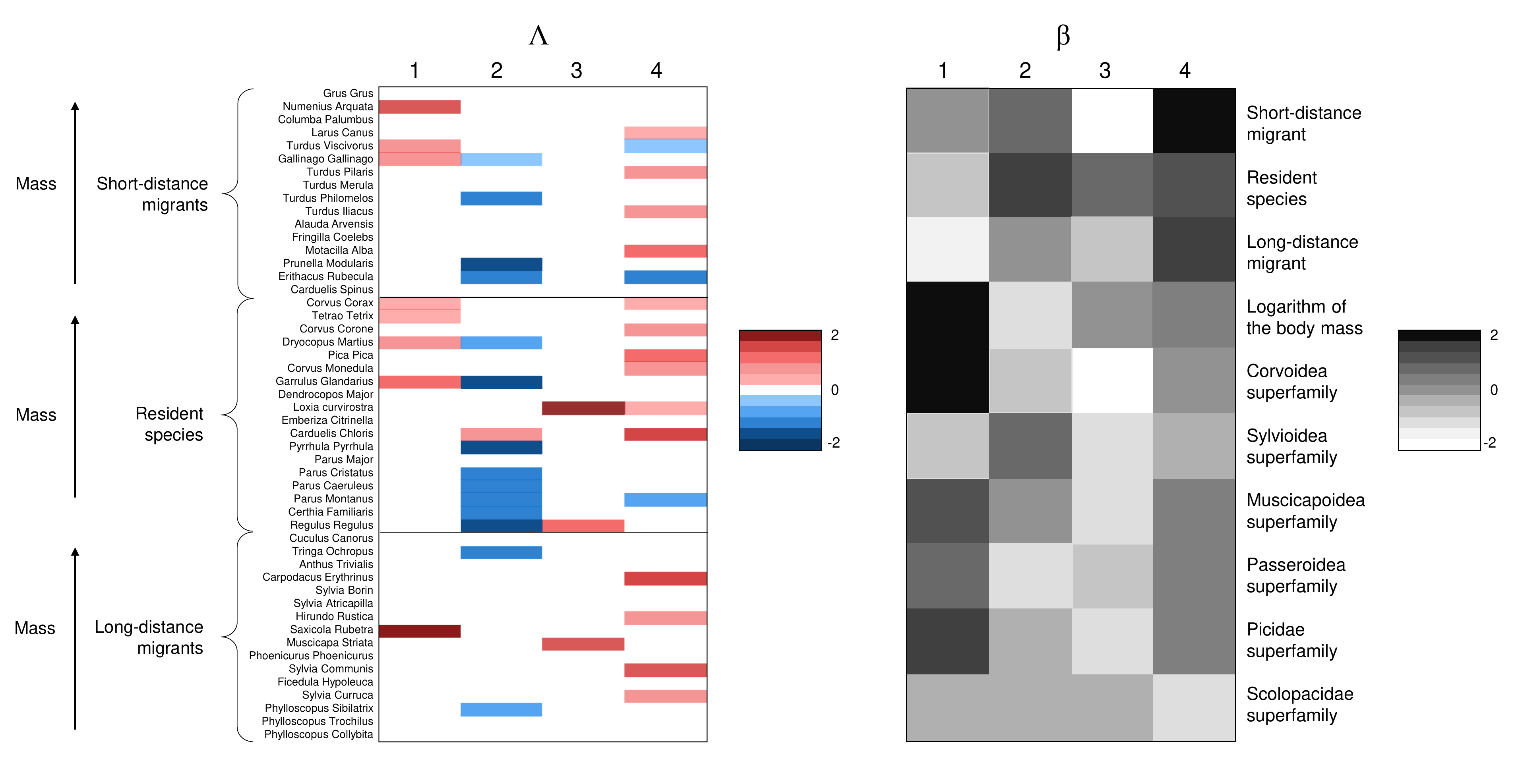}
	\caption{Posterior summaries $\Lambda^{(t^*)}$ and $\beta^{(t^*)}$ of the structured increasing shrinkage model; rows of left matrix refer to 50 birds species, and rows of right matrix to ten species traits. Light coloured cells of $\beta^{(t^*)}$ induce shrinkage on corresponding cells of $\Lambda^{(t^*)}$.}
	\label{fig:application}
\end{figure}
The loadings matrix is quite sparse, indicating that each latent factor impacts a small group of species. Positive sign of the loadings means that high levels of the corresponding factors increase the probability of observing birds from those species.
Lower elements of $\beta^{(t^*)}$, represented with light cells on the right panel, induce higher shrinkage on the corresponding group of birds.
To facilitate interpretation, we re-arrange the rows of $\Lambda^{(t^*)}$ according to the most relevant species traits in terms of shrinkage, which are migration strategy and body mass.
The species influenced by the first factor are fairly homogeneous, characterized by short distance or resident migratory strategies and a larger body mass. 
The strongly negative value of $\beta_{(t^*)42}$ suggests heavier species of birds tend to have loadings close to zero for the second factor.  This is also true for the third factor, which also does not impact short-distance migrants. 

Figure S9 in the Supplementary Material shows a spatial map of the sampling units coloured accordingly to the values of the first and the third latent factors.
We can interpret these latent factors as unobserved environmental covariates. 
We find that the species traits included in our analysis only partially explain the loadings structure; this is as expected and provides motivation for the proposed approach.
\begin{figure}
	\centering
	\includegraphics[width=.95\textwidth]{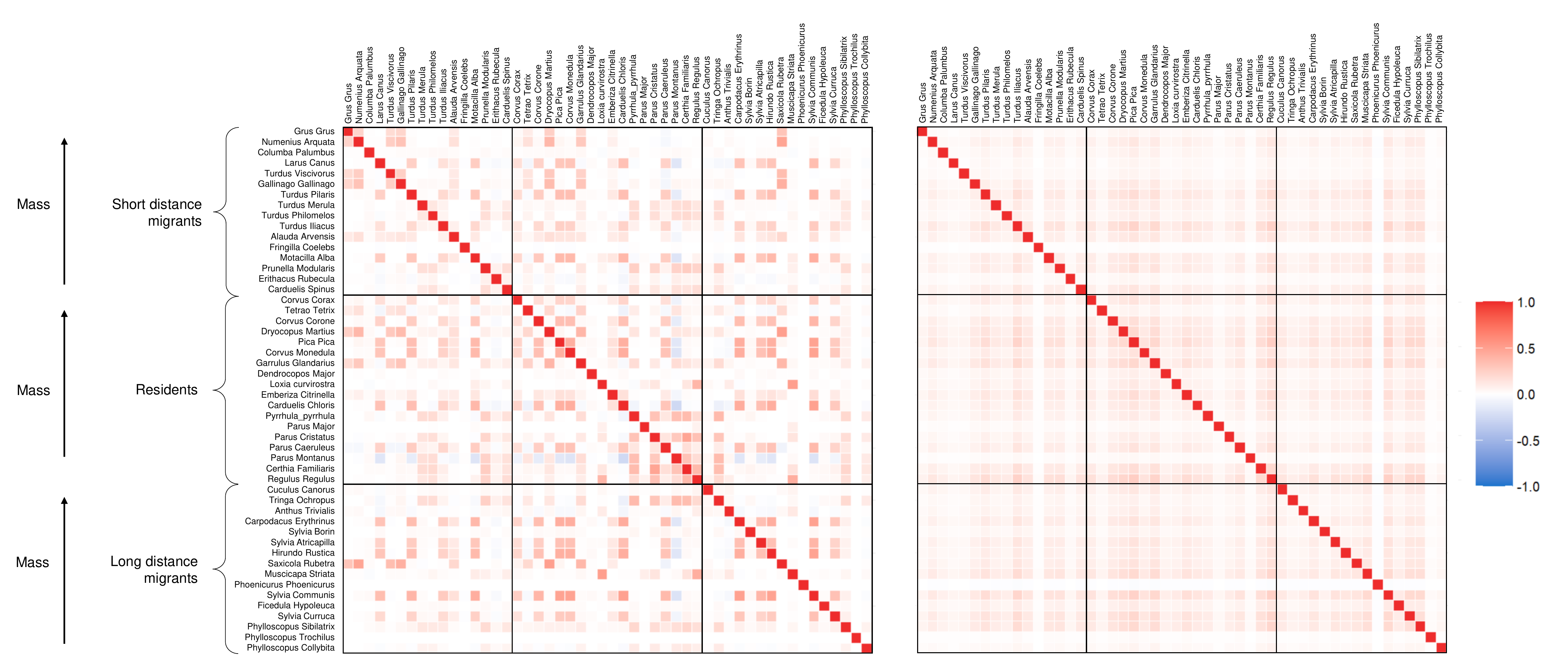} 
	\caption{Posterior mean of the correlation matrices estimated by the structured increasing shrinkage model (on the left) and the multiplicative gamma process model (on the right). 
	}
	\label{fig:application4}
\end{figure}
\begin{figure}
	\centering
	\includegraphics[width=.95\textwidth]{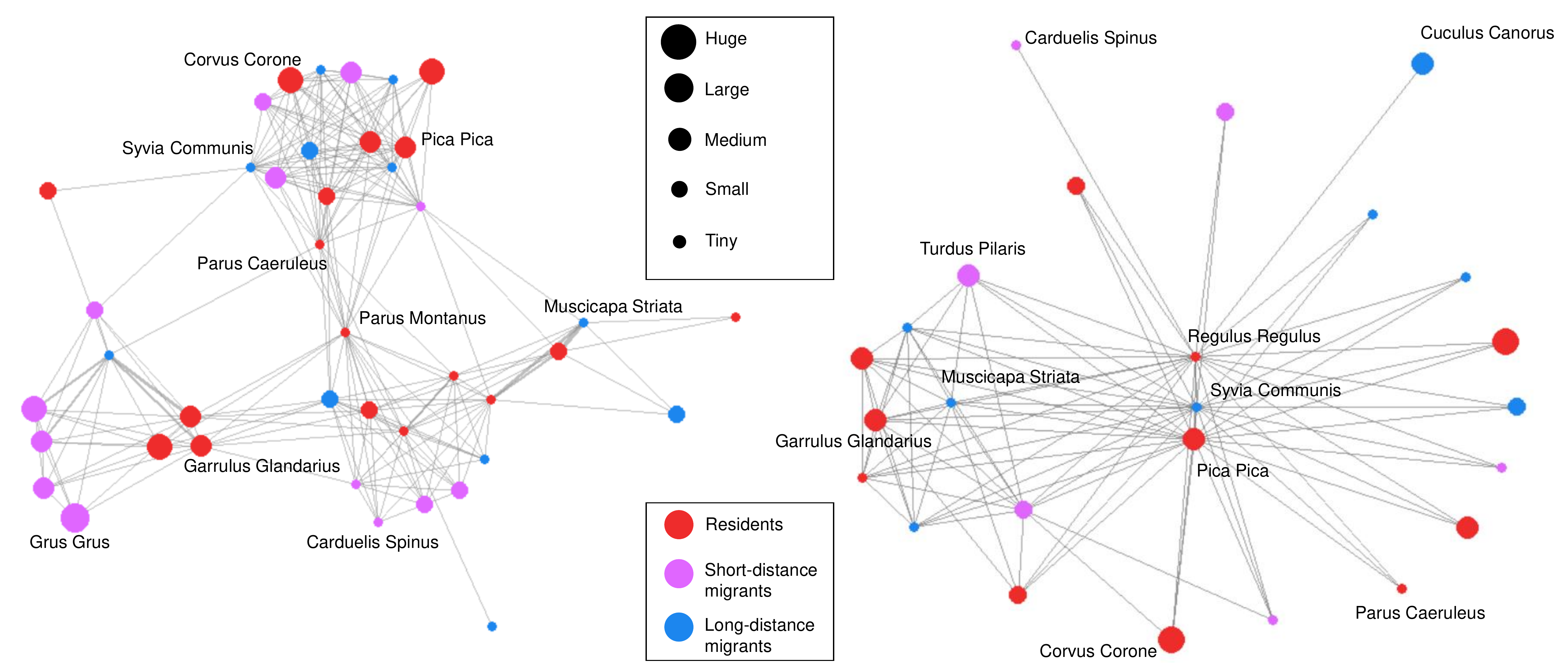}
	\caption{
		Graphical representation based on the inverse of the posterior mean of the correlation matrices estimated by the structured increasing shrinkage model (on the left) and the multiplicative gamma process model (on the right). Edge thicknesses are proportional to the latent partial correlations between species. Values below 0.025 are not reported.  Nodes are positioned using a Fruchterman–Reingold force-direct algorithm. }
	\label{fig:application3}
\end{figure}
Sparsity in the loadings matrix helps in interpretation.  Species may load on the same factor not just because they have similar traits but also because they tend to favor similar habitats for reasons not captured by the measured traits. %


The induced covariance matrix $\Omega = \Lambda \Lambda^T + I_p$ across species is of particular interest.  We compare estimates of $\Omega$ under the multiplicative gamma process, estimated using the {\sf R} package hmsc \citep{tikhonov2020}, and our proposed structured increasing shrinkage model. 
Figure \ref{fig:application4} reports the posterior mean of the correlation matrices under the two competing models.
The network graph based on the posterior mean of the partial correlation matrices, reported in Fig. \ref{fig:application3}, reveals several communities of species under the proposed structured increasing shrinkage prior that are not evident under the multiplicative gamma.  

We also find that the multiplicative gamma process provides a slightly worse fit to the data.  The logarithm of the pseudo marginal likelihood computed on the posterior samples of the structured increasing shrinkage model is equal to $-21.06$, higher than that achieved by the competing model, which is $-21.36$. 
\label{rev:CV-loglik} Using 4-fold cross-validation, we compared the log-likelihood evaluated in the held-out data, with $\mu$ and $\Omega$ estimated by the posterior mean in the training set.
The mean of the log-likelihood was $-22.62$ under the structured increasing shrinkage and $-23.22$ under the multiplicative gamma process prior.

\section*{Acknowledgement}
The authors thank Daniele Durante,  Sirio Legramanti, Otso Ovaskainen, and Gleb Tikhonov for useful comments on an early version of this manuscript. 

\clearpage



\appendix

\section*{Appendix}

\subsection*{Lemmas and proofs}
\begin{proof}[of Theorem~\ref{th:var-dec}] The variance of $\lambda_{jh}$ is
	\begin{equation*}
	\text{var}(\lambda_{j h})=E\{E(\lambda_{j h}^2\mid \phi_{jh}, \gamma_h,  \tau_0)\}= E\{E(\theta_{jh} \mid \phi_{jh}, \gamma_h,  \tau_0)\}.
	\end{equation*}
	By construction, $E(\theta_{jh} \mid \phi_{jh}, \gamma_h,  \tau_0)=\phi_{j h} \gamma_h \tau_0$.
	Then,
	\begin{equation*}
	\text{var}(\lambda_{j h})=E(\phi_{j h} \gamma_h \tau_0)=E(\phi_{j 1}) E(\gamma_h)  E(\tau_0)>E(\phi_{j1}) E(\gamma_{h+1})  E(\tau_0)=\text{var}(\lambda_{j h+1}),
	\end{equation*}
	since the scale parameters are independent and the local scale $\phi_{j h}$ is equally distributed over the column index $h$. 
\end{proof}

To prove Proposition~\ref{pr:power-law} we need to introduce the following Lemma.
\begin{lemma}
	\label{lem:prod}
	Let $u, v$ denote two real positive random variables. If at least one among $(u\mid v)$ and $(v\mid u)$ is power law tail distributed, then the product $uv$ is power law tail distributed.
\end{lemma}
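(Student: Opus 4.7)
The plan is to transfer the power-law tail of the conditional distribution into a tail bound for the product $uv$ by restricting attention to an event where $v$ is uniformly bounded away from zero. By the symmetric role of $u$ and $v$ in the hypothesis I assume, without loss of generality, that $(u\mid v)$ is power law tail distributed, so that for (almost) every value of $v$ there are constants $c(v),\alpha(v),L(v)>0$ with $\text{pr}(u>t\mid v)\geq c(v)\,t^{-\alpha(v)}$ for all $t>L(v)$.

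The key algebraic step I would use is the containment $\{u>t/v_0\}\cap\{v\geq v_0\}\subseteq\{uv>t\}$, valid for any $v_0>0$, which yields
\begin{equation*}
\text{pr}(uv>t)\;\geq\;\text{pr}(u>t/v_0,\ v\geq v_0)\;=\;E\bigl\{\mathbbm{1}(v\geq v_0)\,\text{pr}(u>t/v_0\mid v)\bigr\}.
\end{equation*}
If I can choose $v_0>0$ together with a subset $E\subseteq\{v\geq v_0\}$ of positive probability on which the conditional tail constants are uniformly controlled ($c(v)\geq c_\star$, $\alpha(v)\leq \alpha_\star$, $L(v)\leq L_\star$), then for every $t>L_\star v_0$ the right-hand side is at least $c_\star v_0^{\alpha_\star}\,\text{pr}(E)\,t^{-\alpha_\star}$, which is exactly the power-law tail condition for $uv$ with exponent $\alpha_\star$.

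The main obstacle is justifying the uniform control over $E$. In the setting that matters for Proposition~\ref{pr:power-law}, $u$ and $v$ will be independent, so the conditional distribution of $u$ given $v$ coincides with its marginal; then $c(v),\alpha(v),L(v)$ are global constants and any $v_0>0$ with $\text{pr}(v\geq v_0)>0$ suffices, which exists because $v>0$ almost surely. In the fully general case, I would handle the potential $v$-dependence of the constants by a standard exhaustion argument on the increasing family
\begin{equation*}
E_n=\bigl\{v:\ 1/n\leq v\leq n,\ c(v)\geq 1/n,\ \alpha(v)\leq n,\ L(v)\leq n\bigr\},
\end{equation*}
whose union has full $v$-measure; hence $\text{pr}(E_n)>0$ for some $n$, and the previous display applied on $E_n$ finishes the argument.
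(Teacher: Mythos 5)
Your argument is correct, but it takes a different route from the paper's. The paper keeps the full expectation over $v$: it writes $\text{pr}(uv>w)=E\{F^{C}_{u\mid v}(w/v)\}$, inserts the conditional bound $c\,(w/v)^{-\alpha}$ everywhere, and obtains $c\,w^{-\alpha}E(v^{\alpha})$, then splits into cases according to whether $E(v^{\alpha})$ is finite. You instead discard all of $\{v<v_0\}$ via the containment $\{u>t/v_0\}\cap\{v\geq v_0\}\subseteq\{uv>t\}$ and only need a positive-probability event on which the conditional tail constants are uniform. What your truncation buys is rigour on two points the paper glosses over: (i) the paper applies the bound $c(w/v)^{-\alpha}$ also where $w/v\leq L$, where the power-law hypothesis gives nothing (and the bound may even exceed $1$), whereas your restriction $t/v_0>L_\star$ keeps you inside the regime where the hypothesis applies; and (ii) you explicitly address possible $v$-dependence of $(c,\alpha,L)$ through the exhaustion $E_n$, which the paper silently assumes away by writing uniform constants. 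What you lose is the sharper constant: the paper's bound carries the factor $E(v^{\alpha})$, while yours only retains the contribution of the event $E$, and your exponent $\alpha_\star$ may be worse than the essential infimum of $\alpha(v)$; neither loss matters for the qualitative conclusion or for its use in Proposition~2, where $u$ and $v$ are independent and the constants are automatically global. The only loose end in your write-up is the implicit measurability of $v\mapsto\{c(v),\alpha(v),L(v)\}$ needed to form $E_n$; this is a standard technicality (choose the constants canonically, e.g.\ via countable suprema) and worth a sentence.
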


\begin{proof}
	For a positive value $w$, we can write
	\begin{align*}
	\text{pr}(u v > w) &= \int_0^\infty \text{pr}(u >w/v \mid v) f(v) \text{d}v
	= E\{F_{u\mid v}^C (w/v) \},
	\end{align*}
	where $F_{u \mid v}^C(w) = \text{pr}(u>w \mid v)$ and $f(v)$ is the probability density function of $v$. 
	If $F_{u \mid v}^C(w) \geq c \, w^{-\alpha}$ with $c, \alpha$ positive constants and $w$ greater than a sufficiently large number $L$, then
	\begin{equation*}
	\text{pr}(u v > w) \geq E\{ c \, (w/v)^{-\alpha} \}= c \,w^{-\alpha} E(v^{\alpha}) \quad w>L\gg0.
	\end{equation*}
	If $E( v^{\alpha} ) =\infty$, then $\text{pr}(u v > w) > c \,w^{-\alpha}= O(w^{-\alpha})$, otherwise $\text{pr}(u v > w) \geq \nu(w)$ for $w>L$, with $\nu(w)$ a function of order $O(w^{-\alpha})$ as $w$ goes to infinity. This shows that the right tail of the distribution of the random variable $uv$ follows a power law behaviour. 
\end{proof}

\begin{proof}[of Proposition~\ref{pr:power-law}]
	Consider the strictly positive random variables $\theta_{jh}^*=(\theta_{jh} \mid \theta_{jh}>0)$, $\tau_0^*=(\tau_{0} \mid \tau_{0}>0)$, $\gamma_{h}^*=(\gamma_{h} \mid \gamma_{h}>0)$, and $\phi_{jh}^*=(\phi_{jh} \mid \phi_{jh}>0)$.
	Since $\theta_{jh}^{*}$ is equal to the product $\tau^{*} \gamma_h^{*} \phi_{jh}^{*}$ of independent positive random variables,   Lemma~\ref{lem:prod} ensures that if at least one of those scale parameters follows a power law tail distribution, then $\theta_{jh}^*$ is power law tail distributed, so that  $\text{pr}(\theta_{jh}^{*}>\theta)\geq c\, \theta^{-\alpha}$ for $c, \alpha$ positive constants and $\theta>L$.
	Without loss of generality, we focus on the right tail of $\lambda_{jh}$.
	Let 
	\begin{equation}
	\label{eq:lam-dec}
	\text{pr}(\lambda_{j h}>\lambda) = \text{pr}(\lambda_{j h}>\lambda \mid  \theta_{jh}>0) \,\text{pr}(\theta_{jh}>0)+\text{pr}(\lambda_{j h}>\lambda \mid  \theta_{jh}=0) \,\text{pr}(\theta_{jh}=0).
	\end{equation}
	It is straightforward to observe that $\lambda_{j h}$ marginally has a power law tail if and only if
	$(\lambda_{j h} \mid \theta_{jh}>0)$ is power law tail distributed and $\text{pr}(\theta_{jh}>0)$ is strictly positive.
	Since $\text{pr}(\tau_0>0)>0$, $\text{pr}(\gamma_h>0)>0$, and $\text{pr}(\phi_{jh}>0)>0$, then
	$
	\text{pr}(\theta_{jh}>0)>0,
	$
	given independence between the scale parameters.
	Focusing on $\theta_{jh}>0$ in the first term of the right hand side of \eqref{eq:lam-dec}, we have
	$$
	\text{pr}(\lambda_{j h}>\lambda \mid \theta_{jh}^{*})=1- \Phi(\lambda \,\theta_{jh}^{*\,-0.5}),
	$$
	and we want to prove that the marginal $F^{c}_{\lambda_{jh}}(\lambda)=\text{pr}(\lambda_{j h}>\lambda)$ is sub-exponential as $\lambda \rightarrow \infty$. Using the lower bound for the right tail of the standard Gaussian of 
	\cite{abramowitz1948}, 
	\begin{equation*}
	1- \Phi(\lambda \,\theta_{jh}^{*\,-0.5}) \geq \left(\frac{2}{\pi}\right)^{0.5} \frac{\theta_{jh}^{*\,0.5}}{\lambda+(\lambda^2+4\theta_{jh}^*)^{0.5}}  e^{-\lambda^2/(2\theta_{jh}^{*})}.
	\end{equation*}
	Marginalizing over $\theta_{jh}^{*}$, we obtain
	\begin{equation*}
	\text{pr}(\lambda_{j h}>\lambda \mid \theta_{jh}^*) \geq E\left\{\left(\frac{2}{\pi}\right)^{0.5} \frac{\theta_{jh}^{*\,0.5}}{\lambda+(\lambda^2+4\theta_{jh}^{*})^{0.5}}  e^{-\lambda^2/(2\theta_{jh}^{*})}\right\} = E\left\{t_{\lambda}(\theta_{jh}^{*})\right\},
	\end{equation*}
	where $t_{\lambda}(\theta_{jh}^{*})$ is a monotonically increasing nonnegative function defined on the positive real line.
	Applying Markov's inequality, we have $
	E\{t_{\lambda}(\theta_{jh}^{*})\}> \text{pr}(\theta_{jh}^{*}>\epsilon) t_{\lambda}(\epsilon)$,
	and letting $\epsilon=\lambda^2$
	\begin{equation*}
	E\left\{t_\lambda(\theta_{jh}^{*})\right\}> \text{pr}(\theta_{jh}^{*}>\lambda^2) \frac{e^{-0.5}}{1+5^{0.5}} \left(\frac{2}{\pi}\right)^{0.5}.
	\end{equation*}
	If 
	$\text{pr}(\theta_{jh}^{*}>\lambda)\geq  c\, \lambda^{-\alpha}$ for certain $\alpha, c$ positive constants and $\lambda$ sufficiently large, then
	\begin{equation*}
	\text{pr}(\lambda_{j h}>\lambda \mid \theta_{jh}^*) \geq \frac{e^{-0.5}}{1+5^{0.5}} \left(\frac{2}{\pi}\right)^{0.5} c\, \lambda^{-2\alpha} = \tilde{c} \lambda^{-\tilde{\alpha}},
	\end{equation*}
	where $\tilde{c} = e^{-0.5} (1+5^{0.5})^{-1} (2/\pi)^{0.5} c>0$ and $\tilde{\alpha}=\alpha/2>0$. By symmetry, $\text{pr}(\lambda_{j h}< -\lambda \mid \theta_{jh}>0) \geq  \tilde{c} \lambda^{-\tilde{\alpha}}$ for $\lambda>L$ sufficiently large.
	It is sufficient that the marginal distribution of $\theta_{j h}^{*}$ has power law right tail to guarantee that $(\lambda_{j h} \mid \theta_{jh}>0)$ has power law tail and then that marginally $\lambda_{jh}$ has power law tail.
\end{proof}

\begin{proof}[of Theorem~\ref{th:robustness}]
	The mode of the conditional posterior density of $\lambda_{jh}$ is $\tilde{\lambda}_{jh}$ such that
	\begin{equation}
	\label{eq:map}
	l_s(\tilde{\lambda}_{jh}; y, \eta)+ \frac{\partial}{\partial \lambda} \log\{f_{\lambda_{jh} \mid \Lambda_{-jh}} (\lambda)\} \big|_{\lambda=\tilde{\lambda}_{jh}}=0,
	\end{equation}
	where $l_s(\tilde{\lambda}_{jh}; y, \eta)$ is the $jh$th element of the score function of the likelihood for the data $y$ conditionally on the latent variables $\eta$, and $f_{\lambda_{jh} \mid \Lambda_{-jh}}$ is the conditional prior density function of $(\lambda_{jh}\mid \Lambda_{-jh})$.
	Given prior symmetry, without loss of generality, we focus on $\hat{\lambda}_{jh}>0$.
	In a neighbourhood $(\hat{\lambda}_{jh}-\varepsilon, \hat{\lambda}_{jh}+\varepsilon)$ of the conditional maximum likelihood estimate $\hat{\lambda}_{jh}$ of $\lambda_{jh}$, we can approximate the score function using a Taylor expansion:
	$l_s(\lambda; y) = -  \mathcal{J}(\hat{\lambda}_{jh}) \, (\lambda - \hat{\lambda}_{jh}) + \epsilon_\varepsilon$,
	where $\mathcal{J}(\hat{\lambda}_{jh})>0$ is the negative of the derivative of $l_s(\lambda;y)$ evaluated at $\lambda = \hat{\lambda}_{jh} $, and 
	$\epsilon_\varepsilon$ is an approximation error term such that 
	$\lim_{\varepsilon \to 0} \epsilon_\varepsilon/\varepsilon = 0$. 
	For $\hat{\lambda}_{jh}$ large enough, such that $\hat{\lambda}_{jh}-\varepsilon>L$ with $L\gg0$, Lemma \ref{lem:score} holds for every $\lambda$ in $(\hat{\lambda}_{jh}-\varepsilon, \hat{\lambda}_{jh}+\varepsilon)$, leading to the lower bound
	\begin{equation*}
	-\mathcal{J}(\hat{\lambda}_{jh}) \, (\lambda - \hat{\lambda}_{jh}) +  f_{lb}'(\lambda) +\epsilon_{\varepsilon}  \leq l_s(\lambda; y)+ \frac{\partial}{\partial \lambda} \log\{f_{\lambda_{jh} \mid \Lambda_{-jh}} (\lambda)\}, 
	\end{equation*}
	where  $f_{lb}'(\lambda)$ is a non positive continuous function for every $\lambda>0$, $\lim_{\lambda \to +\infty} f_{lb}'(\lambda) = 0$.
	Let $\varepsilon$ be a function of $\hat{\lambda}_{jh}$ such that
	$\lim_{\hat{\lambda}_{jh} \to \infty} \varepsilon = 0$ and
	$\lim_{\hat{\lambda}_{jh}\to \infty} f_{lb}'(\hat{\lambda}_{jh})/ \varepsilon= 0.$   
	The limit for $\hat{\lambda}_{jh} \to \infty$ of the lower bound evaluated in $\hat{\lambda}_{jh}-\varepsilon$ is
	\begin{equation*}
	\lim_{\hat{\lambda}_{jh} \to \infty} \mathcal{J}(\hat{\lambda}_{jh}) \, \varepsilon +  f_{lb}'(\hat{\lambda}_{jh}-\varepsilon) +\epsilon_{\varepsilon} = \lim_{\hat{\lambda}_{jh} \to \infty} |\varepsilon|\, \{\mathcal{J}(\hat{\lambda}_{jh}) +  f_{lb}'(\hat{\lambda}_{jh}-\varepsilon)/|\varepsilon| +\epsilon_{\varepsilon}/|\varepsilon|\}.
	\end{equation*}
	Under Assumption \ref{ass:fisher}, 
	$
	\lim_{\hat{\lambda}_{jh} \to \infty} \mathcal{J}(\hat{\lambda}_{jh}) +  f_{lb}'(\hat{\lambda}_{jh}-\varepsilon)/|\varepsilon| +\epsilon_{\varepsilon}/|\varepsilon| \geq 0,
	$
	which guarantees 
	$
	\hat{\lambda}_{jh}-\varepsilon \leq \tilde{\lambda}_{jh} \leq \hat{\lambda}_{jh},
	$ 
	and, hence  
	$
	\lim_{\hat{\lambda}_{jh} \to \infty}|\tilde{\lambda}_{jh}-\hat{\lambda}_{jh}|=0,
	$
	which proves the theorem.
\end{proof}

\begin{proof}[of Theorem~\ref{th:asym-sparse}]
	Since the local scales are independent, conditionally on $\beta$, we can apply the Chernoff's method and obtain the following upper bound
	\begin{equation*}
	\text{pr}\{|\text{supp}_{\epsilon}(\lambda_{h})|> a s_p \mid \beta_h, \gamma_h, \tau_0\} \leq
	\exp(-s_p a\,t) \,\exp\bigg\{(e^t-1)\, \sum_{j=1}^{p} \zeta_{\epsilon jh}\bigg\},
	\end{equation*}
	for every $t>0$ and $\zeta_{\epsilon jh}=\{\tau_0 \,\gamma_{h}\, g(x_{j}^\T\beta_{h})\}/\epsilon^2$ a function of $\beta_h$.
	Since $g(x_{j}^\T\beta_{h})$ is of order $\le O(\log(p)/p)$ by assumption and is limited above with respect to $\beta_h$, we can deduce
	$g(x_{j}^\T\beta_{h}) \leq  c_{j} \log(p)/p$ for $p$ sufficiently large and for some constant $c_{j}>0$ that does not depend on $\beta_{h}$ and is asymptotically of order $O(1)$ with respect to $p$.
	Then, for $p \gg 0$,
	\begin{align*}
	\sum_{j=1}^{p} g(x_{j}^\T\beta_{h}) &\leq  \sum_{j=1}^{p} c_{j} \log(p)/p
	\leq p\, \log(p)/p \,\max_{1 \leq j \leq p} c_{j}  = m \log(p),
	\end{align*}
	where $m=\max_{1 \leq j \leq p} c_{j}$ does not depend on $\beta_h$.
	Then, the upper bound is
	\begin{equation*}
	\text{pr}\{|\text{supp}_{\epsilon}(\lambda_{h})|> a s_p \mid \beta_h, \gamma_h, \tau_0\} \leq
	\exp\bigg\{-s_p a\,t + (e^t-1)\, \frac{\tau_0 \gamma_h}{\epsilon^2}\, m \log(p) \bigg\}.
	\end{equation*}
	Let us choose $t=\log\{\epsilon^2/(\tau_0 \gamma_h m)+1\}$.
	Since $s_p \geq \log(p) \,c_s$ for a certain $c_s>0$, then, for any $a > (c_s t)^{-1}$, we can write
	\begin{equation*}
	\text{pr}\{|\text{supp}_{\epsilon}(\lambda_{h})|> a s_p \mid \beta_h, \gamma_h, \tau_0\} \leq
	\exp\bigg\{-\log(p) \,\tilde{a} \bigg\},
	\end{equation*}
	where $\tilde{a}$ is a positive constant such that $a = (1+\tilde{a}) (c_s t)^{-1}$.
	The upper bound does not depend on $\beta_h$, so
	\begin{equation*}
	\text{pr}\{|\text{supp}_{\epsilon}(\lambda_{h})|> a s_p \mid \gamma_h, \tau_0\} \leq
	\nu(p)
	\end{equation*}
	with $\nu(p)$ of order $O(p^{-1})$ that goes to zero.
\end{proof}

\nocite{polson2013}

\bibliographystyle{apa}
\bibliography{bib-GIF}

\begin{thebibliography}{}

\bibitem[\protect\astroncite{Abramowitz and Stegun}{1948}]{abramowitz1948}
Abramowitz, M. and Stegun, I.~A. (1948).
\newblock {\em Handbook of mathematical functions with formulas, graphs, and
  mathematical tables}, volume~55.
\newblock US Government printing office.

\bibitem[\protect\astroncite{An et~al.}{2013}]{an2013}
An, X., Yang, Q., and Bentler, P.~M. (2013).
\newblock A latent factor linear mixed model for high-dimensional longitudinal
  data analysis.
\newblock {\em Statistics in medicine}, 32(24):4229--4239.

\bibitem[\protect\astroncite{A{\ss}mann et~al.}{2016}]{assmann2016}
A{\ss}mann, C., Boysen-Hogrefe, J., and Pape, M. (2016).
\newblock Bayesian analysis of static and dynamic factor models: An ex-post
  approach towards the rotation problem.
\newblock {\em Journal of Econometrics}, 192(1):190--206.

\bibitem[\protect\astroncite{Bhattacharya and Dunson}{2011}]{Bhattacharya2011}
Bhattacharya, A. and Dunson, D.~B. (2011).
\newblock {Sparse Bayesian infinite factor models}.
\newblock {\em Biometrika}, 98(2):291--306.

\bibitem[\protect\astroncite{Bhattacharya et~al.}{2015}]{Bhattacharya2015}
Bhattacharya, A., Pati, D., Pillai, N.~S., and Dunson, D.~B. (2015).
\newblock {Dirichlet-Laplace priors for optimal shrinkage}.
\newblock {\em Journal of the American Statistical Association},
  110(512):1479--1490.

\bibitem[\protect\astroncite{Carvalho et~al.}{2010}]{Carvalho2010}
Carvalho, C.~M., Polson, N.~G., and Scott, J.~G. (2010).
\newblock {The horseshoe estimator for sparse signals}.
\newblock {\em Biometrika}, 97(2):465--480.

\bibitem[\protect\astroncite{Dahl}{2006}]{dahl2006}
Dahl, D.~B. (2006).
\newblock Model-based clustering for expression data via a {Dirichlet} process
  mixture model.
\newblock {\em Bayesian inference for gene expression and proteomics},
  4:201--218.

\bibitem[\protect\astroncite{Durante}{2017}]{Durante2017}
Durante, D. (2017).
\newblock {A note on the multiplicative gamma process}.
\newblock {\em Statistics and Probability Letters}, 122:198--204.

\bibitem[\protect\astroncite{Ferrari and Dunson}{2020}]{ferrari2020}
Ferrari, F. and Dunson, D.~B. (2020).
\newblock Bayesian factor analysis for inference on interactions.
\newblock {\em Journal of the American Statistical Association}, pages 1--12.

\bibitem[\protect\astroncite{Gelfand and Dey}{1994}]{gelfand1994}
Gelfand, A.~E. and Dey, D.~K. (1994).
\newblock {Bayesian model choice: asymptotics and exact calculations}.
\newblock {\em Journal of the Royal Statistical Society: Series B
  (Methodological)}, 56(3):501--514.

\bibitem[\protect\astroncite{Jun and Tao}{2013}]{jun2013}
Jun, L. and Tao, D. (2013).
\newblock {Exponential Family Factors for Bayesian Factor Analysis}.
\newblock {\em IEEE Transactions on neural networks and learning systems},
  24(6):964----976.

\bibitem[\protect\astroncite{Legramanti et~al.}{2020}]{Legramanti2020}
Legramanti, S., Durante, D., and Dunson, D.~B. (2020).
\newblock Bayesian cumulative shrinkage for infinite factorizations.
\newblock {\em Biometrika}, 107(3):745--752.

\bibitem[\protect\astroncite{Lindstr{\"o}m et~al.}{2015}]{lindstrom2015}
Lindstr{\"o}m, {\AA}., Green, M., Husby, M., K{\aa}l{\aa}s, J.~A., and
  Lehikoinen, A. (2015).
\newblock Large-scale monitoring of waders on their boreal and arctic breeding
  grounds in northern {Europe}.
\newblock {\em Ardea}, 103(1):3--15.

\bibitem[\protect\astroncite{Liu and Vandenberghe}{2010}]{liu2010}
Liu, Z. and Vandenberghe, L. (2010).
\newblock Interior-point method for nuclear norm approximation with application
  to system identification.
\newblock {\em SIAM Journal on Matrix Analysis and Applications},
  31(3):1235--1256.

\bibitem[\protect\astroncite{Lopes and West}{2004}]{lopes2004}
Lopes, H.~F. and West, M. (2004).
\newblock {Bayesian} model assessment in factor analysis.
\newblock {\em Statistica Sinica}, pages 41--67.

\bibitem[\protect\astroncite{McParland et~al.}{2014}]{mcparland2014}
McParland, D., Gormley, I.~C., McCormick, T.~H., Clark, S.~J., Kabudula, C.~W.,
  and Collinson, M.~A. (2014).
\newblock Clustering south {African} households based on their asset status
  using latent variable models.
\newblock {\em The annals of applied statistics}, 8(2):747.

\bibitem[\protect\astroncite{Miller et~al.}{2019}]{miller2019}
Miller, J.~E., Li, D., LaForgia, M., and Harrison, S. (2019).
\newblock Functional diversity is a passenger but not driver of drought-related
  plant diversity losses in annual grasslands.
\newblock {\em Journal of Ecology}, 107(5):2033--2039.

\bibitem[\protect\astroncite{Miller and Harrison}{2018}]{miller2018}
Miller, J.~W. and Harrison, M.~T. (2018).
\newblock Mixture models with a prior on the number of components.
\newblock {\em Journal of the American Statistical Association},
  113(521):340--356.

\bibitem[\protect\astroncite{Mitchell and Beauchamp}{1988}]{mitchell88}
Mitchell, T.~J. and Beauchamp, J.~J. (1988).
\newblock Bayesian variable selection in linear regression.
\newblock {\em Journal of the American Statistical Association},
  83(404):1023--1036.

\bibitem[\protect\astroncite{Mnih and Salakhutdinov}{2008}]{mnih2008}
Mnih, A. and Salakhutdinov, R.~R. (2008).
\newblock Probabilistic matrix factorization.
\newblock In {\em Advances in neural information processing systems}, pages
  1257--1264.

\bibitem[\protect\astroncite{Montagna et~al.}{2012}]{montagna2012}
Montagna, S., Tokdar, S.~T., Neelon, B., and Dunson, D.~B. (2012).
\newblock {Bayesian} latent factor regression for functional and longitudinal
  data.
\newblock {\em Biometrics}, 68(4):1064--1073.

\bibitem[\protect\astroncite{Murray et~al.}{2013}]{murray2013}
Murray, J.~S., Dunson, D.~B., Carin, L., and Lucas, J.~E. (2013).
\newblock {Bayesian Gaussian copula factor models for mixed data}.
\newblock {\em Journal of the American Statistical Association},
  108(502):656--665.

\bibitem[\protect\astroncite{Ovaskainen and Abrego}{2020}]{ovaskainen2020}
Ovaskainen, O. and Abrego, N. (2020).
\newblock {\em Joint Species Distribution Modelling: With Applications in R}.
\newblock Cambridge University Press.

\bibitem[\protect\astroncite{Ovaskainen et~al.}{2016}]{ovaskainen2016}
Ovaskainen, O., Abrego, N., Halme, P., and Dunson, D. (2016).
\newblock Using latent variable models to identify large networks of
  species-to-species associations at different spatial scales.
\newblock {\em Methods in Ecology and Evolution}, 7(5):549--555.

\bibitem[\protect\astroncite{Polson and Scott}{2010}]{Polson2010}
Polson, N.~G. and Scott, J.~G. (2010).
\newblock {Shrink globally, act locally: Bayesian sparsity and regularization}.
\newblock {\em Bayesian Statistics}, 9:1--16.

\bibitem[\protect\astroncite{Polson et~al.}{2013}]{polson2013}
Polson, N.~G., Scott, J.~G., and Windle, J. (2013).
\newblock {Bayesian} inference for logistic models using p{\'o}lya--gamma
  latent variables.
\newblock {\em Journal of the American statistical Association},
  108(504):1339--1349.

\bibitem[\protect\astroncite{Reich and Bandyopadhyay}{2010}]{reich2010}
Reich, B.~J. and Bandyopadhyay, D. (2010).
\newblock A latent factor model for spatial data with informative missingness.
\newblock {\em The annals of applied statistics}, 4(1):439.

\bibitem[\protect\astroncite{Roberts and Rosenthal}{2007}]{roberts2007}
Roberts, G.~O. and Rosenthal, J.~S. (2007).
\newblock Coupling and ergodicity of adaptive {Markov chain Monte Carlo}
  algorithms.
\newblock {\em Journal of applied probability}, 44(2):458--475.

\bibitem[\protect\astroncite{Ro{\v{c}}kov{\'a} and George}{2016}]{rockova2016}
Ro{\v{c}}kov{\'a}, V. and George, E.~I. (2016).
\newblock Fast bayesian factor analysis via automatic rotations to sparsity.
\newblock {\em Journal of the American Statistical Association},
  111(516):1608--1622.

\bibitem[\protect\astroncite{Rousseau and
  Mengersen}{2011}]{rousseau2011asymptotic}
Rousseau, J. and Mengersen, K. (2011).
\newblock Asymptotic behaviour of the posterior distribution in overfitted
  mixture models.
\newblock {\em Journal of the Royal Statistical Society: Series B (Statistical
  Methodology)}, 73(5):689--710.

\bibitem[\protect\astroncite{Roweis and Ghahramani}{1999}]{roweis1999}
Roweis, S. and Ghahramani, Z. (1999).
\newblock A unifying review of linear {Gaussian} models.
\newblock {\em Neural computation}, 11(2):305--345.

\bibitem[\protect\astroncite{Roy et~al.}{2019}]{roy2019}
Roy, A., Schaich-Borg, J., and Dunson, D.~B. (2019).
\newblock Bayesian time-aligned factor analysis of paired multivariate time
  series.
\newblock {\em arXiv preprint arXiv:1904.12103}.

\bibitem[\protect\astroncite{Schiavon and Canale}{2020}]{schiavon2020}
Schiavon, L. and Canale, A. (2020).
\newblock On the truncation criteria in infinite factor models.
\newblock {\em Stat}, 9:e298.

\bibitem[\protect\astroncite{Thomas et~al.}{2009}]{thomas2009}
Thomas, D.~C., Conti, D.~V., Baurley, J., Nijhout, F., Reed, M., and Ulrich,
  C.~M. (2009).
\newblock Use of pathway information in molecular epidemiology.
\newblock {\em Human genomics}, 4(1):21.

\bibitem[\protect\astroncite{Tikhonov et~al.}{2017}]{tikhonov2017}
Tikhonov, G., Abrego, N., Dunson, D., and Ovaskainen, O. (2017).
\newblock Using joint species distribution models for evaluating how
  species-to-species associations depend on the environmental context.
\newblock {\em Methods in Ecology and Evolution}, 8(4):443--452.

\bibitem[\protect\astroncite{Tikhonov et~al.}{2020}]{tikhonov2020}
Tikhonov, G., Opedal, {\O}.~H., Abrego, N., Lehikoinen, A., de~Jonge, M.~M.,
  Oksanen, J., and Ovaskainen, O. (2020).
\newblock Joint species distribution modelling with the {R-package} hmsc.
\newblock {\em Methods in ecology and evolution}, 11(3):442--447.

\bibitem[\protect\astroncite{Wade et~al.}{2018}]{wade}
Wade, S., Ghahramani, Z., et~al. (2018).
\newblock Bayesian cluster analysis: Point estimation and credible balls (with
  discussion).
\newblock {\em Bayesian Analysis}, 13(2):559--626.

\bibitem[\protect\astroncite{Yang et~al.}{2018}]{yang2018}
Yang, L., Fang, J., Duan, H., Li, H., and Zeng, B. (2018).
\newblock Fast low-rank {Bayesian} matrix completion with hierarchical
  {Gaussian} prior models.
\newblock {\em IEEE Transactions on Signal Processing}, 66(11):2804--2817.

\bibitem[\protect\astroncite{Yuan and Lin}{2006}]{yuan2006}
Yuan, M. and Lin, Y. (2006).
\newblock Model selection and estimation in regression with grouped variables.
\newblock {\em Journal of the Royal Statistical Society: Series B (Statistical
  Methodology)}, 68(1):49--67.

\end{thebibliography}

\clearpage


\title{Supplementary Material for "Generalized infinite factorization models"}
\markboth{}{}
\author{}
\date{}

\maketitle

This supplementary material available at Biometrika online includes the statement and proof of Proposition S1 and the proofs of Proposition~1, Lemmas 1--2, and Corollaries 1--3.
The Gibbs sampling algorithm, settings, and additional results of the simulations and ecology data analysis are reported, including trace plots and a sensitivity analysis to varying hyperparameters.

\renewcommand\thesection{S\arabic{section}}
\renewcommand\thesubsection{S\arabic{section}.\arabic{subsection}} 
\renewcommand{\thefigure}{S\arabic{figure}} 
\renewcommand{\thetable}{S\arabic{table}} 
\renewcommand\theproposition{S\arabic{proposition}} 

\setcounter{section}{0}

\section{Propositions and proofs}
\begin{proposition}
	\label{pr:well-prior}
	Let $\Pi_\Lambda \otimes \Pi_\Sigma$ denote the prior on $(\Lambda, \Sigma)$. Let $\Theta_\Lambda$ and $\Theta_\Sigma$ denote the sample spaces of the matrices $\Lambda$ and $\Sigma$, respectively.
	If $\sum_{h=1}^{\infty} E(\gamma_{h})<\infty$, then, $\Pi_\Lambda \otimes \Pi_\Sigma(\Theta_\Lambda \times \Theta_\Sigma)=1$.
\end{proposition}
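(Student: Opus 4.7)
The content of the proposition is that the prior on $(\Lambda,\Sigma)$ actually lives on the subset of pairs for which $\Omega = \Lambda\Psi\Lambda^{\T} + \Sigma$ is a well-defined $p\times p$ positive semi-definite matrix. My plan is to reduce this to checking that, for each $j$, the random diagonal series $\sum_{h=1}^{\infty}\psi_{hh}\lambda_{jh}^{2}$ is finite almost surely, and then derive positive semi-definiteness and finiteness of all entries of $\Lambda\Psi\Lambda^{\T}$ as automatic consequences.

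First I would verify the reduction. Writing $(\Lambda\Psi\Lambda^{\T})_{jl}=\sum_{h=1}^{\infty}\psi_{hh}\lambda_{jh}\lambda_{lh}$, the elementary bound $|\lambda_{jh}\lambda_{lh}|\le \tfrac{1}{2}(\lambda_{jh}^{2}+\lambda_{lh}^{2})$ shows that if the diagonal series converges a.s.\ for each $j$, then every off-diagonal series converges absolutely a.s. In that case every finite truncation $\Omega_{H}=\Lambda_{H}\Psi_{H}\Lambda_{H}^{\T}+\Sigma$ is positive semi-definite (since $\Psi_{H}$ has nonnegative diagonal and the prior on $\Sigma$ is supported on positive diagonal matrices), and the entrywise limit $\Omega_{H}\to\Omega$ preserves positive semi-definiteness via continuity of $v\mapsto v^{\T}\Omega v$ on any fixed $v\in\Re^{p}$.

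The core calculation then is to bound the expectation of the diagonal series by Tonelli's theorem. Because $\lambda_{jh}\mid\theta_{jh}\sim N(0,\theta_{jh})$ with $\theta_{jh}=\tau_{0}\gamma_{h}\phi_{jh}$ and the three scale parameters are independent, and because $\phi_{jh}$ is identically distributed across $h$,
\begin{equation*}
E\left(\sum_{h=1}^{\infty}\psi_{hh}\lambda_{jh}^{2}\right)=\sum_{h=1}^{\infty}\psi_{hh}\,E(\tau_{0})\,E(\gamma_{h})\,E(\phi_{j1}) \le c\,E(\tau_{0})\,E(\phi_{j1})\sum_{h=1}^{\infty}E(\gamma_{h}),
\end{equation*}
where $c\ge\sup_{h}\psi_{hh}$ exists because $\Psi$ is pre-specified (this is the same $c$ that appears in Proposition~\ref{pr:upp-bound}). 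The hypothesis $\sum_{h}E(\gamma_{h})<\infty$, together with finiteness of $E(\tau_{0})$ and $E(\phi_{j1})$, makes this bound finite, and a nonnegative random variable with finite expectation is finite almost surely.

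Finally, a countable union of the null sets over $j=1,\ldots,p$ shows that $\sum_{h}\psi_{hh}\lambda_{jh}^{2}<\infty$ a.s.\ simultaneously for all $j$, which combined with the reduction above places $\Lambda$ in $\Theta_{\Lambda}$ with probability one; the prior on $\Sigma$ already assigns full mass to $\Theta_{\Sigma}$. I do not expect any serious obstacle: the only point worth stating carefully is the implicit boundedness of $\psi_{hh}$ (so that $c$ exists), and the use of Tonelli, which is clean because every summand is nonnegative.
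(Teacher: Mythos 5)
Your proposal is correct and follows essentially the same route as the paper's proof: both reduce to almost-sure convergence of the diagonal series $\sum_h \psi_{hh}\lambda_{jh}^2$ (you via the bound $|\lambda_{jh}\lambda_{lh}|\le\tfrac12(\lambda_{jh}^2+\lambda_{lh}^2)$, the paper via Cauchy--Schwarz), then bound its expectation by $c\,E(\tau_0)E(\phi_{j1})\sum_h E(\gamma_h)$ using independence of the scales and the fact that $\phi_{jh}$ is identically distributed over $h$, concluding by finiteness of the expectation. Your explicit treatment of positive semi-definiteness of the entrywise limit and your flagging of the implicit finiteness of $E(\tau_0)$ and $E(\phi_{j1})$ are minor elaborations of steps the paper leaves tacit.
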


\begin{proof}
	Assume $\Sigma \in \Theta_\Sigma$ and  $(\Psi, \Lambda) \in \Theta_\Psi \times \Theta_\Lambda$,  with $\Theta_\Sigma$ the set of $p \times p$ positive semi-definite matrices with finite elements, and 
	\begin{align*}
	\Theta_\Psi \times \Theta_\Lambda &=\bigg\{ \Lambda= (\lambda_{jh}), \, \Psi =(\psi_{hh}) : 
	\sum_{h=1}^{\infty} \lambda_{jh} \psi_{hh} \lambda_{sh}<\infty \; \forall\  j,s \in (1,\ldots,p) \bigg\}.
	\end{align*}
	Due to independence, we can study the prior on $\Sigma$ and $\Lambda$ separately. The prior on $\Sigma$ is defined on the set of positive semi-definite matrices. Therefore, it is sufficient to prove that the elements of $\Lambda \Psi \Lambda^\T$ are finite almost surely. Using Cauchy-Schwartz, it is straightforward to show that all the entries of $\Lambda\Lambda^\T$ are finite if and only if $\sum_{h=1}^{\infty} \psi_{hh} \lambda_{jh}^2 < \infty \ (j=1,\ldots, p)$.
	Let $c$ satisfy 
	$c>\max_{h=1,\ldots,\infty}  \psi_{hh}$ and let $X_j^{\T}$ denote the $j$-th row vector of matrix $X$.
	Since 
	\begin{align*}
	E(\lambda_{j h}^2) &=  E\{E(\lambda_{j h}^2\mid \phi_{jh}, \gamma_h,  \tau_0)\}= E(\phi_{j h}) E(\gamma_h)  E(\tau_0), 
	\\
	E(\phi_{jh})& =  E\{E(\phi_{jh} \mid \beta_h)\}= E\{g(X_j^{\T} \beta_h)\} = E(\phi_{j1}), 
	\end{align*}
	it is sufficient that $\sum_{h=1}^{\infty} E(\gamma_h)<\infty$ to prove that 
	$\sum_{h=1}^{\infty} E(\lambda_{jh}^2) =  E(\phi_{j1}) E(\tau_0) \sum_{h=1}^{\infty} E(\gamma_h) < \infty$ and then $\sum_{h=1}^{\infty} \psi_{hh} \lambda_{jh}^2 < c \sum_{h=1}^{\infty} \lambda_{jh}^2 < \infty$.
\end{proof}

\begin{proof}[of Proposition 1]
	The trace of $\Omega$ is $\text{tr}({\Sigma})+\text{tr}(\Lambda_H \Psi_H \Lambda_H^\T)+\text{tr}(\Lambda_{\Delta_H} \Psi_{\Delta_H} \Lambda_{\Delta_H}^\T)$, where $\Lambda_{\Delta_H} =\Lambda -\Lambda_{H}$ and  $\Psi_{\Delta_H} =\Psi -\Psi_{H}$.
	Hence, it is equivalent to rewrite the probability of interest as
	$$
	\text{pr}\left\{\frac{\text{tr}(\Lambda_{\Delta_H} \Psi_{\Delta_H} \Lambda_{\Delta_H}^\T)}{\text{tr}({\Omega})} \geq 1-T\right\}.
	$$
	By Markov's Inequality
	\begin{equation*}
	\text{pr}\left\{\frac{\text{tr}(\Lambda_{\Delta_H} \Psi_{\Delta_H} \Lambda_{\Delta_H}^\T)}{\text{tr}({\Omega})} \geq 1-T \right\} \leq E\left\{\frac{\text{tr}(\Lambda_{\Delta_H} \Psi_{\Delta_H} \Lambda_{\Delta_H}^\T)}{\text{tr}({\Omega})}\right\}\big/(1-T).
	\end{equation*}
	The expected ratio of two random variables $u$ and $v$ is $E(u/v)= \text{cov}(u, 1/v)+E(u)E(1/v)$,
	which allows us to write $E(u/v)\leq E(u)E(1/v)$ if $\text{cov}(u, 1/v)\leq0$.
	Then, since the covariance between $\text{tr}(\Lambda_{\Delta_H} \Psi_{\Delta_H} \Lambda_{\Delta_H}^\T)$ and $\text{tr}({\Omega})$ is non-negative, the following inequality holds
	\begin{equation*}
	E\left\{\frac{\text{tr}(\Lambda_{\Delta_H} \Psi_{\Delta_H}  \Lambda_{\Delta_H}^\T)}{\text{tr}({\Omega})}\right\} \leq	E\{\text{tr}(\Lambda_{\Delta_H} \Psi_{\Delta_H}  \Lambda_{\Delta_H}^\T)\}
	E\left(\frac{1}{\text{tr}({\Omega})}\right).
	\end{equation*} 
	The trace $\text{tr}(\Lambda_{\Delta_H} \Psi_{\Delta_H}  \Lambda_{\Delta_H}^\T)$ is equal to $\sum_{j=1}^{p} \sum_{h=H+1}^{\infty} \psi_{hh} \lambda_{jh}^2$.
	The variance of $\lambda_{jh}$ is $E(\lambda_{j h}^2)=E(\phi_{j 1}) E(\gamma_h)  E(\tau_0)$.   	Let $c$ satisfy $c \geq \max_{h=1,\ldots,\infty}  \psi_{hh}$.
	Since $E(\phi_{j 1})$ is finite and $E(\gamma_{h})=a b^{h-1}$ with $a,b$ positive constants and $b<1$, then
	\begin{equation*}
	E\{\text{tr}(\Lambda_{\Delta_H} \Psi_{\Delta_H} \Lambda_{\Delta_H}^\T)\} \leq c E(\tau_0) a \frac{b^H}{1-b} \sum_{j=1}^{p}  E(\phi_{j 1}).
	\end{equation*} 
	Since $\text{tr}({\Omega})=\text{tr}(\Lambda \Psi \Lambda^\T)+\text{tr}(\Sigma)$, we 
	know that $\text{tr}({\Omega}) \geq   \sum_{h=1}^{\infty} \psi_{hh} \lambda_{jh}^2 + \sigma_j^2$ for any $j$ in $1,\ldots,p$, where $\sigma_j^2$ is the $j$-th diagonal element of $\Sigma$. Then, for any $j$ in $1,\ldots,p$,  we obtain
	\begin{equation*}
	\frac{1}{\text{tr}({\Omega})} \leq \frac{1}{ \sum_{h=1}^{\infty} \psi_{hh} \lambda_{jh}^2 + \sigma_j^2},
	\end{equation*}
	and, consequently,
	\begin{equation*}
	E\left\{\frac{1}{\text{tr}({\Omega})}\right\} \leq E\left(\sigma_j^{-2}\right), \qquad E\left\{\frac{1}{\text{tr}({\Omega})}\right\} \leq E\left\{\left(\sum_{h=1}^{\infty} \psi_{hh} \lambda_{jh}^2\right)^{-1}\right\}.
	\end{equation*} 
	Therefore, since $m_{\Omega} = \min_{j=1,\ldots,p} \left[E(\sigma_j^{-2}), E\left\{\left(\sum_{h=1}^{\infty} \psi_{hh} \lambda_{jh}^2\right)^{-1}\right\}\right]< \infty$, then
	\begin{equation*}
	\text{pr}\left\{\frac{\text{tr}(\Lambda_H \Psi_H \Lambda_H^\T)+\text{tr}(\Sigma)}{\text{tr}({\Omega})}\leq T\right\}\leq \bigg(\frac{1}{1-T}\bigg)  a \,c \,\frac{b^H}{1-b} \, m_{\Omega}\,   E(\tau_0) \, \sum_{j=1}^{p}  E(\phi_{j 1}),
	\end{equation*} 
	as stated by the Theorem.
\end{proof}

\begin{proof}[of Lemma~1]
	Consistently with Proposition 2,  $(\lambda_{jh} \mid \Lambda_{-jh})$ has power law tail if $(\theta_{jh}\mid  \Lambda_{-jh})$ has power law tail. 
	Furthermore, $\text{pr}(|\lambda_{j h}|>\lambda \mid \Lambda_{-jh})$ has power law tail for large $\lambda$ if and only if
	$\text{pr}(|\lambda_{j h}|>\lambda \mid \Lambda_{-jh}, \theta_{jh}>0)$ has power law tail and $\text{pr}(\theta_{jh}>0 \mid \Lambda_{-jh})>0$.
	The latter inequality is always true when the marginal $\text{pr}(\theta_{jh}>0)$ is positive.
	To prove $(\theta_{jh}\mid  \Lambda_{-jh})$ has power law tail, we apply Lemma 3.
	We first focus on proving the lemma when $\phi_{jh}$ satisfies the power law tail condition with $\tau_0 \gamma_h = w_{h}.$ As the local scale $\phi_{jh}$ is independent from $(\Lambda_{-jh}, w_{h})$ given $\beta_h$, its conditional density is
	\begin{equation*}
	f_{\phi_{jh}\mid w_{h}, \Lambda_{-(jh)}}(\phi) =  \int_\Re f_{\phi_{jh} \mid \beta_h}(\phi)\, f_{\beta_h \mid w_{h}, \Lambda_{-jh}}(\beta) \,\text{d}\beta.
	\end{equation*}
	As the tail conditions hold for any possible prior on $\beta$, we have 
	\begin{equation*}
	f_{\phi_{jh}}(\phi) = \int_\Re f_{\phi_{jh} \mid \beta_h}(x) \, f (\beta)\, \text{d}\beta, \qquad
	f_{\phi_{jh}}(\tilde{\phi}) \propto \tilde{\phi}^{-\alpha}, \qquad \tilde{\phi}=\{\phi: \phi>l\}, \qquad  L\gg0,
	\end{equation*}
	for any prior density $f$ defined on $\Re$.
	Hence, $(\phi_{jh}\mid w_{jh}, \Lambda_{-jh})$ is power law tail distributed.
	We now focus on proving the lemma when $\tau_0$ or $\gamma_h$ are power law tail distributed. 
	Let $r_{h}^*=(r_{h}\mid r_{h}>0)$ and $w_{jh}^*=(w_{jh}\mid w_{jh}>0)$, where $r_h$ is the scale parameter with power law tail and $w_{h}$ is the product of the remaining two scale parameters, respectively.
	By Bayes' Theorem 
	\begin{equation*}
	f_{r_{h}^*\mid w_{jh}^*, \Lambda_{-jh}}(r)= \frac{f_{ \Lambda_{-jh}\mid w_{jh}^*, r_{h}^*}(\Lambda_{-jh}; r) f_{ r_{h}^* \mid w_{jh}^*}(r) }{f_{\Lambda_{-jh}\mid w_{jh}^*}(\Lambda_{-jh})}.
	\end{equation*}
	Since $r_{h}^*$ is independent from $w_{jh}^*$ for any parameter scale, it is sufficient to prove that the function $f_{ \Lambda_{-jh}\mid w_{jh}^*, r_{h}^*}(\Lambda_{-jh};r)$ decreases slower than $c\, r^{\,-\alpha}$, for $c,\alpha>$ positive constants, when $r \rightarrow \infty$.
	Denoting $F_{\tau_0, \phi_{11} \ldots \phi_{p k}, \gamma_{1},\ldots,\gamma_{h-1},\gamma_{h+1},\ldots,\gamma_{k} \mid w_{jh}^*, r_{h}^{*}}$ the probability measure for conditional density $f_{\tau_0, \phi_{11} \ldots \phi_{p k}, \gamma_{1},\ldots,\gamma_{h-1},\gamma_{h+1},\ldots,\gamma_{k} \mid w_{jh}^*, r_{h}^{*}}$, we can write
	\begin{align*}
	f_{ \Lambda_{-jh}\mid w_{jh}^{*}, r_{h}^*}(\Lambda_{-jh};r) &= \int f_{ \Lambda_{-jh}\mid \tau_0, \phi_{11} \ldots \phi_{p k}, \gamma_{1},\ldots,\gamma_{k}} (\Lambda_{-jh};r) \, \text{d} F_{\tau_0, \phi_{11} \ldots \phi_{p k}, \gamma_{1},\ldots,\gamma_{h-1},\gamma_{h+1},\ldots,\gamma_{k} \mid w_{jh}^*, r_{h}^{*}}\\
	&= \int \prod_{(s,m) \neq (j,h)}  f_{\lambda_{sm} \mid \theta_{sm}}(\lambda_{sm};r) \, \text{d} F_{\tau_0, \phi_{11},\ldots,\gamma_{k} \mid w_{jh}^*, r_{h}^{*}}\\
	&= E \bigg\{ \prod_{(s,m) \neq (j,h)} f_{\lambda_{sm} \mid \theta_{sm}}(\lambda_{sm};r) \biggm| w_{jh}^*, r_{h}^{*}, \Lambda_{-jh} \bigg\}
	\end{align*}
	The product inside the expectation is zero when there is a pair of indices $(s,m)$ such that $\lambda_{sm}\neq 0$ and $\theta_{sm} =0$. However, since the probability $\text{pr}(\theta_{sm}=0 \mid \lambda_{sm}\neq 0)=0$, we know that 
	the expected value of the product between the functions $f_{\lambda_{sm} \mid \theta_{sm}}(\lambda_{sm};r)$, given $w_{jh}^*, r_{h}^{*}, \Lambda_{-jh}$, is strictly positive.
	We first focus on the case $\gamma_h=r_h$ and prove that $f_{ \Lambda_{-jh}\mid w_{jh}^*, \gamma_{h}^*}(\Lambda_{-jh};\gamma)$ decreases slower than $c \gamma^{-\alpha}$ for $c,\alpha>0$.
	In this case, we can write the above expectation as
	\begin{equation*}
	E \bigg\{\prod_{s=1, m \neq h}^p f_k(\lambda_{sm})\, \prod_{s\neq j} f_{\lambda_{sh} \mid \theta_{sh}}(\lambda_{sh};\gamma_h^*) \biggm| w_{jh}^*, \gamma_{h}^{*}, \Lambda_{-jh} \bigg\},
	\end{equation*}
	where $\prod_{s=1,m \neq h}^p f_k(\lambda_{sm})$ is a product between $(k-1)\times p$ strictly positive random variables that does not depend on $w_{jh}^*$ and $\gamma_h^*$, while $\prod_{s\neq j} f_{\lambda_{sh} \mid \theta_{sh}}(\lambda_{sh};\gamma_h^*)$ is a product between $p$ strictly positive random variables. In particular, if $w_{sh}=0$, then $f_{\lambda_{sm} \mid \theta_{sh}}(\lambda_{sh};\gamma_h^*)=\mathbbm{1}{(\lambda_{sh}=0)}$.
	If $w_{sh}>0$, then
	\begin{equation*}
	f_{\lambda_{sh} \mid \theta_{sh}}(\lambda_{sh};\gamma_h^*) = (2\pi w_{sh}^{*} \gamma_{h}^*)^{-0.5} \exp\bigg(-\frac{\lambda_{sh}^2}{2w_{sh}^*\gamma_{h}^{*}}\bigg) > (2\pi w_{sh}^{*} \gamma_{h}^*)^{-0.5} \exp\bigg(-\frac{\lambda_{sh}^2}{2w_{sh}^*}\bigg).
	\end{equation*}
	Therefore, the upper bound
	\begin{equation*}
	f_{\lambda_{sh} \mid \theta_{sh}}(\lambda_{sh};\gamma_h^*) \geq \begin{cases}
	\min \{1, \, (2\pi w_{sh}^{*} \gamma_{h}^*)^{-0.5} \}, \qquad &\text{if} \;\lambda_{sh = 0}\\
	(2\pi w_{sh}^{*} \gamma_{h}^*)^{-0.5} \exp\{-\lambda_{sh}^2/ (2w_{sh}^*)\} \qquad &\text{if} \;\lambda_{sh \neq 0},
	\end{cases}
	\end{equation*}
	holds with probability equal to 1. For $\gamma>1$, we note that $f_{\lambda_{sh} \mid \theta_{sh}}(\lambda_{sh};\gamma) \geq \gamma^{-0.5} u_{\lambda_{sh}}$ with
	\begin{equation*}
	u_{\lambda_{sh}} = \begin{cases}
	\min \{1, \, (2\pi w_{sh}^{*})^{-0.5} \}, \qquad  &\text{if} \;\lambda_{sh}= 0,\\
	(2\pi w_{sh}^{*})^{-0.5} \exp\{-\lambda_{sh}^2/ (2w_{sh}^*)\} \qquad &\text{if} \;\lambda_{sh} \neq 0.
	\end{cases}
	\end{equation*}
	Then,  
	\begin{align*}
	E \bigg\{ \prod_{(s,m) \neq (j,h)} f_{\lambda_{sm} \mid \theta_{sm}}(\lambda_{sm};\gamma_h^{*}) &\biggm| w_{jh}^*, \gamma_{h}^{*}, \Lambda_{-jh} \bigg\} \geq \\
	&E \bigg\{ \prod_{s=1, m \neq h}^p f_k(\lambda_{sm})\, \prod_{s\neq j} \gamma_h^{*-0.5} u_{\lambda_{sh}} \biggm| w_{jh}^*, \gamma_{h}^{*}, \Lambda_{-jh} \bigg\}=\\
	&\gamma_h^{*-0.5\,(p-1)} E \bigg\{ \prod_{s=1, m \neq h}^p f_k(\lambda_{sm})\, \prod_{s\neq j} u_{\lambda_{sh}} \biggm| w_{jh}^*, \Lambda_{-jh} \bigg\},
	\end{align*}
	where the expectation is strictly positive and not depending on $\gamma_h$. Therefore, for $\gamma$ sufficiently large,   $f_{\Lambda_{-jh}\mid w_{jh}^*, \gamma_h^{*}}(\Lambda_{-jh};\gamma) \geq c \gamma^{-\alpha}, \qquad c,\alpha >0$
	holds, so that $(\gamma_{h}\mid w_{jh}, \Lambda_{-jh})$ is power law tail distributed.
	Similarly, if $\tau_0=r_h$ ($h=1,\ldots,H$), 
	\begin{align*}
	E \bigg\{ \prod_{(s,m) \neq (j,h)} f_{\lambda_{sm} \mid \theta_{sm}}(\lambda_{sm};\tau_0^*) &\mid w_{jh}^*, \tau_{0}^{*}, \Lambda_{-jh} \bigg\} \geq\\
	&E \bigg( \prod_{(s,m) \neq (j,h)} \tau_0^{*-0.5} u_{\lambda_{sm}} \mid w_{jh}^*, \tau_0^{*}, \Lambda_{-jh} \bigg)=\\
	&\tau_0^{*-0.5\,(pH-1)} E \bigg( \prod_{(s,m) \neq (j,h)} u_{\lambda_{sm}} \mid w_{jh}^*, \Lambda_{-jh} \bigg),
	\end{align*}
	where
	\begin{equation*}
	u_{\lambda_{sm}} = \begin{cases}
	\min \{1, \, (2\pi w_{sm}^{*})^{-0.5} \}, \qquad \qquad   &\text{if} \;\lambda_{sm}= 0,\\
	(2\pi w_{sm}^{*})^{-0.5} \exp\{-\lambda_{sm}^2/ (2w_{sm}^*)\}, \qquad &\text{if} \;\lambda_{sm} \neq 0.
	\end{cases}
	\end{equation*}
	is strictly positive and does not depend on $\tau_0$.
	Then, if the number $H$ of columns of $\Lambda_H$ is finite, $f_{\Lambda_{-jh}\mid w_{jh}^*, \tau_0^{*}}(\Lambda_{-jh};\tau) \geq c \tau^{-\alpha}$ with $c,\alpha >0$ and $\tau$ sufficiently large, implying $(\tau_{0}\mid w_{jh}, \Lambda_{-jh})$ is power law tail distributed.
	Hence, if any of the scale parameters is power law tail distributed for any prior on $\beta$, then its distribution, conditionally on $\Lambda_{-jh}$ and on the product of the other two parameters, is power law tail distributed and, as a consequence, $(\lambda_{jh} \mid \Lambda_{-jh})$ is power law tail distributed.
	Since $f_{ \lambda_{jh} \mid \Lambda_{-jh}}(\lambda) \geq c|\lambda|^{-\alpha}$ for certain $c,\alpha$ positive constants and $|\lambda|>L$ sufficiently large, in the same settings, we can write
	$$
	f_{ \lambda_{jh} \mid \Lambda_{-(jh)}}(\lambda) = c |\lambda|^{-\alpha}\{1+t(|\lambda|)\},
	$$
	where $t(|\lambda|)$ is a positive function.
	Then,
	\begin{alignat*}{3}
	\frac{\partial [\log \{f_{\lambda_{jh}\mid \Lambda_{-(jh)} }(\lambda)\}]}{\partial \lambda} &= -\frac{\alpha}{\lambda} + \frac{\partial t(\lambda) }{\partial \lambda} \qquad
	&&\text{for} \; \lambda >L \quad &&\text{and} \; L\gg0 ,\\
	\frac{\partial [\log \{f_{\lambda_{jh}\mid \Lambda_{-(jh)} }(\lambda)\}]}{\partial \lambda} &= \frac{\alpha}{\lambda} + \frac{\partial \{-t(\lambda)\} }{\partial \lambda} \qquad
	&&\text{for} \; \lambda < -L \quad &&\text{and} \; L\gg0 ,
	\end{alignat*}
	We now consider the sign of the derivative of $t(|\lambda|)$. If $t(|\lambda|)$ is not decreasing, 
	\begin{alignat*}{3}
	\frac{\partial [\log \{f_{\lambda_{jh}\mid \Lambda_{-(jh)} }(\lambda)\}]}{\partial \lambda} &\geq -\frac{\alpha}{\lambda}, \qquad
	&&\text{for} \; \lambda >L \quad &&\text{and} \; L\gg0 ,\\
	\frac{\partial [\log \{f_{\lambda_{jh}\mid \Lambda_{-(jh)} }(\lambda)\}]}{\partial \lambda} &\leq \frac{\alpha}{\lambda}, \qquad
	&&\text{for} \; \lambda < -L \quad &&\text{and} \; L\gg0 ,
	\end{alignat*}
	whereas if $t(|\lambda|)$ is decreasing, its derivative goes to zero when $|\lambda|$ goes to infinity. Therefore,
	\begin{alignat*}{3}
	\frac{\partial [\log \{f_{\lambda_{jh}\mid \Lambda_{-(jh)} }(\lambda)\}]}{\partial \lambda} &\geq f_{lb}'(\lambda) \qquad
	&&\text{for} \; \lambda >L \quad &&\text{and} \; L\gg0 ,\\
	\frac{\partial [\log \{f_{\lambda_{jh}\mid \Lambda_{-(jh)} }(\lambda)\}]}{\partial \lambda} &\leq -f_{lb}'(|\lambda|) \qquad
	&&\text{for} \; \lambda <-L \quad &&\text{and} \; L\gg0 ,
	\end{alignat*}
	where $f_{lb}'(\lambda)<0 \; \forall\, \lambda>0$ and $\lim_{\lambda \to \infty} f_{lb}'(|\lambda|) = 0$.
	The proof is concluded by using this result along with the fact that $f_{ \lambda_{jh} \mid \Lambda_{-(jh)}}(|\lambda|)$ is decreasing when $\lambda \rightarrow \infty$,  
	\begin{alignat*}{3}
	\frac{\partial [ \log \{f_{ \lambda_{jh} \mid \Lambda_{-(jh)}}(\lambda)\}]}{\partial \lambda} &\leq 0 \qquad
	&&\text{for} \; \lambda >L \quad &&\text{and} \; L\gg0 \\
	\frac{\partial [\log \{f_{ \lambda_{jh} \mid \Lambda_{-(jh)}}(\lambda)\}]}{\partial \lambda} &\geq 0 \qquad
	&&\text{for} \; \lambda >-L \quad &&\text{and} \; L\gg0, 
	\end{alignat*}
	showing that the limit of the derivative for $|\lambda| \to \infty$ is equal to zero.
\end{proof}

\begin{proof}[of Lemma 2]
	In both the multiplicative gamma process and cumulative shrinkage process, priors on $\Lambda$ are exchangeable within columns, that is $\text{pr}(|\lambda_{j h}|>\epsilon \mid \gamma_{h}, \tau_0)= \zeta_{\epsilon h}$ does not depend on $j$. 
	Then, the prior density of $|\text{supp}_{\epsilon}(\lambda_{h})|$, conditionally on $\gamma_h$ and $\tau_0$ is {\em a priori} distributed as a sum of independent and identically distributed Bernoulli random variables $\text{Ber}(\zeta_{\epsilon h})$. Furthermore, $\zeta_{\epsilon h}$ does not depend on $p$. By applying the Chernoff's method, we obtain
	\begin{equation*}
	\text{pr}\{|\text{supp}_{\epsilon}(\lambda_{h})|< a s_p \mid \gamma_h, \tau_0\} \leq \exp\big\{a t s_p +p \zeta_{\epsilon h} (e^{-t}-1)\big\},
	\end{equation*}
	for any $t>0$ and with $1-e^{-t}>0$. 
	Hence,
	\begin{equation*}
	\text{pr}\{|\text{supp}_{\epsilon}(\lambda_{h})|> a s_p \mid \gamma_h, \tau_0\} \geq 1-\exp[-p \{(1-e^{-t}) \zeta_{\epsilon h}-a t s_p/p \}],
	\end{equation*}
	where the limit of the lower bound is
	$\lim_{p \to \infty} 1-\exp[-p \{(1-e^{-t}) \zeta_{\epsilon h}-a t s_p/p \}]= 1$,
	which concludes the proof.
\end{proof}

\begin{proof}[of Corollary 1]
	
	i. It is sufficient to prove the conditions required by Theorem 1. We have
	$
	E(\gamma_h)=E(\vartheta_h)E(\rho_h)=E(\rho_h)\, b_\theta/(a_\theta -1),
	$
	where
	\begin{equation*}
	E(\rho_h)=1-\sum_{l=1}^{h}E(w_l)=1-\sum_{l=1}^{h-1}E(w_l)-E(w_h)=E(\rho_{h-1})-E(w_h).
	\end{equation*}
	Since the random variable $w_l$ is obtained as a product of positive random variables, $E(w_l)>0$ for every $l=1,\ldots,h$.
	Therefore $E(\gamma_{h}) < E(\gamma_{h-1})$ for each $h=2,\ldots,H$.
	
	ii. It is sufficient to prove the conditions required by Proposition 1. It is straightforward to verify $E(\tau_0)=1$ and $E(\phi_{jh})\leq 1$ for $j=1,\ldots,p$ and $h=1,\ldots,\infty$. The column scale expectation is
	\begin{equation*}
	E(\gamma_h)=E(\vartheta_h) \left(\frac{\alpha}{1+\alpha}\right) \left(\frac{\alpha}{1+\alpha}\right)^{h-1} ,
	\end{equation*}
	which can be written in a form $ab^{h-1}$.
	The elements $\sigma_j^{-2}$ are gamma distributed guaranteeing finite expectation for all $j=1,\ldots,p$.
\end{proof}

\begin{proof}[of Corollary 2]
	It is sufficient to prove the conditions required by Theorem 2.
	The probability density function of the column scale $\gamma_h$ ($h=1,2,\ldots$) of model \eqref{eq:CUSPadj} evaluated at a certain $\gamma>0$ is 
	\begin{equation*}
	f_{\gamma_h}(\gamma)= \text{pr}(\rho_h = 1) f_{\vartheta_h}(\gamma) \propto  \gamma^{-a_\theta-1} \exp(-b_\theta/\gamma),
	\end{equation*}
	where $f_{\vartheta_h}(\gamma)$ is the inverse gamma probability density function evaluated at $\gamma$.
	The function $\gamma^{-a_\theta-1} \exp(-b_\theta/\gamma)$ is of order $O\{\gamma^{-(a_\theta+1)}\}$ as $\gamma$ goes to infinity. Since $a_\theta>0$, we conclude that the column scale $\gamma_h$ is power law tail distributed. The independence between $\gamma_h$ and $\beta_h$ ($h=1,2,\ldots$) guarantees that the latter result hold for any possible prior distribution $f_\beta$ on $\beta$.
\end{proof}

\begin{proof}[of Corollary 3]
	It is sufficient to prove the conditions required by Theorem 3.
	The structured increasing shrinkage prior is such that, for every $j=1,\ldots,p$ and $h\geq 1$, we have $g(x_j^T\beta_h) \leq c_p < 1$.
	The proof is obtained under the assumption $c_p = O\{\log(p)/p\}$. 
\end{proof}

\section{Simulation experiments}

\subsection{Gibbs sampler for structured increasing shrinkage model for Gaussian data}

We can rewrite the model for $y_{ij}$ for the specific case of the structured increasing shrinkage process and Gaussian data as 
\begin{equation*}
y_{ij}= \sum_{h=1}^\infty \sqrt{\rho_{h}} \sqrt{\phi_{jh}} \,  \lambda_{jh}^* \eta_{ih} + \epsilon_{ij} \qquad \lambda_{jh}^* \sim N(0, \vartheta_h),
\end{equation*}
where $\lambda_{jh}^*$ is a continuous random variable and we let 
$\beta_{mh} \sim N(0,\sigma_\beta^2)$.
The notation $(x\mid-)$ denotes the full conditional distribution of $x$ conditionally on everything else.
Given $H$ the number of factors of the truncated model, the sampler cycles through the following steps.
\Step{1}{
	Update, for $i=1, \ldots, n$, the factor $\eta_i$ according to the posterior full conditional 
	\begin{equation*}
	(\eta_{i}\mid-) \sim N_{H}\big\{(I_H+\Lambda_H^\T \Sigma^{-1} \Lambda_H)^{-1} \Lambda_H^\T \Sigma^{-1} y_{i},\, (I_H+\Lambda_H^\T \Sigma^{-1} \Lambda_H)^{-1}\big\}.
	\end{equation*} 
}
\Step{2}{
	Update, for $j$ in $1, \ldots,p$, the elements of $\Sigma$, by sampling 
	\begin{equation*}
	(\sigma_{j}^{-2}\mid-) \sim \text{Ga}\left\{a_\sigma +\frac{n}{2},\, b_\sigma +\frac{1}{2} \sum_{i=1}^{n} (y_{ij}-\lambda_{j}^\T \eta_i)^2 \right \}.
	\end{equation*} 
}
\Step{3}{
	Update $\beta_h$ ($h=1,\ldots,H$) exploiting  the P\'{o}lya-Gamma data-augmentation strategy \citep{polson2013} and the decompostition $\phi_{jh} =\phi_{jh}^{(L)} \phi_{jh}^{(C)}$, with $\phi_{jh}^{(L)} \phi_{jh}^{(C)}$ independent a priori and distributed as Ber$\{\text{logit}^{-1}(x_j^\T \beta_h)\}$ and Ber$(c_p)$, respectively.
}
\Substep{3}{1}{
	Update $\phi_{jh}^{(L)}$, for $j=1,\ldots,p$ and $h=1,\ldots,H$, setting $\phi_{jh}^{(L)}=1$ if $\phi_{jh}=1$ and sampling from the full conditional distribution
	\begin{equation*}
	\text{pr}(\phi_{jh}^{(L)}=l) \propto 
	\begin{cases}
	1-\text{logit}^{-1}(x_j^\T \beta_h)\qquad \qquad \qquad \quad \, \text{for} \;
	l=0, \\
	\text{logit}^{-1}(x_j^\T \beta_h)(1-c_p)  \qquad \,   \text{for} \;l=1,\\
	\end{cases}
	\end{equation*}		
	if $\phi_{jh}=0$.
}
\Substep{3}{2}{
	Let $f(y)\propto \sum_{n=0}^{\infty} (-1)^n A_n (2\pi y^3)^{-0.5}\exp\{-(2n+b)^2(8y)^{-1}-0.5c^2y\}$ indicate the probability density function of a P\'{o}lya-Gamma distributed random variable $y \sim \text{PG}(b,c)$.
	For each $h=1, \ldots, H$, generate $p$ independent random variables $d_{j(h)}$ sampling from the full conditional distribution $(d_{j(h)}\mid -) \sim \text{PG}(1, x_{j}^\T \beta_h)$.
	Let $D_{(h)}$ denote the $p\times p$ diagonal matrix with entries $d_{j(h)}$ ($j=1,\ldots,p$).
}\Substep{3}{3}{
	Define the $q \times q$ diagonal matrix $B=\sigma^2_\beta I_q$.
	For each $h=1, \ldots, H$, update $\beta_{h}$ sampling from
	\begin{equation*}
	(\beta_{h} \mid -) \sim N_{q}\{(x^\T D_{(h)} x + B^{-1} )^{-1} (x^\T \kappa_h), \, (x^\T D_{(h)} x + B^{-1} )^{-1}\},
	\end{equation*}
	where $\kappa_{h}$ is a $p$-dimensional vector with the $j$-th entry equal to $\phi_{jh}^{(L)}-0.5$.
}
\Step{4}{
	Update the elements $\lambda_{jh}^*$ by sampling from the independent full conditional posterior distributions of the row vectors $\lambda_{j}^*=(\lambda_{j1}^*, \ldots, \lambda_{jH}^*)$, for $j=1,\ldots,p$,
	\begin{equation*}
	(\lambda_{j}^*\mid-) \sim N_{H}\big\{(D^{-1}+\sigma_j^{-2}\eta_{(j)}^{\T}\eta_{(j)})^{-1}\sigma_j^{-2} \eta_{(j)}^{\T} y^{(j)}, \, (D^{-1}+\sigma_j^{-2}\eta_{(j)}^{\T}\eta_{(j)})^{-1}\big\},
	\end{equation*} 
	where $\eta_{(j)}$ is the $n \times H$ matrix such that the generic element is $\eta_{(j) ih}=\eta_{ih} \sqrt{\rho_h} \sqrt{\phi_{jh}}$, $D^{-1}=\text{diag}(\vartheta_{1}^{-1},\ldots,\vartheta_{H}^{-1})$ and $y^{(j)}=(y_{1j}, \ldots, y_{nj})^\T$. Set $\lambda_{jh}= \lambda_{jh}^* \sqrt{\rho_h}\sqrt{\phi_{jh}}$.
}
\Step{5}{
	Update the column scales $\gamma_h$ (for $h=1, \ldots, H$), following the substeps below and setting $\gamma_h = \vartheta_h \rho_h$.
	Consistently with \citet{Legramanti2020}, define the independent indicators $z_h$ ($h=1,\ldots,p$) with prior $\text{pr}(z_h=l)=w_l$.
}
\Substep{5}{1}{
	Update the augmented data $z_h$ by sampling from the full conditional distribution
	\begin{equation}
	\label{eq:z-fc}
	\text{pr}(z_h=l) \propto 
	\begin{cases}
	w_l \,\prod_{i=1}^n \prod_{j=1}^p N(y_{ij}; \mu_{ijh}^{(0)} ,\sigma_j^2)
	\qquad \qquad \text{for} \quad l=1,\ldots,h \\
	w_l \,\prod_{i=1}^n \prod_{j=1}^p N(y_{ij}; \mu_{ijh}^{(1)} ,\sigma_j^2) \qquad \qquad \text{for} \quad l=h+1,\ldots,H, \\
	\end{cases}
	\end{equation}
	where $N(x; \mu,\sigma^2)$ indicates the Gaussian probability density function with mean $\mu$ and variance $	\sigma^2$. The mean values $\mu_{ijh}^{(0)}$ and $\mu_{ijh}^{(1)}$ are defined according to $\mu_{ijh}^{(z)}= \sum_{l \neq h}^{H} \sqrt{\rho_{l}}\,\sqrt{\phi_{jl}}\lambda_{jl}^* \eta_{il} + \sqrt{z}\,\sqrt{\phi_{jh}}\lambda_{jh}^* \eta_{ih}$.
	Set $\rho_h=1$ if $z_h >h$, else $\rho_h=0$.
}
\Substep{5}{2}{ 
	For $h=1,\ldots,H$, update $\vartheta_h^{-1}$ sampling from
	$ \text{Ga}(a_\theta+0.5p, b_\theta+0.5\sum_{j=1}^{p} \lambda_{jh}^{*\,2})$.
}
\Substep{5}{3}{
	For $l=1,\ldots,H-1$, sample $v_l$ from 
	\begin{equation*}
	(v_l\mid-) \sim \text{Be}\big\{1+ \sum_{h=1}^{H} \mathbbm{1}{(z_h=l)}, \alpha + \mathbbm{1}{(z_h>l)} \big\},
	\end{equation*}
	set $v_H = 1$ and update $w_l=v_l \prod_{m=1}^{l-1} (1-v_m)$, for $l=1,\ldots,H$.
}
\Step{6}{
	Update independently the local scales, for $j=1,\ldots,p$ and $h=1,\ldots,H$, by sampling from the full conditional distributions
	\begin{equation*}
	\text{pr}(\phi_{jh}=u) \propto \begin{cases}
	\{1-\text{logit}^{-1}(x_{j}^\T \beta_h)\, c_p\}\, \prod_{i=1}^n N(y_{ij}; \mu_{ijh}^{(u)} ,\sigma_j^2)
	\quad  \; \text{for} \; u=0\\
	\text{logit}^{-1}(x_{j}^\T \beta_h) \,c_p \prod_{i=1}^n N(y_{ij}; \mu_{ijh}^{(u)} ,\sigma_j^2)  \qquad \qquad \text{for} \;  u=1.
	\end{cases} 
	\end{equation*}
	with $\mu_{ijh}^{(u)}=\sum_{l \neq h}^{H} \sqrt{\rho_{l}}\,\sqrt{\phi_{jl}}\lambda_{jl}^* \eta_{il} + \sqrt{\rho_{h}}\,\sqrt{u}\lambda_{jh}^* \eta_{ih}$.
}

\subsection{Simulation settings}

The results reported in Section 4 are obtained running the algorithms for 25000 iterations discarding the first 10000 iterations. Then, we thin the Markov chain, saving every $5$-th sample.
We adapt the number of active factors at iteration $t$ with probability $p(t) = \exp(-1 -5\, 10^{-4} t)$. We set $a_\sigma =1$ and $b_\sigma=0.3$.
In the structured increasing shrinkage algorithm, we choose the offset constant $c_p=2e \log(p)/p$ which belongs to $(0,1)$ for every $p \geq 15$. 

In scenario $d$, the meta covariates in matrix $x_0$ are a categorical variable with four balanced categories, a continuous variable sampled from a multivariate Gaussian distribution, and a continuous variable where the $p$ elements are sampled from $p$ gamma distributions.

To infer the structural zeros within each column of $\Lambda$ in the cumulative shrinkage process and in the multiplicative gamma process, we set $\lambda_{jh}$ to zero when $|\lambda_{jh}|$ ($j =1,\ldots,p$) is under a certain threshold. We choose the threshold equal to 0.05, which is consistent with the value of the hyperparameter $\theta_\infty$ used in the cumulative shrinkage process.

To address column order ambiguity and label switching, we compute the mean classification error only after having ordered the columns of $\Lambda^{(t)}$ (for $t=1, \ldots,S$), for each model, increasingly with respect to the number of zero entries identified.

\subsection{Simulation results}
We report additional results for the simulation study of Section 4 of the main paper. 
\begin{table}[h]
	\def~{\hphantom{0}}
	\setlength{\tabcolsep}{12pt}
	\caption{Median and interquartile range of the LPML, Cov. MSE and of $E(H_a\mid y)$ computed in 25 replications assuming Scenario b and several combinations of $(p,k,s)$}{
		\begin{tabular}{lccccccc}
			& $(p,k,s)$ & \multicolumn{2}{c}{MGP} & \multicolumn{2}{c}{CUSP} & \multicolumn{2}{c}{SIS} \\
			& & Q$_{0. 5}$ & IQR & Q$_{0.5}$ & IQR & Q$_{0. 5}$ & IQR \\
			LPML  & ~~(16,4,0.6) &~-28.20 & 0.33 & ~-28.20 & 0.33 & ~-28.17 & 0.32 \\
			&  ~~(32,8,0.4)& ~-56.95 & 0.53 & ~-57.00 & 0.51 & ~-56.80 & 0.49\\ 
			& ~(64,12,0.3) & -111.35 & 0.70 & -111.71 & 0.74 & -110.76 & 0.89\\
			& (128,16,0.2)&-211.65 & 0.74 & -215.94 & 1.57 & -210.19 & 0.86 \\ 	
			Cov. MSE & ~~(16,4,0.6) & ~~0.25 & 0.12 & ~~0.25 & 0.12 & ~~0.23 & 0.10\\
			&  ~~(32,8,0.4) & ~~0.32 & 0.08 & ~~0.33 & 0.10 & ~~0.30 & 0.12\\
			& ~(64,12,0.3) & ~~0.37 & 0.10 & ~~0.43 & 0.11 & ~~0.22 & 0.09\\
			&(128,16,0.2) & ~~0.23 & 0.03 & ~~0.32 & 0.04 & ~~0.09 & 0.01\\ 
			$E(H_a\mid y)$ & ~~(16,4,0.6) & ~~8.91 & 1.52 & ~~4.00 & 0.00 & ~~4.00 & 0.00\\
			&  ~~(32,8,0.4) & ~11.27 & 1.48 & ~~7.00 & 1.00 & ~~8.00 & 0.00\\ 
			& ~(64,12,0.3) & ~14.72 & 1.49 & ~11 .00 & 0.00 & ~12.00 & 0.00 \\
			&(128,16,0.2) & ~17.16 & 0.81 & ~12.00 & 1.75  & ~16.00 & 0.00  \\
	\end{tabular}}
	\label{tab:ScenarioB}       
	{\footnotesize
		LPML, logarithm of the pseudo-marginal likelihood; Cov. MSE, covariance mean squared error;	CUSP, cumulative increasing shrinkage process; MGP, multiplicative gamma process; SIS, structured increasing shrinkage process; 
		Q$_{0.5}$, median; IQR, interquartile range.
	}
\end{table}

In scenario b, we also apply the method proposed by \citet{rockova2016}, which is referred to as parameter expanded likelihood expected maximization.  Hyperparameters are set as suggested by the authors. This approach focuses on finding a sparse mode based on an over-parameterized factor model.
The performance in terms of mean squared error in covariance estimation and classification error in detecting sparsity in $\Lambda$ is reported in Table S2.   The results are not competitive with the other approaches we have considered.

\begin{table}[h]
	\def~{\hphantom{0}}
	\setlength{\tabcolsep}{15pt}
	\caption{Median and interquartile range of the Cov. MSE and CE computed applying the parameter expanded likelihood expected maximization method of \citet{rockova2016} in 25 replications under Scenario b and several combinations of $(p,k,s)$}{
		\begin{tabular}{ccccc}
			$(p,k,s)$ & \multicolumn{2}{c}{Cov. MSE} & \multicolumn{2}{c}{CE} \\
			& Q$_{0. 5}$ & IQR & Q$_{0.5}$ & IQR\\
			~~(16,4,0.6) &0.55 & 0.16 & 0.77 & 0.14\\
			~~(32,8,0.4)& 0.66 & 0.11 & 0.76 & 0.10 \\ 
			~(64,12,0.3) & 0.64 & 0.11 & 0.90 & 0.10 \\
			(128,16,0.2)& 0.42 & 0.03 & 1.06 & 0.20  \\ 	
	\end{tabular}}
	\label{tab:ScenarioB}       
	{\footnotesize
		Cov. MSE, covariance mean squared error; CE, classification error;	PXLEM, parameter expanded likelihood expected maximization; Q$_{0.5}$, median; IQR, interquartile range.
	}
\end{table}

\begin{table}[h]
	\def~{\hphantom{0}}
	\setlength{\tabcolsep}{12pt}
	\caption{Median and interquartile range of the LPML, Cov. MSE, $E(H_a\mid y)$ and MCE computed in 25 replications assuming Scenario c and several combinations of $(p,k,s)$}{
		\begin{tabular}{lccccccc}
			& $(p,k,s)$ & \multicolumn{2}{c}{MGP} &  \multicolumn{2}{c}{CUSP} & \multicolumn{2}{c}{SIS} \\
			& & Q$_{0.5}$ & IQR & Q$_{0. 5}$ & IQR & Q$_{0. 5}$ & IQR  \\
			LPML  & ~~(16,4,0.6) & ~-27.62 & 0.24& ~-27.62 & 0.25 & ~-27.59 & 0.24 \\
			& ~~(32,8,0.4)& ~-56.16 & 0.64  & ~-56.22 & 0.51  & ~-55.89 & 0.59 \\ 
			& ~(64,12,0.3) &-109.64 & 0.69 &-110.67 & 0.71 &-109.06 & 0.65 \\
			& (128,16,0.2) &-209.57 & 0.88   & -214.19 & 1.76  & -208.34 & 1.04 \\ 
			Cov. MSE  & ~~(16,4,0.6) & ~~0.30 & 0.10 & ~~0.29 & 0.09 & ~~0.26 & 0.11 \\
			& ~~(32,8,0.4)& ~~0.77 & 0.26 & ~~0.72 & 0.18 & ~~0.80 & 0.43\\ 
			& ~(64,12,0.3) & ~~1.01 & 0.35 & ~~0.94 & 0.21 & ~~1.20 & 1.22 \\
			& (128,16,0.2) & ~~0.78 & 0.18 & ~~0.87 & 0.21 & ~~0.35 & 0.48 \\
			$E(H_a\mid y)$ & ~~(16,4,0.6)& ~~8.38 & 1.80  & ~~3.44 & 1.00  & ~~4.00 & 0.00\\
			& ~~(32,8,0.4) & ~10.38 & 1.12  & ~~5.05 & 0.91  & ~~8.00 & 1.00\\ 
			& ~(64,12,0.3) & ~13.67 & 1.20  & ~~8.00 & 0.92  & ~12.00 & 0.00 \\
			& (128,16,0.2) & ~16.56 & 0.83  & ~~9.00 & 0.00  & ~16.00 & 0.00 \\ 
			MCE & ~~(16,4,0.6) & ~~0.98 & 0.17  &  ~~0.53 & 0.20 & ~~0.24 & 0.06  \\
			& ~~(32,8,0.4) & ~~0.65 & 0.07 &  ~~0.44 & 0.08 &  ~~0.19 & 0.07 \\ 
			& ~(64,12,0.3) & ~~0.59 & 0.04 &~~0.48 & 0.04 & ~~0.18 & 0.06\\
			& (128,16,0.2) &~~0.48 & 0.02 & ~~0.44 & 0.01&  ~~0.06 & 0.10 \\  
	\end{tabular}}
	\label{tab:ScenarioC}       
	{\footnotesize
		LPML, logarithm of the pseudo-marginal likelihood; Cov. MSE, covariance mean squared error; MCE, mean classification error;
		CUSP, cumulative increasing shrinkage process; MGP, multiplicative gamma process; SIS, structured increasing shrinkage process; Q$_{0.5}$, median;  IQR, interquartile range.
	}
\end{table}

\begin{table}[h]
	\setlength{\tabcolsep}{8pt}
	\def~{\hphantom{0}}
	\caption{Median and interquartile range of the LPML, Cov. MSE, $E(H_a\mid y)$ and MCE computed in 25 replications assuming Scenario d and several combinations of $(p,k,s)$}{
		\begin{tabular}{lccccccccc}
			& $(p,k,s)$ & \multicolumn{2}{c}{MGP} &  \multicolumn{2}{c}{CUSP} &  \multicolumn{2}{c}{SIS} &  \multicolumn{2}{c}{SIS$_{mc}$} \\
			& & Q$_{0.5}$ & IQR & Q$_{0.5}$ & IQR & Q$_{0\cdot 5}$ & IQR & Q$_{0.5}$ & IQR \\
			LPML  & ~~(16,4,0.6) & ~-27.74 & 0.43 & ~-27.75 & 0.43 & ~-27.71 & 0.40 & ~-27.73 & 0.40\\
			& ~~(32,8,0.4)& ~-56.25 & 0.69 & ~-56.35 & 0.72 & ~-56.16 & 0.68 & ~-56.12 & 0.65\\ 
			& ~(64,12,0.3) & -109.72 & 0.61 & -110.54 & 0.88 & -109.27 & 0.46 & -109.16 & 0.57 \\
			& (128,16,0.2) & -209.60 & 0.48 & -213.50 & 1.21 & -208.11 & 0.42 & -208.03 & 0.47 \\  	
			Cov. MSE & ~~(16,4,0.6)& ~~0.31 & 0.11 & ~~0.30 & 0.14 & ~~0.28 & 0.14 & ~~0.27 & 0.16\\
			& ~~(32,8,0.4) & ~~0.70 & 0.25 & ~~0.71 & 0.18 & ~~0.75 & 0.22 & ~~0.79 & 0.78\\ 
			& ~(64,12,0.3) & ~~1.03 & 0.43 & ~~0.91 & 0.29 & ~~1.51 & 0.59 & ~~1.16 & 0.84 \\
			& (128,16,0.2) & ~~0.93 & 0.49 & ~~0.90 & 0.33 & ~~1.49 & 1.21 & ~~1.28 & 1.81 \\ 
			$E(H_a\mid y)$ & ~~(16,4,0.6)& ~~8.60 & 0.64 & ~~3.96 & 0.80 & ~~4.00 & 0.00 & ~~4.00 & 0.00\\
			& ~~(32,8,0.4) & ~10.71 & 1.24 & ~~5.75 & 1.00 & ~~7.00 & 1.00 & ~~8.00 & 0.00\\ 
			& ~(64,12,0.3) & ~13.93 & 1.37 & ~~8.00 & 0.92 & ~12.00 & 0.00 & ~12.00 & 0.00 \\
			& (128,16,0.2) & ~16.56 & 0.88 & ~~9.00 & 0.00 & ~16.00 & 1.00 & ~16.00 & 1.00 \\ 
			MCE & ~~(16,4,0.6) & ~~0.94 & 0.13 & ~~0.64 & 0.19 & ~~0.26 & 0.08 & ~~0.23 & 0.13  \\
			& ~~(32,8,0.4) & ~~0.67 & 0.10 & ~~0.49 & 0.09 & ~~0.20 & 0.08 & ~~0.20 & 0.10\\ 
			& ~(64,12,0.3) & ~~0.58 & 0.05 & ~~0.47 & 0.04 & ~~0.21 & 0.06 & ~~0.21 & 0.08 \\
			& (128,16,0.2) &~~0.49 & 0.02 & ~~0.43 & 0.02 & ~~0.18 & 0.11 & ~~0.17 & 0.11\\  
	\end{tabular}}
	\label{tab:ScenarioD}       
	{\footnotesize
		LPML, logarithm of the pseudo-marginal likelihood; Cov. MSE, covariance mean squared error; MCE, mean classification error;
		CUSP, cumulative increasing shrinkage process; MGP, multiplicative gamma process; SIS, structured increasing shrinkage process;  SIS$_{mc}$, structured increasing shrinkage process with meta covariates;  Q$_{0. 5}$, median; IQR, interquartile range.
	}
\end{table}

\begin{figure}[h]
	\centering
	\subfigure{\includegraphics[width=.42\textwidth]{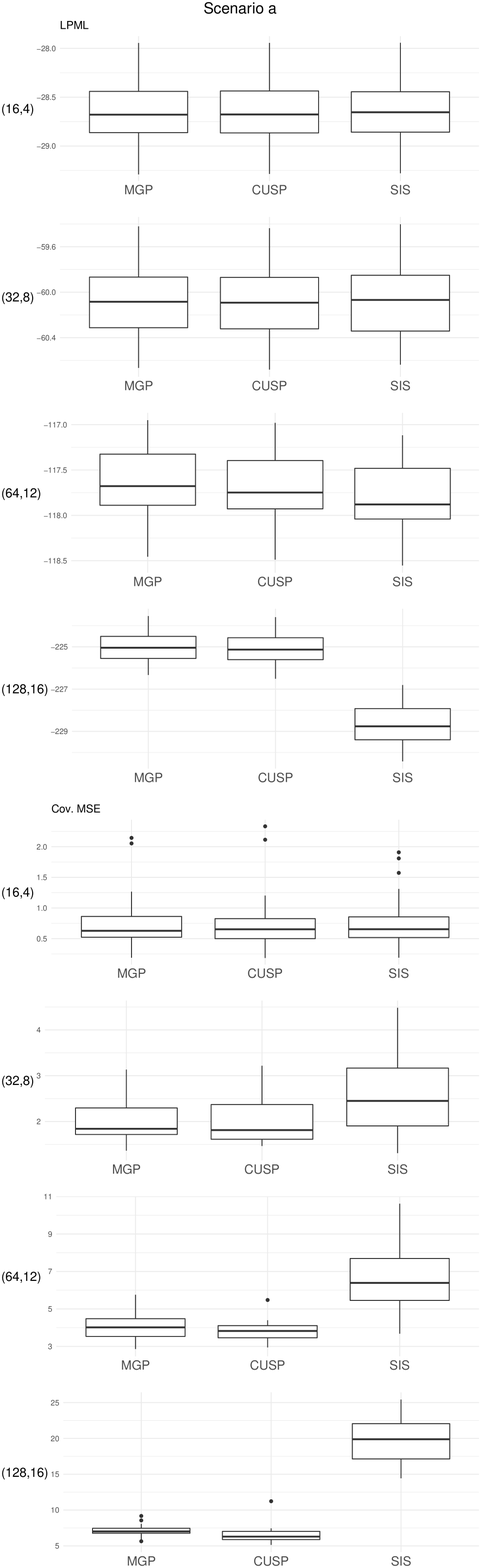}} 
	\subfigure{\includegraphics[width=.42\textwidth]{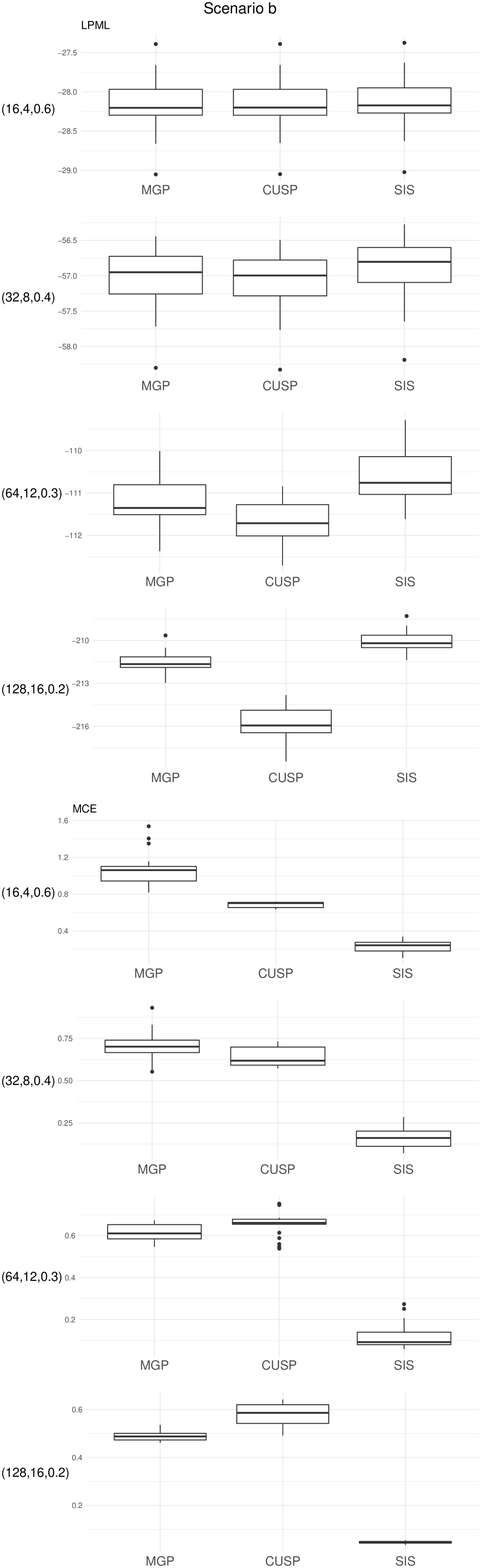}}
	\caption{Boxplots of the logarithm of the pseudo-marginal likelihood for all combinations $(p,k)$ in scenario a (top left panel) and scenario b (top right panel), of the covariance mean square error in scenario a (bottom left panel), and of the 
		mean classification error in scenario b (bottom right panel). 
		LPML, logarithm of the pseudo-marginal likelihood; Cov. MSE, covariance mean squared error; MCE, mean classification error; CUSP, cumulative increasing shrinkage process; MGP, multiplicative gamma process; SIS, structured increasing shrinkage process.}
	\label{fig:sm-1}
\end{figure}

\begin{figure}[h]
	\centering
	\subfigure{\includegraphics[width=.42\textwidth]{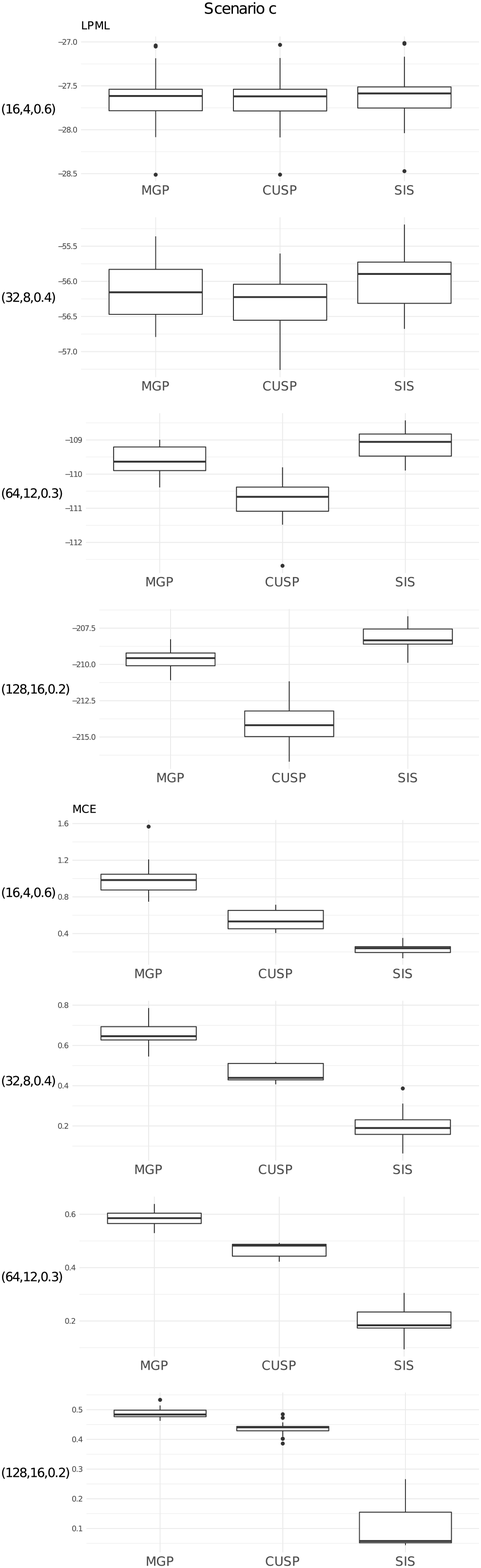}} 
	\subfigure{\includegraphics[width=.42\textwidth]{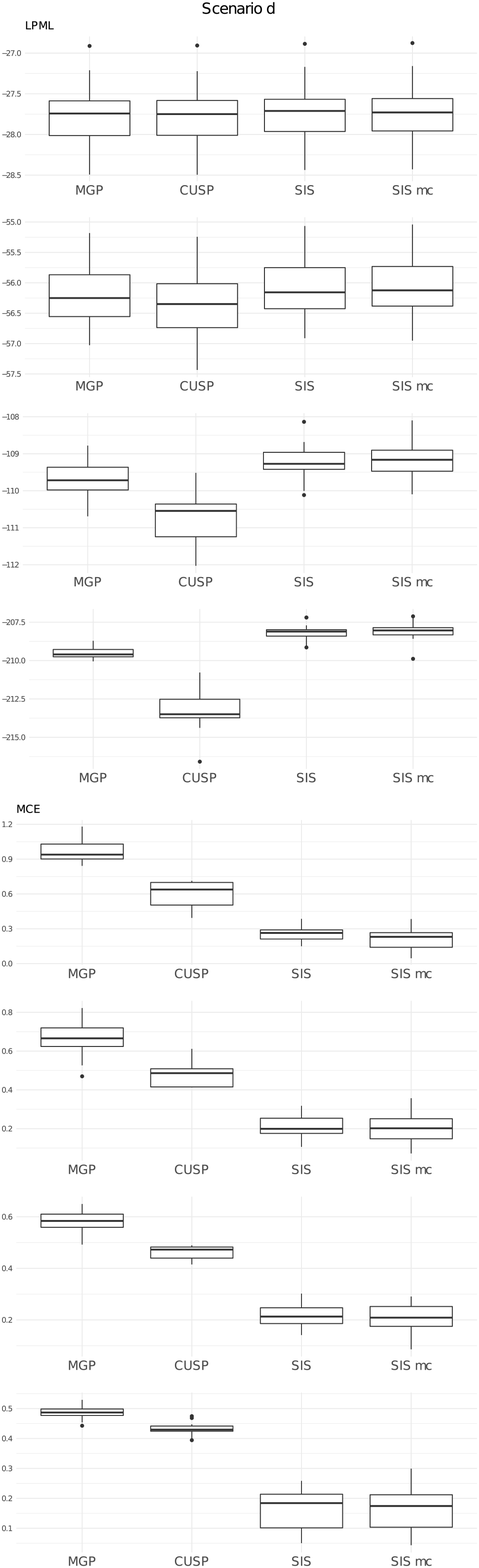}}
	\caption{Boxplots of the logarithm of the pseudo-marginal likelihood and of the mean classification error of each model for all combinations of $(p,k,s)$ in Scenario c (left panel) and Scenario d (right panel). LPML, logarithm of the pseudo-marginal likelihood; MCE, mean classification error; CUSP, cumulative shrinkage process; MGP, multiplicative gamma process; SIS, structured increasing shrinkage process; SIS mc, structured increasing shrinkage process with meta covariates.}
	\label{fig:sm-2}
\end{figure}

\clearpage

\subsection{Simulation study of sensitivity to hyperparameters and truncation level}
We conduct further simulation experiments to assess the impact of some hyperparameters on relevant prior and posterior summaries.
Figure \ref{fig:sm-priorprop} displays the prior distribution, obtained simulating 10{,}000 samples from the prior, of the proportion of variance explained by the structured increasing shrinkage factor model for varying $\alpha$,  $\{E(\sigma^{-2}),\text{var}(\sigma^{-2})\}$, and  $\{E(\vartheta_h^{-1}),\text{var}(\vartheta_h^{-1})\}$.
The hyperparameter $\alpha$, representing the expected number of factors, positively affects the proportion of explained variance. The influence of the hyperparameters regulating the distribution of $\vartheta_h$ is even clearer, with concentrated prior on a large value of $E(\vartheta^{-1})$,  inducing a smaller proportion of variance explained by the factor model. The role of $\{E(\sigma^{-2}),\text{var}(\sigma^{-2})\}$ is less clear, but suggests that sufficiently large mean and variance can guarantee higher flexibility.

\begin{figure}[h]
	\centering
	\includegraphics[width=.85\textwidth]{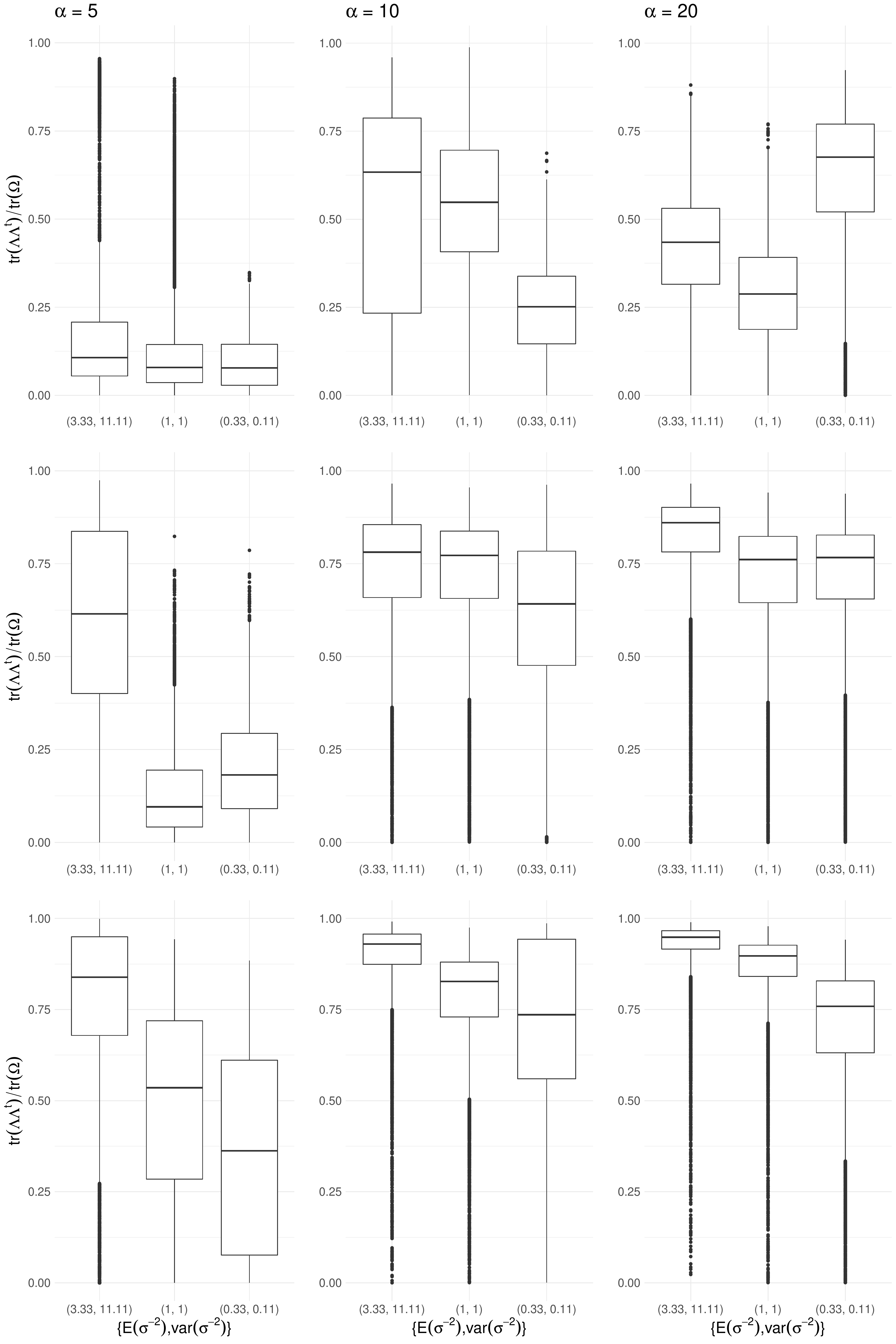}
	\caption{Boxplots of the prior proportion of variance explained by the factor model $\text{tr}(\Lambda \Lambda^\T)/\text{tr}(\Omega)$. The quantity is obtained simulating 10{,}000 samples from the prior distribution with varying values of the parameters. The horizontal axis characterize the effect of  $\{E(\sigma^{-2}),\text{var}(\sigma^{-2})\}$; differences for $\alpha \in (5,10,20)$ are reported in each column; differences for $\{E(\vartheta^{-1}), \text{var}(\vartheta^{-1})\} \in \{(2,2),(1, 0.5),(0.5, 0.125)\}$ are reported in each  row.}
	\label{fig:sm-priorprop}
\end{figure}

The latter comment is confirmed by the study of the impact of $\alpha$ and $\{E(\sigma^{-2}),\text{var}(\sigma^{-2})\}$ on the posterior bound of the truncation error and on the posterior distribution of the proportion of variance explained by the factor model. Specifically, we generate synthetic data sets with $n=100$ observations with dimension $p=50$ from the Gaussian linear factor model $y_i \sim N_p(0, \Lambda_0 \Lambda_0^T + I_p) $, with $\Lambda_0$ a sparse $p\times k$ matrix with $k=50$. We randomly set two thirds of the elements of $\Lambda_0$ equal to zero, drawing the non zero elements from a Gaussian distribution with mean zero and variances $\theta_{h}$ sampled from an inverse gamma distribution $\theta_h^{-1}\sim \text{Ga}(2,2)$.  We keep the number of active factors $H$ fixed at $50$, and set $a_\theta=b_\theta=2$ and $c_p=2e \log(p)/p$. 
We run the Gibbs algorithm for the structured increasing shrinkage model for 15000 iterations, discarding the first 5000 iterations. Then, we thin the Markov chain, saving every 5-th sample.

In Figure \ref{fig:sm-proportion} the sampled posterior distribution of the proportion of variance explained by the factor model $\text{tr}(\Lambda \Lambda^\T)/\text{tr}(\Omega)$ is reported for varying $\alpha$ and $\{E(\sigma^{-2}), \text{var}(\sigma^{-2})\}$. The same proportion computed on the matrices generating the data is $\text{tr}(\Lambda_0 \Lambda_0^\T)/\text{tr}(\Omega_0)=0.966$, with $\Omega_0 = \Lambda_0 \Lambda_0^\T+ I_{50}$. A sufficiently concentrated prior on a large value of $E(\sigma^{-2})$ seems more suitable to model such data, even if we have incorrect expectations on the number of factors, i.e. $\alpha$ set small.

\begin{figure}[h]
	\centering
	\includegraphics[width=.95\textwidth]{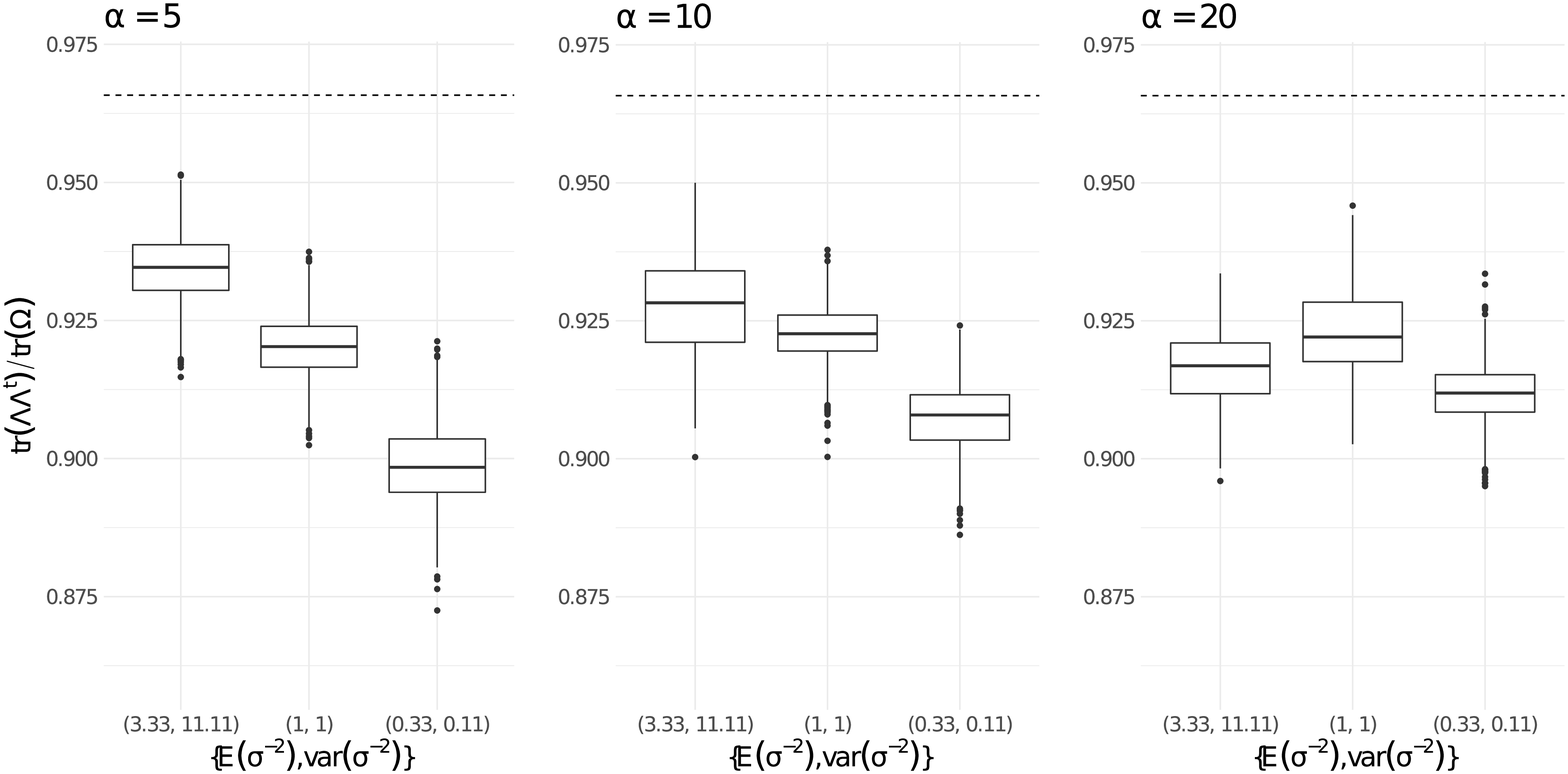}
	\caption{Boxplots representing the simulated posterior distribution of the proportion of variance explained by the factor model $\text{tr}(\Lambda \Lambda^\T)/\text{tr}(\Omega)$ for varying $\alpha$ and $\{E(\sigma^{-2}),\text{var}(\sigma^{-2})\}$. The dashed lines represent the proportion computed on the true value of $\Lambda$ and $\Omega$.}
	\label{fig:sm-proportion}
\end{figure}

Figure \ref{fig:sm-trunc-err} displays the Monte Carlo approximation of the posterior probability of truncation error $\text{pr}\{\text{tr}(\Omega_H)/\text{tr}(\Omega)<T\}$ for different values of $H$ and $T$ and varying $\alpha$ and $\{E(\sigma^{-2}), \text{var}(\sigma^{-2})\}$. If $\Lambda_0$ is sparse, a small value of $\alpha$ induces good approximations even with $H$ smaller than the true number of factors. The inferred sparsity pattern in $\Lambda$ is robust to the prior distribution for 
$\sigma^{-2}$.

\clearpage
\begin{figure}[h]
	\centering
	\includegraphics[width=.95\textwidth]{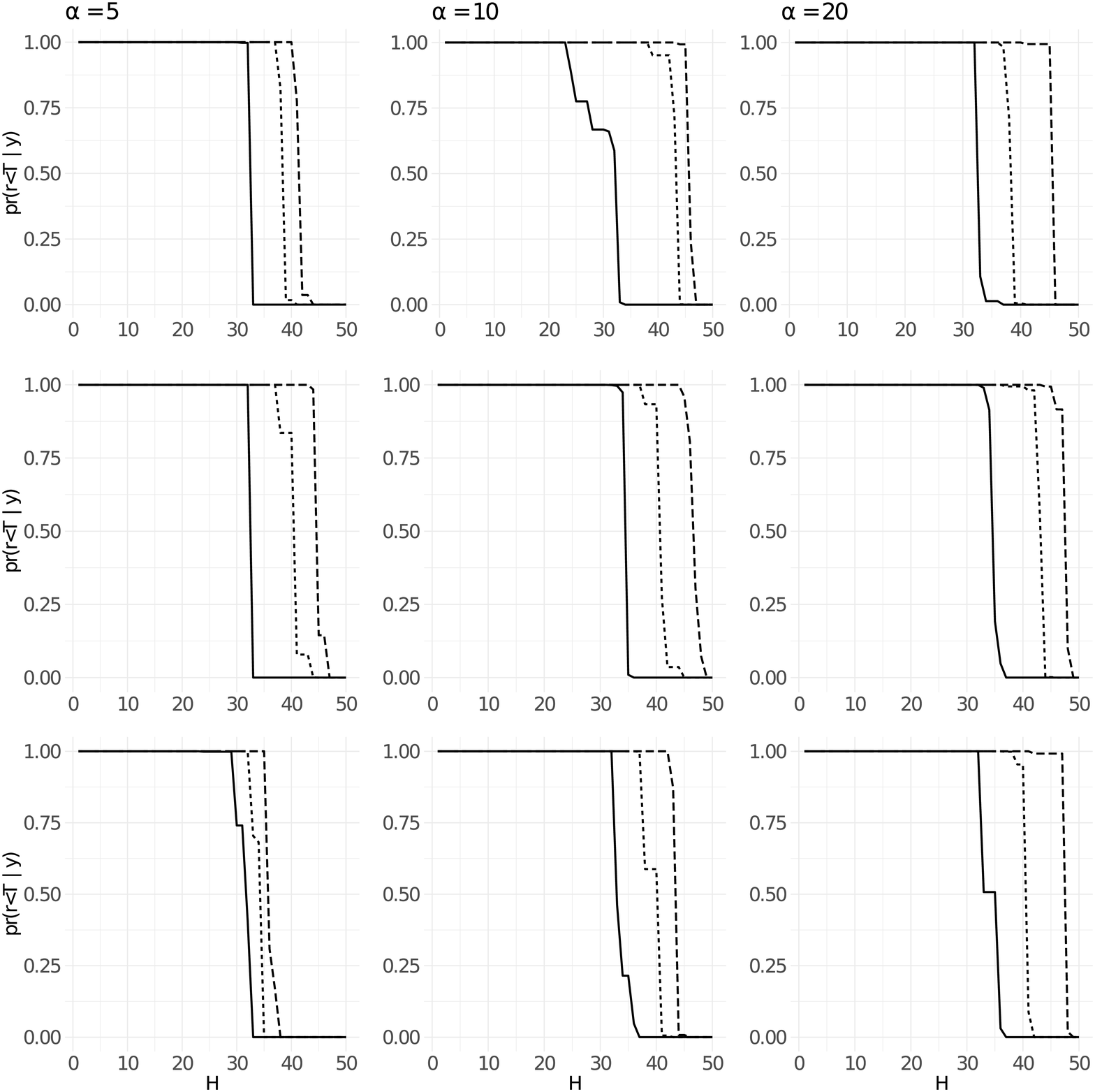}
	\caption{Monte Carlo approximation of  the posterior probability of truncation error $\text{pr}(r<T \mid y)$, with $r=\text{tr}(\Omega_H)/\text{tr}(\Omega)$, at varying of $H$. The quantity is computed for $T$ equal to $0.75$ (---), $0.9$ (- - -), and $0.95$ (-- -- --) and varying $\alpha \in (5,10,20)$ over the columns and $\{E(\sigma^{-2}), \text{var}(\sigma^{-2})\} \in \{(3.33, 11.11),(1, 1),(0.33, 0.11)\}$ over the rows of the figure.}
	\label{fig:sm-trunc-err}
\end{figure}

\section{Finnish bird co-occurrence application}

\subsection{Gibbs algorithms of probit structured increasing shrinkage model}

In case of probit data (see Section 5 of the main paper) and  the structured increasing shrinkage process, we can rewrite the latent model for $z_{ij}$ as 
\begin{align*}
z_{ij}&= w_{i}^T \mu_{j} + \epsilon_{ij},\\
\epsilon_{ij}&= \sum_{h=1}^\infty \sqrt{\rho_{h}} \sqrt{\phi_{jh}} \,  \lambda_{jh}^* \eta_{ih} + \varepsilon_{ij}, \qquad \lambda_{jh}^* \sim N(0, \vartheta_h), \qquad  \varepsilon_{ij}\sim N(0, 1),
\end{align*}
where $\lambda_{jh}^*$ is an absolutely continuous random variable.
Let the notation $(x\mid-)$ denote the full conditional distribution of $x$ conditionally on everything else.
Given $H$ the number of factors of the truncated model, the sampler cycles through the following steps.
\Step{1}{
	Update $\mu_j$, for every $j=1, \ldots,p,$ by sampling from the independent full conditional posterior distributions 
	\begin{equation*}
	(\mu_{j} \mid -) \sim N_c \big[ 
	(\sigma_\mu^{-2} I_c + w^\T w)^{-1} \{w^\T (z^{(j)}-\eta\lambda_{j} )+ b x_j\},(\sigma_\mu^{-2} I_c + w^\T w)^{-1}\big],
	\end{equation*}
	where $z^{(j)}=(z_{1j}, \ldots, z_{nj})^\T$ and $\eta=(\eta_{1}, \ldots, \eta_{n})^\T$.
}
\Step{2}{
	Update $b_{l}$ ($l=1,\ldots,c$) sampling from conditionally independent posteriors
	\begin{equation*}
	(b_{l}\mid-) \sim N_{q}\big\{
	(\sigma_b^{-2} I_q + \sigma_\mu^{-2} x^\T x)^{-1} \sigma_\mu^{-2}(x^\T \mu^{(l)}),\,(\sigma_b^{-2} I_q + \sigma_\mu^{-2} x^\T x)^{-1}\big\},
	\end{equation*}
	where $\mu^{(l)}=(\mu_{1l}, \ldots, \mu_{pl})^\T$
}
\Step{3}{
	Update the elements $z_{ij}$ ($i=1,\ldots,n$; $j=1\ldots,p$) sampling independently from the truncated normal
	\begin{equation*}
	(z_{ij} \mid -) \sim TN( \lambda_{j}^\T \eta_{i} + w_{i}^T \mu_{j},  1, l_{ij}, u_{ij}), 
	\end{equation*}
	where the lower bound $l_{ij}$ is equal to $0$ if $y_{ij}=1$ and $-\infty$ otherwise. The upper bound $u_{ij}= 0$ if $y_{ij}=0$ and $\infty$ otherwise. Then, set $\epsilon= z - w \mu$.
}
\Step{4}{
	Update, for $i=1, \ldots, n$, the factor $\eta_i$ according to the posterior full conditional 
	\begin{equation*}
	(\eta_{i}\mid-) \sim N_{H}\big\{(I_H+\Lambda_H^\T \Lambda_H)^{-1} \Lambda_H^\T \epsilon_{i},\, (I_H+\Lambda_H^\T \Lambda_H)^{-1}\big\}.
	\end{equation*} 
}
\Step{5}{
	Update $\beta_h$ ($h=1,\ldots,H$) exploiting  the P\'{o}lya-Gamma data-augmentation strategy \citep{polson2013} and the decompostition $\phi_{jh} =\phi_{jh}^{(L)} \phi_{jh}^{(C)}$, with $\phi_{jh}^{(L)} \phi_{jh}^{(C)}$ independent a priori and distributed as Ber$\{\text{logit}^{-1}(x_j^\T \beta_h)\}$ and Ber$(c_p)$, respectively.
}
\Substep{5}{1}{
	Update $\phi_{jh}^{(L)}$, for $j=1,\ldots,p$ and $h=1,\ldots,H$, setting $\phi_{jh}^{(L)}=1$ if $\phi_{jh}=1$ and sampling from the full conditional distribution
	\begin{equation*}
	\text{pr}(\phi_{jh}^{(L)}=l) \propto 
	\begin{cases}
	1-\text{logit}^{-1}(x_j^\T \beta_h)\qquad \qquad \qquad \quad \, \text{for} \;
	l=0, \\
	\text{logit}^{-1}(x_j^\T \beta_h)(1-c_p)  \qquad \,   \text{for} \;l=1,\\
	\end{cases}
	\end{equation*}		
	if $\phi_{jh}=0$.
}
\Substep{5}{2}{
	Let $f(y)\propto \sum_{n=0}^{\infty} (-1)^n A_n (2\pi y^3)^{-0.5}\exp\{-(2n+b)^2(8y)^{-1}-0.5c^2y\}$ indicate the probability density function of a P\'{o}lya-Gamma distributed random variable $y \sim \text{PG}(b,c)$.
	For each $h=1, \ldots, H$, generate $p$ independent random variables $d_{j(h)}$ sampling from the full conditional distribution $(d_{j(h)}\mid -) \sim \text{PG}(1, x_{j}^\T \beta_h)$.
	Let $D_{(h)}$ denote the $p\times p$ diagonal matrix with entries $d_{j(h)}$ ($j=1,\ldots,p$).
}\Substep{5}{3}{
	Define the $q \times q$ diagonal matrix $B=\sigma^2_\beta I_q$.
	For each $h=1, \ldots, H$, update $\beta_{h}$ sampling from
	\begin{equation*}
	(\beta_{h} \mid -) \sim N_{q}\{(x^\T D_{(h)} x + B^{-1} )^{-1} (x^\T \kappa_h), \, (x^\T D_{(h)} x + B^{-1} )^{-1}\},
	\end{equation*}
	where $\kappa_{h}$ is the $p$-dimensional vector with the $j$-th entry equal to $\phi_{jh}^{(L)}-0.5$.
}
\Step{6}{
	Update the elements $\lambda_{jh}^*$ by sampling from the independent full conditional posterior distributions of the rows vector $\lambda_{j}^*=(\lambda_{j1}^*, \ldots, \lambda_{jH}^*)$, for $j=1,\ldots,p$,
	\begin{equation*}
	(\lambda_{j}^*\mid-) \sim N_{H}\big\{(D^{-1}+\eta_{(j)}^{\T}\eta_{(j)})^{-1} \eta_{(j)}^{\T} \epsilon^{(j)}, \, (D^{-1}+\eta_{(j)}^{\T}\eta_{(j)})^{-1}\big\},
	\end{equation*} 
	where $\eta_{(j)}$ is the $n \times H$ matrix such that the generic element is $\eta_{(j) ih}=\eta_{ih} \sqrt{\rho_h} \sqrt{\phi_{jh}}$, $D^{-1}=\text{diag}(\vartheta_{1}^{-1},\ldots,\vartheta_{H}^{-1})$ and $\epsilon^{(j)}=(\epsilon_{1j}, \ldots, \epsilon_{nj})^\T$. Set $\lambda_{jh}= \lambda_{jh}^* \sqrt{\rho_h}\sqrt{\phi_{jh}}$.
}
\Step{7}{
	Update the column scales $\gamma_h$ (for $h=1, \ldots, H$), following the substeps below and setting $\gamma_h = \vartheta_h \rho_h$.
	Consistently with \citet{Legramanti2020}, define the independent indicators $u_h$ ($h=1,\ldots,p$) with prior $\text{pr}(u_h=l)=w_l$.
}
\Substep{7}{1}{
	Update the augmented data $u_h$ by sequentially sampling from the full conditional distribution
	\begin{equation}
	\label{eq:z-fc}
	\text{pr}(u_h=l) \propto 
	\begin{cases}
	w_l \,\prod_{i=1}^n \prod_{j=1}^p N(\epsilon_{ij}; \mu_{ijh}^{(0)} ,\sigma_j^2)
	\qquad \qquad \text{for} \quad l=1,\ldots,h \\
	w_l \,\prod_{i=1}^n \prod_{j=1}^p N(\epsilon_{ij}; \mu_{ijh}^{(1)} ,\sigma_j^2) \qquad \qquad \text{for} \quad l=h+1,\ldots,H. \\
	\end{cases}
	\end{equation}
	The mean values $\mu_{ijh}^{(0)}$ and $\mu_{ijh}^{(1)}$ are defined according to $\mu_{ijh}^{(u)}= \sum_{l \neq h}^{H} \sqrt{\rho_{l}}\,\sqrt{\phi_{jl}}\lambda_{jl}^* \eta_{il} + \sqrt{u}\,\sqrt{\phi_{jh}}\lambda_{jh}^* \eta_{ih}$.
	Set $\rho_h=1$ if $u_h >h$, else $\rho_h=0$.
}
\Substep{7}{2}{ 
	For $h=1,\ldots,H$, update $\vartheta_h^{-1}$ sampling from
	$ \text{Ga}(a_\theta+0.5p, b_\theta+0.5\sum_{j=1}^{p} \lambda_{jh}^{*\,2})$.
}
\Substep{7}{3}{
	For $l=1,\ldots,H-1$, sample $v_l$ from 
	\begin{equation*}
	(v_l\mid-) \sim \text{Be}\big\{1+ \sum_{h=1}^{H} \mathbbm{1}{(u_h=l)}, \alpha + \mathbbm{1}{(u_h>l)} \big\},
	\end{equation*}
	set $v_H = 1$ and update $w_l=v_l \prod_{m=1}^{l-1} (1-v_m)$, for $l=1,\ldots,H$.
}
\Step{8}{
		Update the local scales, independently for $j=1,\ldots,p$ and sequentially over $h=1,\ldots,H$, by sampling from the full conditional distributions
	\begin{equation*}
	\text{pr}(\phi_{jh}=u) \propto \begin{cases}
	\{1-\text{logit}^{-1}(x_{j}^\T \beta_h)\,c_p\}\, \prod_{i=1}^n N(\epsilon_{ij}; \mu_{ijh}^{(u)} ,1)
	\quad  \; \text{for} \; u=0\\
	\text{logit}^{-1}(x_{j}^\T \beta_h) \,c_p \prod_{i=1}^n N(\epsilon_{ij}; \mu_{ijh}^{(u)} ,1)  \qquad \qquad \text{for} \;  u=1.
	\end{cases} 
	\end{equation*}
	with $\mu_{ijh}^{(u)}=\sum_{l \neq h}^{H} \sqrt{\rho_{l}}\,\sqrt{\phi_{jl}}\lambda_{jl}^* \eta_{il} + \sqrt{\rho_{h}}\,\sqrt{u}\lambda_{jh}^* \eta_{ih}$.
}

The results reported in Section 5 are obtained running the algorithm for 40000 iterations discarding the first 20000 iterations. Then, we thin the Markov Chain, saving every $5$-th sample. We adapt the number of active factors at iteration $t$ with probability $p(t) = \exp(-1 -2.5\, 10^{(-4)} t)$ and, given the high value of $p$ considered, we choose the offset constant $c_p=2e \log(p)/p$ which belongs to $(0,1)$ for every $p \geq 15$.

\clearpage
\subsection{Gibbs chains mixing}

\begin{figure}[h]
	\centering
	\includegraphics[width=.72\textwidth]{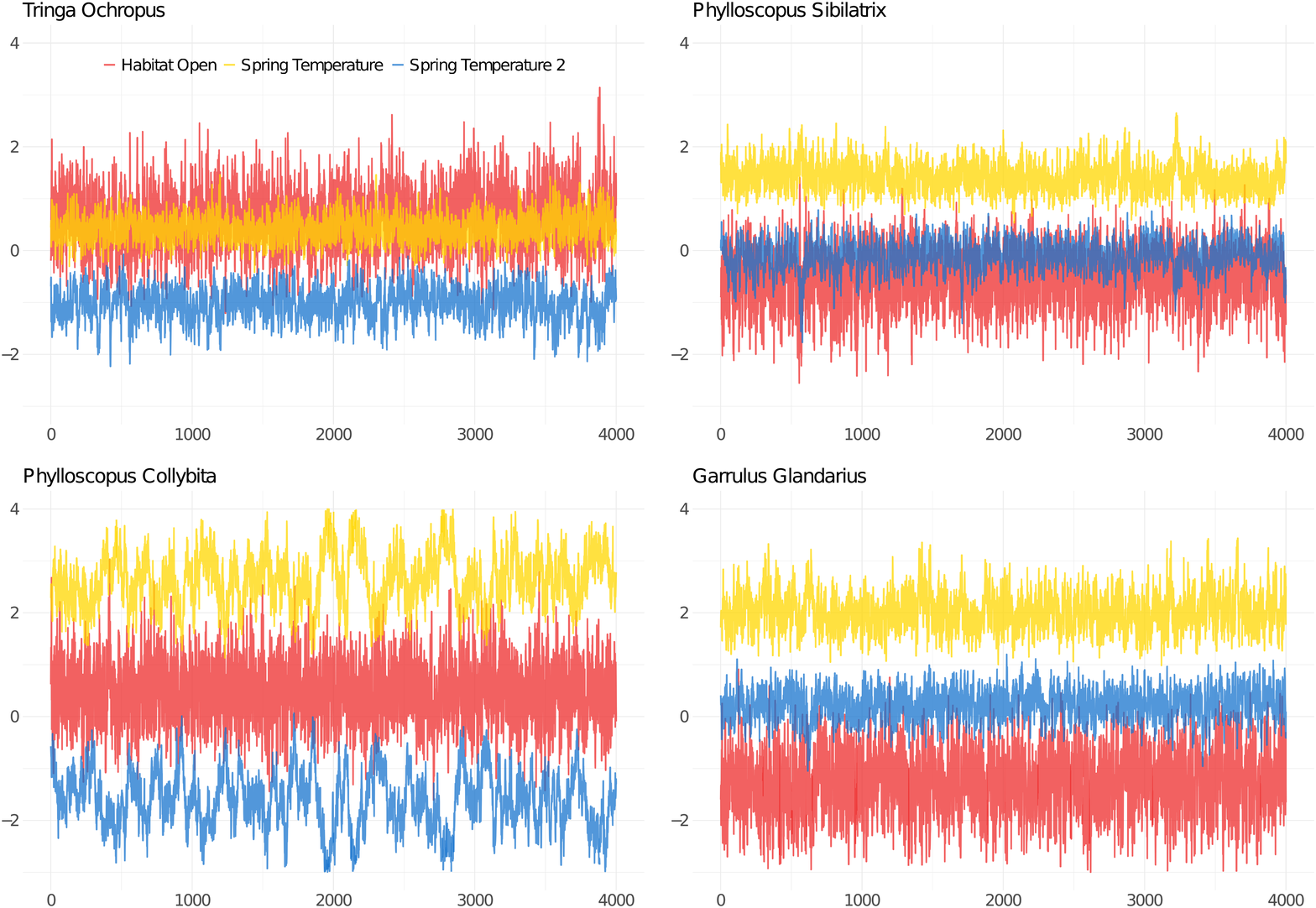}
	\caption{Chain plots of the marginal posterior samples of 12 mean coefficients of the matrix $\mu$ obtained by the Gibbs sampler, discarding the first 20000 iterations and saving every $5$-th sample.}
	\label{fig:sm-mixing1}
\end{figure}

\begin{figure}[h]
	\centering
	\includegraphics[width=.72\textwidth]{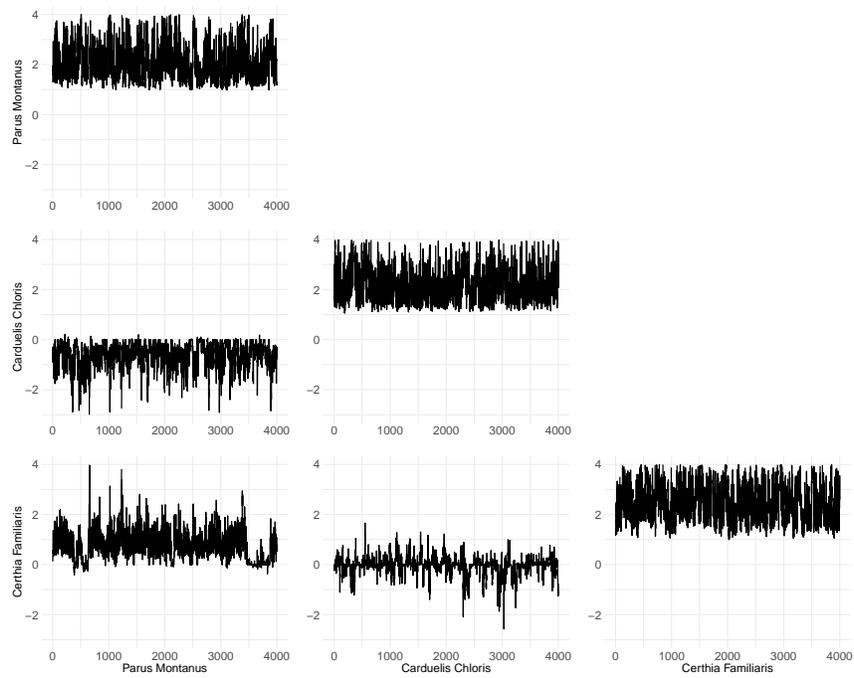}
	\caption{Chain plots of the marginal posterior samples of six elements of the covariance matrix obtained by the Gibbs sampler, discarding the first 20000 iterations and saving every $5$-th sample.}
	\label{fig:sm-mixing2}
\end{figure}

\clearpage

\subsection{Figures}
\begin{figure}[h]
	\centering
	\includegraphics[width=.85\textwidth]{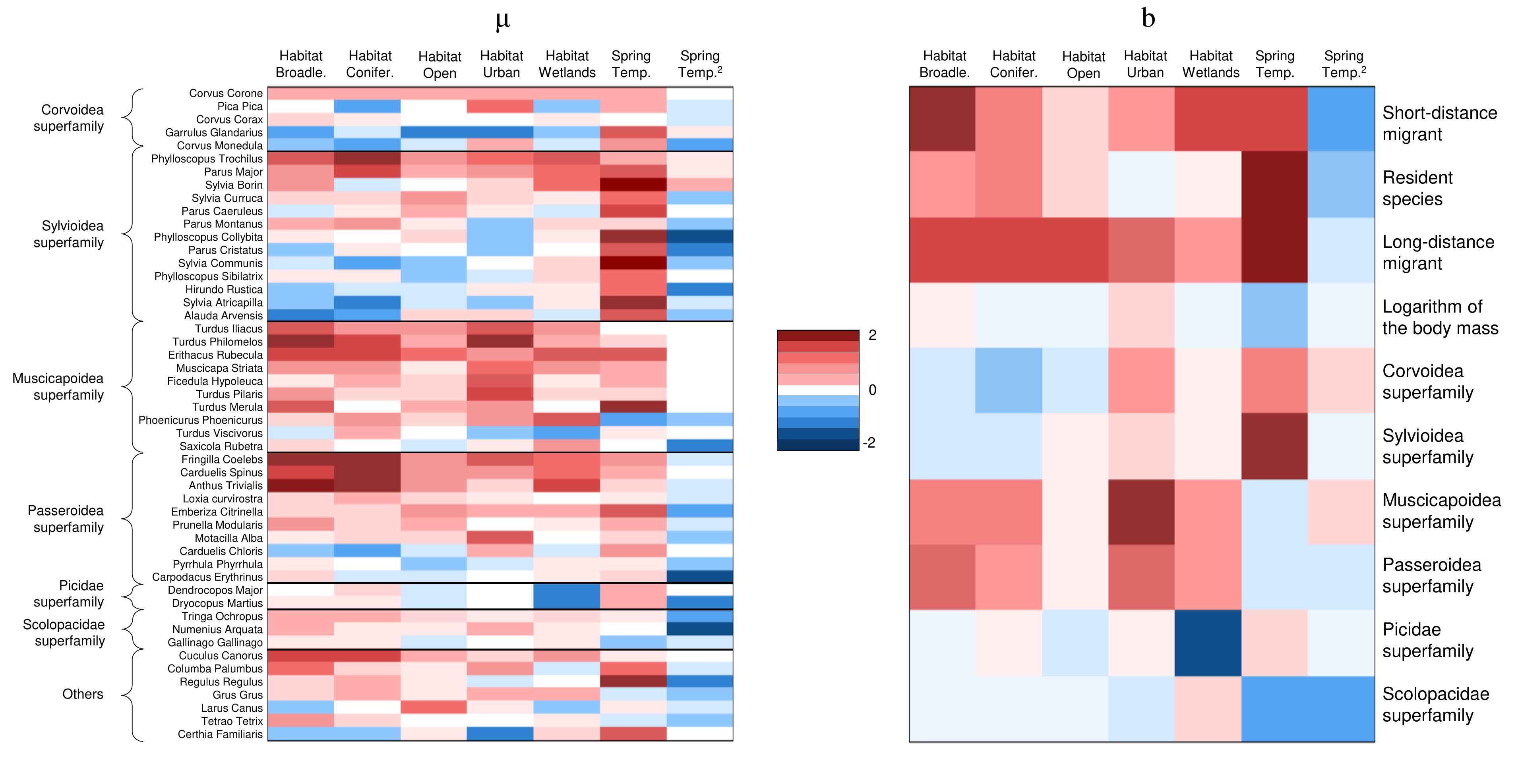}
	\caption{Posterior mean of $\mu$ and $b$ for the structured increasing shrinkage model; rows of left matrix refer to the 50 birds species, and rows of right matrix to the ten species traits. Broadle: broadleaved forests; Conifer: coniferous forests; Temp: temperature. }
	\label{fig:app-coeff}
\end{figure}

\begin{figure}[h]
	\centering
	\subfigure{\includegraphics[width=.28\textwidth]{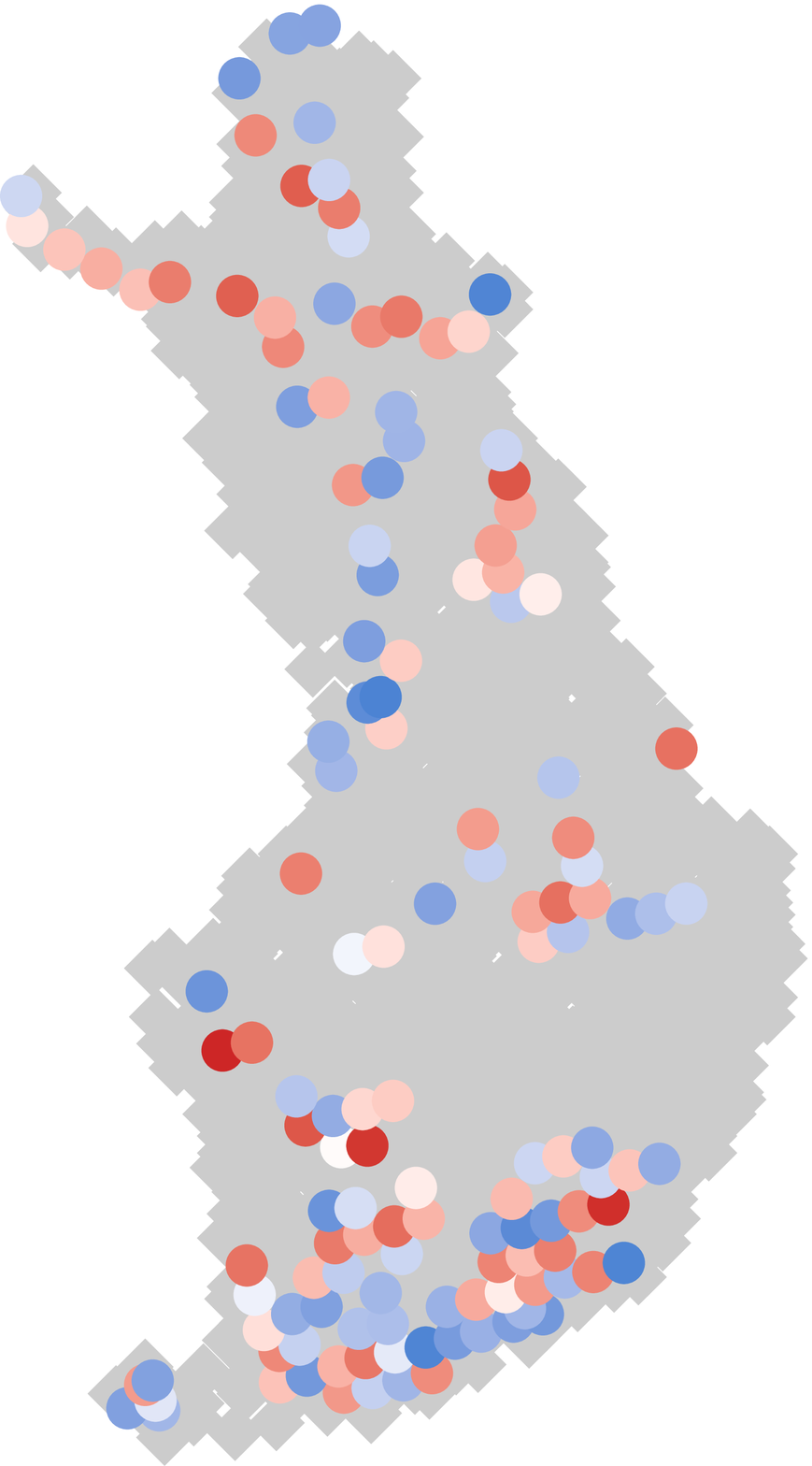}} 
	\hspace*{20pt}
	\subfigure{\includegraphics[width=.28\textwidth]{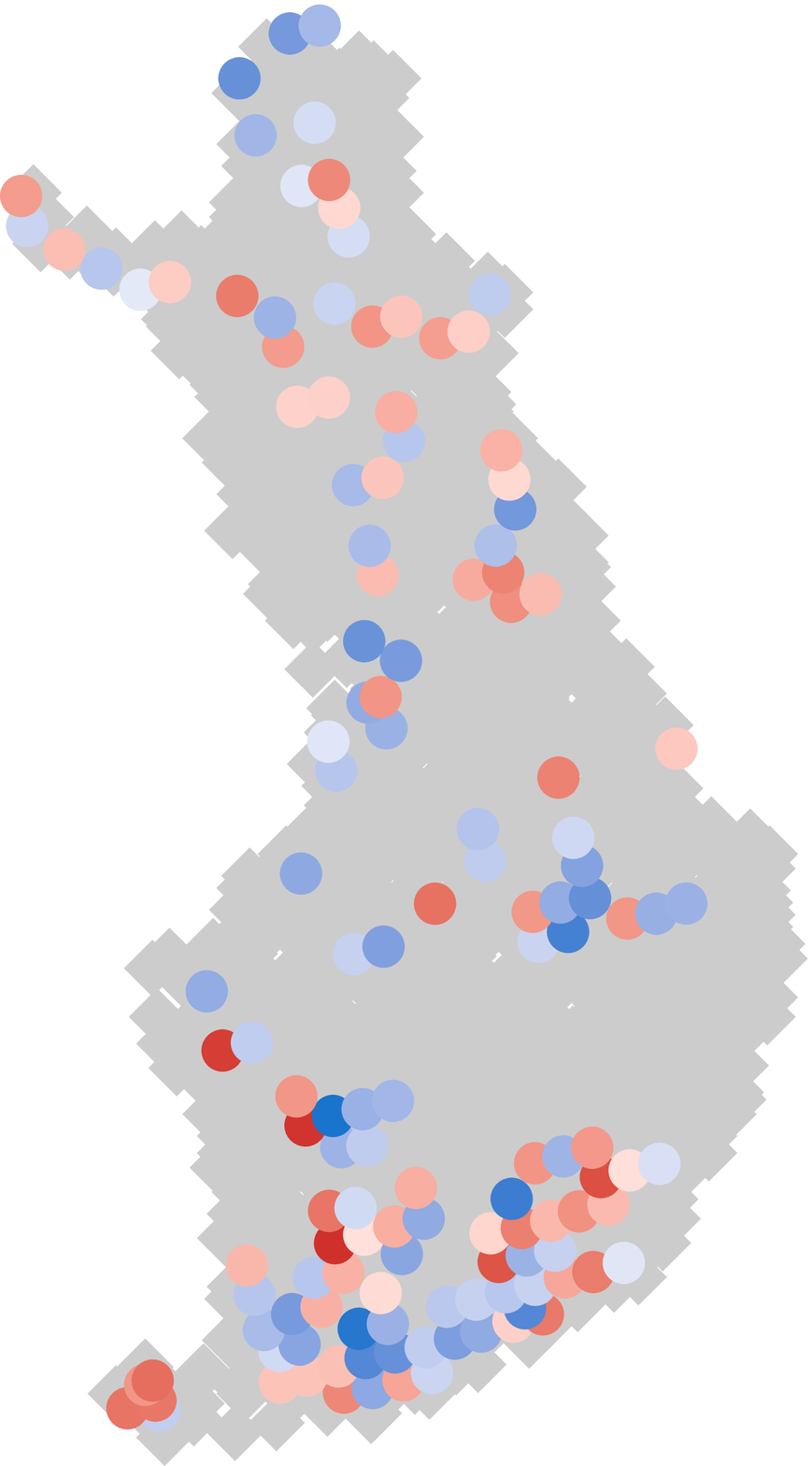}}
	\caption{Maps of the sampling units in Finland coloured accordingly to the values of the first and the third latent factors sampled at iteration $t^*$. Red and blue spots represent the environments with positive and negative values of the factors, respectively. }
	\label{fig:sm-app}
\end{figure}

\end{document}